\pgfplotsset{compat=1.18}
\newtheorem{definition}{Definition}
\newtheorem{proposition}{Proposition}
\newtheorem{lemma}[proposition]{Lemma}
\newtheorem{theorem}[proposition]{Theorem}
\def\squareforqed{\hbox{\rlap{$\sqcap$}$\sqcup$}}
\def\qed{\ifmmode\squareforqed\else{\unskip\nobreak\hfil
\penalty50\hskip1em\null\nobreak\hfil\squareforqed
\parfillskip=0pt\finalhyphendemerits=0\endgraf}\fi}
\def\endenv{\ifmmode\;\else{\unskip\nobreak\hfil
\penalty50\hskip1em\null\nobreak\hfil\;
\parfillskip=0pt\finalhyphendemerits=0\endgraf}\fi}
\newenvironment{proof}{\noindent \textbf{{Proof~} }}{\hfill $\blacksquare$}
\newcounter{remark}
\newcounter{example}
\mathchardef\ordinarycolon\mathcode`\:
\def\vcentcolon{\mathrel{\mathop\ordinarycolon}}
\newmdenv[skipabove=7pt,
skipbelow=7pt,
backgroundcolor=darkblue!15,
innerleftmargin=5pt,
innerrightmargin=5pt,
innertopmargin=5pt,
leftmargin=0cm,
rightmargin=0cm,
innerbottommargin=5pt,
linewidth=1pt]{tBox}
\newmdenv[skipabove=7pt,
skipbelow=7pt,
backgroundcolor=blue2!25,
innerleftmargin=5pt,
innerrightmargin=5pt,
innertopmargin=5pt,
leftmargin=0cm,
rightmargin=0cm,
innerbottommargin=5pt,
linewidth=1pt]{dBox}
\newmdenv[skipabove=7pt,
skipbelow=7pt,
backgroundcolor=darkkblue!15,
innerleftmargin=5pt,
innerrightmargin=5pt,
innertopmargin=5pt,
leftmargin=0cm,
rightmargin=0cm,
innerbottommargin=5pt,
linewidth=1pt]{sBox}
\definecolor{darkblue}{RGB}{0,76,156}
\definecolor{darkkblue}{RGB}{0,0,153}
\definecolor{blue2}{RGB}{102,178,255}
\definecolor{darkred}{RGB}{195,0,0}
\newcommand{\nc}{\newcommand}
\nc{\rnc}{\renewcommand}
\nc{\lbar}[1]{\overline{#1}}
\nc{\bra}[1]{\langle#1|}
\nc{\ket}[1]{|#1\rangle}
\nc{\ketbra}[2]{|#1\rangle\!\langle#2|}
\nc{\braket}[2]{\langle#1|#2\rangle}
\nc{\proj}[1]{| #1\rangle\!\langle #1 |}
\nc{\avg}[1]{\langle#1\rangle}
\nc{\rank}{\operatorname{Rank}}
\nc{\smfrac}[2]{\mbox{$\frac{#1}{#2}$}}
\nc{\tr}{\operatorname{Tr}}
\nc{\ox}{\otimes}
\nc{\dg}{\dagger}
\nc{\dn}{\downarrow}
\nc{\cA}{{\cal A}}
\nc{\cB}{{\cal B}}
\nc{\cC}{{\cal C}}
\nc{\cD}{{\cal D}}
\nc{\cE}{{\cal E}}
\nc{\cF}{{\cal F}}
\nc{\cG}{{\cal G}}
\nc{\cH}{{\cal H}}
\nc{\cI}{{\cal I}}
\nc{\cJ}{{\cal J}}
\nc{\cK}{{\cal K}}
\nc{\cL}{{\cal L}}
\nc{\cM}{{\cal M}}
\nc{\cN}{{\cal N}}
\nc{\cO}{{\cal O}}
\nc{\cP}{{\cal P}}
\nc{\cQ}{{\cal Q}}
\nc{\cR}{{\cal R}}
\nc{\cS}{{\cal S}}
\nc{\cT}{{\cal T}}
\nc{\cU}{{\cal U}}
\nc{\cV}{{\cal V}}
\nc{\cX}{{\cal X}}
\nc{\cY}{{\cal Y}}
\nc{\cZ}{{\cal Z}}
\nc{\cW}{{\cal W}}
\nc{\csupp}{{\operatorname{csupp}}}
\nc{\qsupp}{{\operatorname{qsupp}}}
\nc{\var}{{\operatorname{var}}}
\nc{\rar}{\rightarrow}
\nc{\lrar}{\longrightarrow}
\nc{\polylog}{{\operatorname{polylog}}}
\nc{\wt}{{\operatorname{wt}}}
\nc{\av}[1]{{\left\langle {#1} \right\rangle}}
\nc{\supp}{{\operatorname{supp}}}
\nc{\argmin}{{\operatorname{argmin}}}
\def\x{\xi}
\def\o{\omega}
\nc{\RR}{{{\mathbb R}}}
\nc{\CC}{{{\mathbb C}}}
\nc{\FF}{{{\mathbb F}}}
\nc{\NN}{{{\mathbb N}}}
\nc{\ZZ}{{{\mathbb Z}}}
\nc{\PP}{{{\mathbb P}}}
\nc{\QQ}{{{\mathbb Q}}}
\nc{\UU}{{{\mathbb U}}}
\nc{\EE}{{{\mathbb E}}}
\nc{\id}{{\operatorname{id}}}
\nc{\CHSH}{{\operatorname{CHSH}}}
\nc{\be}{\begin{equation}}
\nc{\ee}{{\end{equation}}}
\nc{\bea}{\begin{eqnarray}}
\nc{\eea}{\end{eqnarray}}
\nc{\rU}{\mbox{U}}
\nc{\ob}[1]{#1}
\nc{\SEP}{{\text{\rm SEP}}}
\nc{\NS}{{\text{\rm NS}}}
\nc{\LOCC}{{\text{\rm LOCC}}}
\nc{\PPT}{{\text{\rm PPT}}}
\nc{\EXT}{{\text{\rm EXT}}}
\nc{\Sym}{{\operatorname{Sym}}}
\nc{\ERLO}{{E_{\text{r,LO}}}}
\nc{\ERLOCC}{{E_{\text{r,LOCC}}}}
\nc{\ERPPT}{{E_{\text{r,PPT}}}}
\nc{\ERLOCCinfty}{{E^{\infty}_{\text{r,LOCC}}}}
\nc{\Aram}{{\operatorname{\sf A}}}
\newcommand{\Choi}{Choi-Jamio\l{}kowski }
\newcommand{\eps}{\varepsilon}
\def\grd@save@target#1{%
  \def\grd@target{#1}}
\def\grd@save@start#1{%
  \def\grd@start{#1}}
\tikzset{
  grid with coordinates/.style={
    to path={%
      \pgfextra{%
        \edef\grd@@target{(\tikztotarget)}%
        \tikz@scan@one@point\grd@save@target\grd@@target\relax
        \edef\grd@@start{(\tikztostart)}%
        \tikz@scan@one@point\grd@save@start\grd@@start\relax
        \draw[minor help lines,magenta] (\tikztostart) grid (\tikztotarget);
        \draw[major help lines] (\tikztostart) grid (\tikztotarget);
        \grd@start
        \pgfmathsetmacro{\grd@xa}{\the\pgf@x/1cm}
        \pgfmathsetmacro{\grd@ya}{\the\pgf@y/1cm}
        \grd@target
        \pgfmathsetmacro{\grd@xb}{\the\pgf@x/1cm}
        \pgfmathsetmacro{\grd@yb}{\the\pgf@y/1cm}
        \pgfmathsetmacro{\grd@xc}{\grd@xa + \pgfkeysvalueof{/tikz/grid with coordinates/major step}}
        \pgfmathsetmacro{\grd@yc}{\grd@ya + \pgfkeysvalueof{/tikz/grid with coordinates/major step}}
        \foreach \x in {\grd@xa,\grd@xc,...,\grd@xb}
        \node[anchor=north] at (\x,\grd@ya) {\pgfmathprintnumber{\x}};
        \foreach \y in {\grd@ya,\grd@yc,...,\grd@yb}
        \node[anchor=east] at (\grd@xa,\y) {\pgfmathprintnumber{\y}};
      }
    }
  },
  minor help lines/.style={
    help lines,
    step=\pgfkeysvalueof{/tikz/grid with coordinates/minor step}
  },
  major help lines/.style={
    help lines,
    line width=\pgfkeysvalueof{/tikz/grid with coordinates/major line width},
    step=\pgfkeysvalueof{/tikz/grid with coordinates/major step}
  },
  grid with coordinates/.cd,
  minor step/.initial=.2,
  major step/.initial=1,
  major line width/.initial=2pt,
}
\def\problem@s{}
\newcounter{problems@cnt}
\newcommand{\allproblems}{\problem@s}
\definecolor{beamer}{rgb}{0.2,0.2,0.7}
\definecolor{colorone}{rgb}{1,0.36,0.03}
\definecolor{colortwo}{rgb}{0.4,0.77,0.17}
\definecolor{colorthree}{rgb}{0.01,0.51,0.93}
\definecolor{colorfour}{rgb}{0.47,0.26,0.58}
\definecolor{colorfive}{rgb}{0.12,0.55,0.16}
\nc{\st}{\text{subject to} \ }
\nc{\supre}{\text{supremum} \ }
\nc{\sdp}{\text{sdp}}
\newcommand{\etal}{\textit{et al.}}
\newcommand{\set}[1]{ \left\{ #1 \right\} }
\newcommand{\trace}[2][]{\tr_{#1}\left[ #2 \right]}
\newcommand{\combwr}[2]{{ \operatorname{Cr}^{#1}_{#2} }}
\DeclarePairedDelimiter{\abs}{\lvert}{\rvert}
\renewcommand{\set}[1]{ \left\{ #1 \right\} }
\newcommand{\setcond}[2]{ \left\{ #1 \mid #2 \right\} }
\nc{\ith}[1]{{#1}^\mathrm{th}}
\begin{document}
\title{Retrieving non-linear features from noisy quantum states}
 
\author{Benchi Zhao}
\affiliation{Thrust of Artificial Intelligence, Information Hub, Hong Kong University of Science and Technology (Guangzhou), Nansha, China}
\affiliation{Graduate School of Engineering Science, Osaka University, 1-3 Machikaneyama, Toyonaka, Osaka 560-8531, Japan}

\author{Mingrui Jing}
\affiliation{Thrust of Artificial Intelligence, Information Hub, Hong Kong University of Science and Technology (Guangzhou), Nansha, China}

\author{Lei Zhang}
\affiliation{Thrust of Artificial Intelligence, Information Hub, Hong Kong University of Science and Technology (Guangzhou), Nansha, China}

\author{Xuanqiang Zhao}
\affiliation{QICI Quantum Information and Computation Initiative, Department of Computer Science, The University of Hong Kong, Pokfulam Road, Hong Kong, China}

\author{Yu-Ao Chen}
\affiliation{Thrust of Artificial Intelligence, Information Hub, Hong Kong University of Science and Technology (Guangzhou), Nansha, China}

\author{Kun Wang}
\affiliation{Thrust of Artificial Intelligence, Information Hub, Hong Kong University of Science and Technology (Guangzhou), Nansha, China}

\author{Xin Wang}
\email{felixxinwang@hkust-gz.edu.cn}
\affiliation{Thrust of Artificial Intelligence, Information Hub, Hong Kong University of Science and Technology (Guangzhou), Nansha, China}

\begin{abstract}
Accurately estimating high-order moments of quantum states is an elementary precondition for many crucial tasks in quantum computing, such as entanglement spectroscopy, entropy estimation, spectrum estimation, and predicting non-linear features from quantum states. But in reality, inevitable quantum noise prevents us from accessing the desired value. In this paper, we address this issue by systematically analyzing the feasibility and efficiency of extracting high-order moments from noisy states. We first show that there exists a quantum protocol capable of accomplishing this task if and only if the underlying noise channel is invertible. We then establish a method for deriving protocols that attain optimal sample complexity using quantum operations and classical post-processing only. Our protocols, in contrast to conventional ones, incur lower overheads and avoid sampling different quantum operations due to a novel technique called observable shift, making the protocols strong candidates for practical usage on current quantum devices. The proposed method also indicates the power of entangled protocols in retrieving high-order information, whereas in the existing methods, entanglement does not help. {We further construct the protocol for large quantum systems to retrieve the depolarizing channels, making the proposed method scalable.} Our work contributes to a deeper understanding of how quantum noise could affect high-order information extraction and provides guidance on how to tackle it.
\end{abstract}

\maketitle

\section{Introduction}
Quantum computing has emerged as a rapidly evolving field with the potential to revolutionize the way we process and analyze information. Such an advanced computational paradigm stores and manipulates information in a quantum state, which forms an elaborate representation of a many-body quantum system~\cite{bennett_quantum_2000}. One critical task for this purpose is to estimate the $k$-th \textit{moment} of a quantum state's density matrix $\rho$, which is often denoted as $\tr[\rho^k], k\in \mathbb{Z}^+$. For example, the second moment of $\rho$ is commonly known as the \textit{purity} of $\rho$. Accurately computing $\tr[\rho^k]$ provides an elementary precondition for extracting spectral information of the quantum state~\cite{nielsen2010quantum,chakraborty2022oneshot},
which is crucial in supporting the evaluation of non-linear functions in quantum algorithms~\cite{childs2017quantum,chen2021variational}, applying to entanglement spectroscopy by determining measures of entanglement, e.g., \textit{R\'enyi entropy} and \textit{von Neumann entropy}~\cite{johri2017entanglement,chung_entanglement_2014}, and characterizing non-linear features of complex quantum systems in materials~\cite{Pollmann2010,Vidal2003,subacsi2019entanglement,Li2008}. In particular, as a core-induced development, understanding and controlling quantum entanglement inspire various quantum information breakthroughs, including entanglement theories, quantum cryptography, teleportation and discrimination~\cite{horodecki2009quantum,yin2020entanglement,pirandola2015advances,hayashi2006bounds}. 

\begin{figure}
    \centering
    \includegraphics[width = \linewidth]{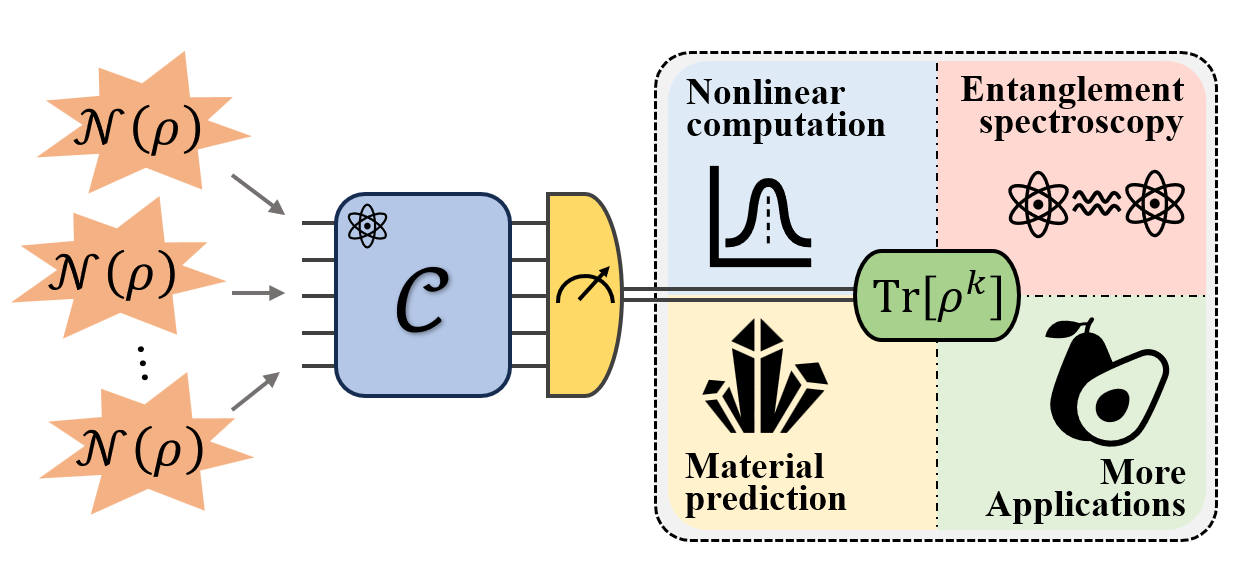}
    \caption{The general framework of recovering the high-order quantum information $\tr[\rho^k]$ given copies of noisy resource $\cN(\rho)$ based on our derived protocol, i.e., a quantum channel $\cC$ and measurement-based post-processing. The information can be further employed in various applications in practical quantum computing.}
    \label{fig:main_fig}
\end{figure}

Numerous methods have been proposed for efficiently estimating quantum state spectra on a quantum computer, including the deterministic quantum schemes processing intrinsic information of the state~\cite{subramanian2021quantum} and the variational quantum circuit learning for approximating non-linear quantum information functions~\cite{Mitarai_2018,tan2021variational}. Meanwhile, a direct estimation method of $\tr[\rho^k]$ through the Newton-Girard method and \textit{generalized swap trick}~\cite{johri2017entanglement} has been proposed in~\cite{subacsi2019entanglement}, and then it was further improved by~\cite{yirka_qubit-efficient_2021}. 

However, quantum systems are inherently prone to the effects of noise, which can arise due to a variety of factors, such as imperfect state preparation, coupling to the environment, and imprecise control of quantum operations~\cite{clerk2010introduction}. In definition, quantum noise can be described in a language of quantum operation denoted as $\cN$. Such an operation can inevitably pose a significant challenge to the reliable estimation of $\tr[\rho^k]$ from corrupted copies of quantum state $\cN(\rho)$.

Previous works concentrated on the first-order situation by applying the inverse operation $\cN^{-1}$~\cite{temme2017error, jiang2021physical, endo2018practical, takagi2021optimal} to each copy of the noisy state, such that $\cN^{-1}\circ\cN={\rm id}$, where {\rm id} means identity map. 
Such inverse operation might not be physically implementable, which requires the usage of the quasi-probability decomposition (QPD) and sampling techniques~\cite{piveteau2022quasiprobability}, decomposing $\cN^{-1}$ into a linear combination of quantum channels $\cN^{-1} = \sum_i c_i \cC_i$. 
Then, the value $\tr[O\rho]$ can be estimated in a statistical manner, and the total required sampling times are square proportional to \textit{sampling overhead} $g = \sum_i|c_i|$~\cite{hoeffding1994probability}. 
Nevertheless, the situations for estimating $\tr[\rho^k]$ with $k>1$ stay unambiguous apart from handling individual state noise. In this paper, we are going to retrieve the $k$-th moment from noisy states, which is illustrated in Fig.~\ref{fig:main_fig}. To systematically analyze the feasibility and efficiency of extracting high-order moment information from noisy states, as shown in Fig.~\ref{fig:main_fig}, The following two questions are addressed:
\begin{enumerate}
    \item \textit{Under what conditions can we retrieve the high-order moments from noisy quantum states?}
    \item \textit{For such conditions, what is the quantum protocol that achieves the optimal sampling complexity?} 
\end{enumerate}
These two questions address the existence and efficiency of quantum protocols for retrieving high-order moment information and essential properties from noisy states, which help us to access accurate non-linear feature estimations.

In the present study, we aim to address both of these questions. For the first question, we establish a necessary and sufficient condition for the retrieval of high-order moments from noisy states, which states that a quantum protocol can achieve this goal if and only if the noisy channel is invertible. Regarding the second question, we propose a quantum protocol that can attain optimal sampling complexity using quantum operations and classical post-processing only. In contrast to the conventional sampling techniques, our protocol only employs one quantum operation due to avoiding quasi-probability decomposition and developing a novel technique called \textit{observable shift}. {We further construct a protocol for large quantum systems to retrieve the depolarizing channels, making the observable shift method scalable.}

We also demonstrate the advantages of our method over existing QPD methods~\cite{temme2017error,endo2018practical,piveteau2022quasiprobability} with step-by-step protocols for some types of noise of common interest. Our protocols incur lower sampling overheads and have simple workflows, serving as strong candidates for practical usage on current quantum devices. The proposed method also indicates the power of entanglement in retrieving high-order information, whereas in the existing methods, entangled protocols do not help~\cite{jiang2021physical, regula2021operational}.
In the end, numerical experiments are performed to demonstrate the effectiveness of our protocol with depolarizing noise applied on the ground state of the Fermi-Hubbard model. Our sampling results illustrate a more accurate estimation on $\tr[\rho^2]$ compared with no protocol applied. 

\section{Moment recoverability}
In this section, we are going to address the first question proposed in the introduction. We discover a necessary and sufficient condition for the existence of a high-order moment extraction protocol as shown in Theorem~\ref{theorem:NS_condition}.

\begin{theorem}\label{theorem:NS_condition}{\rm (Necessary and sufficient condition for existence of protocol)}
    Given a noisy channel $\cN$, there exists a quantum protocol to extract the $k$-th moment $\tr[\rho^k]$ for any state $\rho$ if and only if the noisy channel $\cN$ is invertible.
\end{theorem}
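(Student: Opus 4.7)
The theorem is a biconditional, so the plan is to prove each direction separately. The forward direction (invertibility of $\cN$ implies existence of a protocol) follows from standard quasi-probability techniques, while the reverse direction (existence of a protocol implies invertibility of $\cN$) is the conceptually substantive part.

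For the forward direction, if $\cN$ is invertible as a linear map on Hermitian operators, then $\cN^{-1}$, while generally not completely positive, admits a quasi-probability decomposition $\cN^{-1} = \sum_i c_i \cC_i$ into CPTP maps with real coefficients. Sampling $\cC_i$ with probability $|c_i|/g$, where $g = \sum_i |c_i|$, and attaching the sign of $c_i$ in post-processing yields an unbiased simulation of $\cN^{-1}$ acting on each noisy copy, recovering copies of $\rho$ in expectation. Feeding these into any standard $k$-th moment estimator (for instance the generalized swap trick of \cite{johri2017entanglement} applied to $k$ simulated copies) then produces an unbiased estimator of $\tr[\rho^k]$, showing the protocol exists.

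For the reverse direction I argue by contrapositive: if $\cN$ is not invertible, I exhibit two states $\rho_+ \neq \rho_-$ with $\cN(\rho_+) = \cN(\rho_-)$ yet $\tr[\rho_+^k] \neq \tr[\rho_-^k]$, so that no protocol receiving copies of their common noisy image can output the correct moment in both cases. Since $\cN$ is Hermiticity- and trace-preserving, its kernel contains a nonzero traceless Hermitian operator $H$. Taking $\rho = I/d + \delta H$ and $\rho_\pm = \rho \pm \epsilon H$, which are valid states for sufficiently small $\delta, \epsilon > 0$, Taylor expansion together with cyclicity of the trace gives
\begin{equation}
\tr[\rho_+^k] - \tr[\rho_-^k] = 2k\epsilon\, \tr[\rho^{k-1}H] + O(\epsilon^2).
\end{equation}
Because $I/d$ commutes with $H$, a binomial expansion of $\rho^{k-1}$ in $\delta$ shows that the $\delta^0$ term of $\tr[\rho^{k-1}H]$ vanishes (since $\tr H = 0$), whereas the $\delta^1$ coefficient is a nonzero multiple of $\tr[H^2] > 0$. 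Thus for $k \geq 2$ one can pick $\delta$ small and then $\epsilon$ much smaller still so that the above difference is nonzero; the case $k = 1$ is vacuous since $\tr[\rho]=1$ for every state.

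The main obstacle I anticipate is the careful bookkeeping of the two small parameters: $\delta$ must be small enough that $\rho$ is a valid state but large enough that the $\delta$-linear contribution to $\tr[\rho^{k-1}H]$ dominates higher-order-in-$\delta$ corrections, while $\epsilon$ must be small enough that $\rho_\pm$ remain positive and the $O(\epsilon^2)$ remainder is controlled by the linear term. Beyond this elementary scaling analysis, the proof reduces to the linearity of quantum channels and of any protocol applied to $\cN(\rho)^{\otimes m}$.
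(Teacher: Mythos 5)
Your proof is correct, but your ``only if'' direction takes a genuinely different route from the paper's. For the ``if'' direction you and the paper agree: both apply $(\cN^{-1})^{\ox k}$ (you additionally spell out the quasi-probability implementation, which the paper defers to its notion of Hermitian-preserving protocols). For the converse, the paper works algebraically: it invokes the characterization $\cM^\dagger(O)=O$ from \cite{zhao2023information} to reduce the problem to $(\cN^{\ox k})^\dagger\circ\cD^\dagger(H)=H$, vectorizes, and shows via a ``full effective rank'' lemma for the moment observable that the matrix representation of $\cN^\dagger$ must have full rank, contradicting non-invertibility. You instead give an information-theoretic indistinguishability argument: the kernel of a non-invertible channel contains a nonzero traceless Hermitian $H_0$ (nonzero Hermitian because the kernel is closed under adjoints by Hermiticity preservation, traceless by trace preservation --- worth stating explicitly, and worth renaming to avoid clashing with the moment observable $H$), and the states $I/d+(\delta\pm\epsilon)H_0$ have identical noisy images but, by your perturbative computation, distinct $k$-th moments for $k\ge 2$. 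Your two-parameter bookkeeping checks out (the cleaner phrasing is that $s\mapsto\tr[(I/d+sH_0)^k]$ is a non-constant polynomial near $0$ since its second derivative at $0$ is $k(k-1)d^{2-k}\tr[H_0^2]>0$). What your approach buys is generality and transparency: it rules out \emph{any} protocol whose input is copies of $\cN(\rho)$, not merely those of the specific form $\tr[H\cD\circ\cN^{\ox k}(\rho^{\ox k})]$, and it avoids the vectorization and effective-rank machinery entirely. What the paper's approach buys is that it stays within the same algebraic framework used to derive the constructive SDPs later on. One shared loose end: for $k=1$ the ``only if'' direction is genuinely false (outputting $1$ always works), so both arguments implicitly assume $k\ge 2$; you at least flag this explicitly.
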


Intuitively, we can understand Theorem~\ref{theorem:NS_condition} from the following aspects.
Estimating high-order moment demands complete information about quantum channels. If a noise channel $\cN$ is invertible, it means information stored in quantum states is deformed, which can be carefully re-deformed back to the original information with extra resources of noisy states and sampling techniques.
However, when the loss of information is unattainable, i.e., the noise is non-invertible. Part of the information stored in the quantum state is destroyed completely, leading to an infeasible estimation problem even with extra quantum resources. An illustration of the theorem is shown in Fig.~\ref{fig:main_coro}.

\begin{figure}
    \centering
    \includegraphics[width=\linewidth]{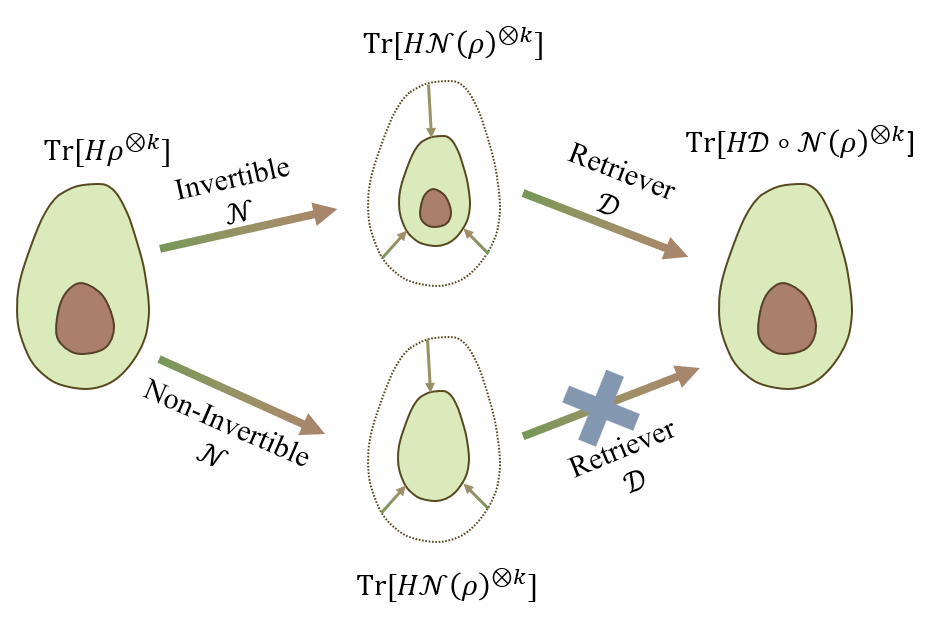}
    \caption{Illustration of Theorem~\ref{theorem:NS_condition}. Suppose a state $\rho$ is corrupted by an invertible channel $\cN$, and $H$ is the moment observable, such that $\tr[H \rho^{\ox k}] = \tr[\rho^k]$. The state information is deformed but can be retrieved via applying $\cD$ and post-processing (Top). However, for non-invertible $\cN$, the high-order moment is completely destroyed and cannot be retrieved (Bottom).}
    \label{fig:main_coro}
\end{figure}

In the following, we will present a sketch of proof for our main theorem. Starting with the definition of a \textit{quantum protocol} which is usually described as a sequence of realizable quantum operations and post-processing steps used to perform a specific task in the domain of quantum information processing. 
Mathematically, we say there exists a quantum protocol to retrieve the $k$-th moment from copies of a noisy state $\cN(\rho)$ if there exists an operation $\cD$ such that 
\begin{equation}\label{eq:protocol}
    \tr[H\cD\circ\cN^{\ox k}(\rho^{\ox k})] = \tr[H\rho^{\ox k}],
\end{equation}
where $H$ is what we call the \textit{moment observable}, as the usage of it is the core of extracting the high-order moment from quantum states, i.e., $\tr[H\rho^{\ox k}] = \tr[\rho^k]$. For example, in estimating the purity of single-qubit states, the moment observable $H$ is just a SWAP operator correlating two qubits. It is proved in the Supplementary Material that for any order $k$, there exists such a moment observable $H$ to extract the $k$-th moment information. The inspiration for our proof comes from the QPD method used to simulate Hermitian-preserving maps on quantum devices, which has enjoyed great success in a variety of tasks, such as error mitigation~\cite{temme2017error, endo2018practical,takagi2021optimal}, and entanglement detection~\cite{peres1996separability, horodecki2009quantum}. We extend our allowed operation $\cD$ to the field covering the Hermitian-preserving maps.

If the noisy channel $\cN$ is invertible, then there exists the inverse operation of the noisy channel $\cN^{-1}$, which is generally a Hermitian-preserving map, and $(\cN^{-1})^{\ox k}$ stands for a feasible solution to the high-order moment retriever.
On the other hand, by assuming a Hermitian-preserving map $\cD$ satisfying Eq.~\eqref{eq:protocol} and non-invertible $\cN$. In the view of the Heisenberg picture, the adjoint of the maps in Eq.~\eqref{eq:protocol} satisfies:
\begin{equation}\label{eq:Heisenberg}
    \tr[(\cN^{\ox k})^\dagger\circ\cD^\dagger (H) \rho^{\ox k}] = \tr[H \rho^{\ox k}].
\end{equation}
It has been proven in \cite{zhao2023information} that given an observable $O$, a Hermitian-preserving map $\cM$ satisfies $\tr[\cM(\rho)O] = \tr[\rho O]$ for any state $\rho$ if and only if it holds that $\cM^\dagger(O)=O$. Thus, we can derive that as long as we find a Hermitian-preserving operation $\cD$ such that the condition 
\begin{equation}\label{eq:condition_for_quantum_channel}
    (\cN^{\ox k})^\dagger\circ\cD^\dagger(H)=H
\end{equation}
is satisfied, the problem is solved. Since the effective rank of $H$ is full, whose definition and proof are shown in Supplementary Material, then from the fact that $\rank(B)\le \min \left(\rank(A), \rank(B)\right)$, we can deduce that $\cN$ is invertible, contradicting to our assumption. This means there exists no quantum protocol for extracting high-order moments when the noise is non-invertible. The detailed proof is given in the Supplementary Material.

\section{Observable shift method}
In the previous part, we mentioned that applying the inverse operation of a noisy channel $\cN^{-1}$ to noisy states simultaneously to mitigate the error is one feasible solution to retrieve high-order moments. However, this channel inverse method requires exponentially many resources with respect to $k$ to retrieve the $k$-th moment. Also, the implementation of inverse operation $\cN^{-1}$ is not quantum device friendly because it has to sample and implement different quantum channels probabilistically.

In this section, we propose a new method called \textit{observable shift} to retrieve high-order moment information from noisy states, which requires only one quantum operation with comparable sampling complexity.
\begin{lemma}\label{lemma:CPTS}{\rm (Observable shift)}
    Given an invertible quantum channel $\cN$ and an observable $O$, there exists a quantum channel $\cC$, called retriever, and  coefficients $t, f$ such that 
    \begin{equation}\label{eq:obs_shift}
    \cN^\dagger\circ\cC^\dagger(O)=\frac{1}{f}\left(O + tI\right),
    \end{equation}
    where $I$ is identity.
\end{lemma}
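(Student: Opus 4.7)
The plan is to build the retriever $\cC$ explicitly as a convex combination of the trace-preserving but generally non-CP inverse map $\cN^{-1}$ and the completely depolarizing channel $\cS(\rho) := \tr[\rho]\,I/d$, where $d$ is the dimension of the underlying state space. The heuristic is that $\cS$ has a full-rank Choi matrix, so mixing in enough weight of $\cS$ will cure any CP defect of $\cN^{-1}$; meanwhile $\cS^\dagger$ sends any observable to a multiple of $I$, which is exactly what supplies the identity shift on the right-hand side of Eq.~\eqref{eq:obs_shift}.

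Concretely, I would first check that $\cN^{-1}$ is well-defined, linear, Hermitian-preserving, and trace-preserving: since $\cN$ is TP we have $\cN^\dagger(I)=I$, and hence $(\cN^{-1})^\dagger(I)=(\cN^\dagger)^{-1}(I)=I$, so $\cN^{-1}$ is itself TP. For a parameter $a\ge 0$, define
\begin{equation*}
\cC_a \;:=\; \tfrac{1}{1+a}\,\cN^{-1} \;+\; \tfrac{a}{1+a}\,\cS,
\end{equation*}
which is TP as a convex combination of TP maps. Its Choi matrix equals $\tfrac{1}{1+a}J_{\cN^{-1}} + \tfrac{a}{(1+a)d}\,I$, so any $a \ge -d\,\lambda_{\min}(J_{\cN^{-1}})$ renders $\cC_a$ CPTP. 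Using $\cN^\dagger\circ(\cN^{-1})^\dagger=\id$, the unitality $\cN^\dagger(I)=I$, and $\cS^\dagger(O)=\tfrac{\tr[O]}{d}\,I$, I would then verify the target identity by direct computation:
\begin{equation*}
\cN^\dagger\circ\cC_a^\dagger(O) \;=\; \tfrac{1}{1+a}\!\left(O + \tfrac{a\,\tr[O]}{d}\,I\right) \;=\; \tfrac{1}{f}(O+tI),
\end{equation*}
with $f=1+a$ and $t=a\,\tr[O]/d$, matching Eq.~\eqref{eq:obs_shift}.

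The only delicate step is the CP check, which pins the admissible $a$ to the most negative eigenvalue of $J_{\cN^{-1}}$. Since $f=1+a$ is precisely the sampling overhead of the resulting protocol, I expect this to be the real obstacle in any quantitative refinement of the lemma: pure existence follows straightforwardly from the construction above, but comparisons with QPD methods will turn on how tightly one can minimize $a$. It is already suggestive that for depolarizing noise the minimal admissible $a$ makes $\cC_a=\id$, which is why the observable-shift protocol is especially clean in that case.
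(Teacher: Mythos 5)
Your construction is correct, and it proves the lemma by a genuinely different route from the paper. The paper first invokes an existence result from the information-recoverability literature (its Lemma~\ref{lemma:HPTS}) to obtain an HPTS map $\cD$ with $\cN^\dagger\circ\cD^\dagger(O)=O$, then argues that adding $tI$ to $\cD^\dagger(O)$ for sufficiently large $t$ turns the map into a CPTS one, which is finally rescaled into $f\cC$ with $\cC$ CPTP. You instead write down the retriever explicitly as $\cC_a=\tfrac{1}{1+a}\cN^{-1}+\tfrac{a}{1+a}\cS$ and verify complete positivity through a single eigenvalue condition on the Choi matrix, $a\ge -d\,\lambda_{\min}(J_{\cN^{-1}})$; since $J_{\cN^{-1}}$ is Hermitian ($\cN^{-1}$ inherits Hermitian-preservation and trace-preservation from $\cN$), this condition is always satisfiable, and the identity $\cN^\dagger\circ\cC_a^\dagger(O)=\tfrac{1}{1+a}\bigl(O+\tfrac{a\tr[O]}{d}I\bigr)$ follows by direct computation. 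Your version is more self-contained and, in the CP step, arguably tighter: the paper's claim that shifting the single output $\cD^\dagger(H)$ by $tI$ makes the whole map completely positive is stated without elaboration, whereas your depolarizing admixture certifies CP at the level of the Choi matrix. What the paper's formulation buys in exchange is that it sets up the problem over \emph{all} CPTP retrievers, which is what the SDP in Eq.~\eqref{eq:total_SDP_for_obs_shift} then optimizes; your one-parameter family $\{\cC_a\}$ proves existence and gives closed-form $f=1+a$ and $t=a\tr[O]/d$, but will generally not attain the optimal overhead $f_{\min}(\cN,k)$ (your own observation that it collapses to $\cC_a=\id$ with $f=1/(1-\epsilon)$ for single-copy depolarizing noise is the happy special case, not the rule).
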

We develop this observable shift technique since the expectation of $O+tI$ regarding any quantum states can be computed as $\tr[O\rho] + t$ during the measurement procedures. Moreover, if one wants to maintain the retrievability of $\tr[O\rho^k]$ from the noise channel $\cN$ with respect to any possible quantum states, then the only change that could be made to the observable is to add constant identity since such a transformation could maintain the original information of $\rho$. Therefore, the trace value can be retrieved via measurements and post-processing. For instance, when we estimate $\tr[O\rho]$, where $O=I+X+Z$, By skipping the identity, often called \textit{shifting the observable}, the value of $\tr[O\rho]$ can still be re-derived by post-adding a value of one to the expectation value of the \textit{shifted observable} $O' = X+Z$, i.e., $\tr[O\rho] = 1+\tr[O'\rho]$.

Besides, instead of mitigating noise states individually, our method utilizes entanglement to retrieve the information with respect to the moment observable $H$. Compared with the channel inverse method, the proposed observable shift method requires fewer quantum resources, and its implementation is easier. The proposed observable shift method leads to Proposition~\ref{prop:sample_complexity}.

\begin{proposition}\label{prop:sample_complexity}
    Given error tolerance $\delta$, the $k$-th moment information can be retrieved by sampling one quantum channel with complexity $\mathcal{O}(f_{\min}^2(\cN,k)/\delta^2)$ and post-processing. The quantity $f_{\min}(\cN,k)$ is the sampling overhead defined as
\begin{align}\label{eq:f(N,k)}
    f_{\min}(\cN, k) = \min\Big\{f \mid &(\cN^{\ox k})^\dagger\circ \cC^\dagger(H) = \frac{1}{f}\left(H+tI\right),\nonumber\\
    &f\in\mathbb{R}^+, t\in\mathbb{R}, \cC\in{\rm CPTP} \Big\},
\end{align}
where $\cN$ is the noisy channel, $\cC$ is quantum channel, $t$ is the shifted distance, $H$ is the moment observable.
\end{proposition}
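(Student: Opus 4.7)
The plan is to turn Lemma~\ref{lemma:CPTS} into a single-shot unbiased estimator and then invoke a concentration inequality to count the shots.

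First, I would apply Lemma~\ref{lemma:CPTS} to the tensor-power channel $\cN^{\ox k}$ together with the moment observable $H$: this delivers the optimizer $f_{\min} = f_{\min}(\cN,k)$, a shift $t\in\RR$, and a CPTP retriever $\cC$ satisfying $(\cN^{\ox k})^\dagger \circ \cC^\dagger(H) = (H + tI)/f_{\min}$. Taking the Hilbert--Schmidt inner product against $\rho^{\ox k}$ on both sides gives, for every state $\rho$,
\begin{equation*}
\tr\!\left[H \cdot \cC \circ \cN^{\ox k}(\rho^{\ox k})\right] = \frac{1}{f_{\min}}\bigl(\tr[H \rho^{\ox k}] + t\bigr),
\end{equation*}
and hence, using the defining property $\tr[H\rho^{\ox k}]=\tr[\rho^k]$,
\begin{equation*}
\tr[\rho^k] \;=\; f_{\min}\,\tr\!\left[H \cdot \cC \circ \cN^{\ox k}(\rho^{\ox k})\right] \,-\, t .
\end{equation*}

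Second, I would promote this identity into an explicit protocol: prepare $k$ independent copies of the corrupted state $\cN(\rho)$, pass them jointly through the \emph{single} retriever $\cC$, and measure $H$ in its spectral basis. Because $H$ is a permutation-type operator on $k$ copies, it satisfies $\|H\|_\infty\le 1$, so each per-shot outcome $X_i$ lies in $[-1,1]$, and the empirical mean $\hat\mu = N^{-1}\sum_{i=1}^N X_i$ is unbiased for $\tr[H\cdot\cC\circ\cN^{\ox k}(\rho^{\ox k})]$. The classical post-processing step $\hat\xi := f_{\min}\hat\mu - t$ then produces an unbiased estimator of $\tr[\rho^k]$.

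Third, I would control the sample complexity using Hoeffding's inequality applied to $\hat\mu$: for any $\delta>0$,
\begin{equation*}
\Pr\!\left[|\hat\xi - \tr[\rho^k]|>\delta\right] \;=\; \Pr\!\left[|\hat\mu-\EE\hat\mu|>\delta/f_{\min}\right] \;\le\; 2\exp\!\left(-\tfrac{N\delta^2}{2 f_{\min}^2}\right),
\end{equation*}
so that $N = \mathcal{O}(f_{\min}^2/\delta^2)$ shots are enough to reach additive error $\delta$ at any fixed constant success probability, which establishes the stated complexity while invoking only one quantum channel $\cC$ plus classical post-processing. The main obstacle I anticipate is justifying that the infimum defining $f_{\min}(\cN,k)$ is actually attained by a genuine CPTP retriever rather than only approached in the limit, so that the bound is realized by a physically implementable protocol; this should follow from Theorem~\ref{theorem:NS_condition} (which ensures the feasibility set in Eq.~\eqref{eq:f(N,k)} is non-empty whenever $\cN$ is invertible) together with compactness of the CPTP set and continuity of the linear constraint in $\cC$, $f$ and $t$.
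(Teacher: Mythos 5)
Your proposal is correct and follows essentially the same route as the paper: invoke Lemma~\ref{lemma:CPTS} applied to $\cN^{\ox k}$ and $H$ to obtain a single CPTP retriever $\cC$ with $(\cN^{\ox k})^\dagger\circ\cC^\dagger(H)=(H+tI)/f$, take the expectation against $\rho^{\ox k}$ to get the unbiased post-processed estimator $f\zeta-t$, and then apply Hoeffding's inequality to obtain $T=\mathcal{O}(f_{\min}^2/\delta^2)$. Your added remarks on $\|H\|_\infty\le 1$ and on the attainment of the minimum over the compact CPTP set are refinements the paper leaves implicit, not a different argument.
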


In our cases, we aim to find a Hermitian-preserving map $\cD$ such that Eq.~\eqref{eq:condition_for_quantum_channel} holds together with the allowance of observable shifting~\eqref{eq:obs_shift}, i.e.,
\begin{equation}
    (\cN^{\ox k})^\dagger\circ\cD^\dagger(H)=H' - tI,
\end{equation}
where $H' = H+tI$ is the shifted observable, and $t$ is a real coefficient. Note that the quantum channel $\cN^{\ox k}$ is a completely positive and trace preserving (CPTP) map, and the adjoint of a CPTP map is completely positive unital preserving~\cite{khatri2020principles}, which refers $(\cN^{\ox k})^\dagger(I)=I$. Thus we have 
\begin{equation}
    (\cN^{\ox k})^\dagger(\cD^\dagger(H)+tI)=H',
\end{equation}
where we can consider $\cD^\dagger(H)+tI$ as a whole and denote it as $\Tilde{\cD}(H)$.
With proper coefficient $t$, the map $\Tilde{\cD}$ could reduce to a completely positive map $\cC$. The detailed proof is given in the Supplementary Material.

If we apply the quantum channel $\cC$ to a noisy state and make measurements over moment observable $H$, the expectation value will be 
\begin{align}
    \zeta &= \tr[H\cC\circ\cN^{\ox k}(\rho^{\ox k})] =\tr[(\cN^{\ox k})^\dagger\circ\cC^\dagger(H)\rho^{\ox k}]\\
    &= \frac{1}{f}\tr[(H + tI)\rho^{\ox k}] = \frac{1}{f}(\tr[H\rho^{\ox k}] +t).
\end{align}
Obviously, the desired high-order moment is given by $\tr[H\rho^{\ox k}]  = f\zeta - t$. In order to obtain the target expectation value of $\tr[H\rho^{\ox k}]$ within an error $\delta$ with a probability no less than $1-p$, the number of total sampling times $T$ is given by Hoeffding's inequality~\cite{hoeffding1994probability},
\begin{equation}\label{eq:hoeffding}
    T \geq f^2 \frac{2}{\delta^2}\log(\frac{2}{p}).
\end{equation}

Usually, the success probability $1-p$ is fixed. Thus we consider it as a constant in this paper, and the corresponding sample complexity is $\cO(f^2/\delta^2)$, which only depends on error tolerance $\delta$ and the \textit{sampling overhead} $f$. It is desirable to find a quantum retriever $\cC$ and shift distance $t$ to make the sampling overhead $f$ as small as possible.
The optimal sampling overhead $f_{\min}(\cN, k)$ of our method can be calculated by SDP as follows:

\begin{subequations}\label{eq:total_SDP_for_obs_shift}
\begin{align}
    f_{\min}(\cN, k) &= \min  \quad f\\
    \st &\quad J_{\Tilde{\cC}} \ge 0 \label{eq:total_SDP_obs_CP}\\
    &\quad \tr_{C}[J_{\Tilde{\cC}_{BC}}]=f I_{B}\label{eq:total_SDP_obs_TS}\\
    &\quad J_{\cF_{AC}} \equiv \tr_B[(J_{\cN_{AB}^{\ox k}}^{T_B}\ox I_C) (I_A\ox J_{\Tilde{\cC}_{BC}})]\label{eq:total_SDP_obs_compose}\\
    &\quad \tr_C[(I_A\ox H_C^T)J_{\cF_{AC}}^T] = H_A + tI_A.\label{eq:total_SDP_obs_constraint}
\end{align}
\end{subequations}
The $J_{\Tilde{\cC}}$ and $J_{\cN^{\ox k}}$ are the \Choi matrices for the completely positive trace-scaling map $\Tilde{\cC} =f\cC$ and noise channel $\cN^{\ox k}$ respectively. Eq.~\eqref{eq:total_SDP_obs_CP} corresponds to the condition that the map $\Tilde{\cC}$ is completely positive, and Eq.~\eqref{eq:total_SDP_obs_TS} guarantees that $\Tilde{\cC}$ is a trace-scaling map. In Eq.~\eqref{eq:total_SDP_obs_compose}, $J_\cF$ is the Choi matrix of the composed map $\Tilde{\cC} \circ \cN^{\ox k}$. Eq.~\eqref{eq:total_SDP_obs_constraint} corresponds to the constraint shown in Eq.~\eqref{eq:obs_shift}.

Beyond retrieving particular non-linear features, i.e., $\tr[\rho^k]$, we can also apply our method to estimate non-linear functions. For a toy example, if we wish to estimate the function $F(\rho)=\frac{1}{2}\tr[\rho^2]+\frac{1}{3}\tr[\rho^3]$, we should design the moment observable first, which is supposed to be $H = \frac{1}{2}H_2\ox I + \frac{1}{3}H_3$, where $H_2$ and $H_3$ are the moment observables for two and three qubits respectively. Then, the retrieving protocol with optimal sampling overhead is given by SDP as shown in Eq.~\eqref{eq:total_SDP_for_obs_shift}. {With the power of estimating non-linear functions, our method applies to the entropy evaluation from noisy quantum states, showcasing the practical potential in quantum many-body correlations determination and entanglement detection. One can directly assist  the estimation of \textit{R\'enyi entropy}~\cite{muller2013quantum} provided a quantum state $\rho$, as,
\begin{equation}
    H_{\alpha}(\rho):=\frac{1}{1-\alpha}\log\left(\tr[\rho^{\alpha
    }]\right),
\end{equation}
where $\alpha\in(0,1)\cup(1,\infty)$. 
}

\section{Protocols for particular noise channels}

We have introduced the observable shift method in the previous part, next will provide the analytical protocol for retrieving the second-order moment information $\tr[\rho^2]$ from noisy quantum states suffering from depolarizing channel and amplitude dimpling channel, respectively.

Depolarizing channels have been extensively studied due to its simplicity and ability to represent a wide range of physical processes that can affect quantum states~\cite{nielsen2010quantum}. A quantum state that undergoes a depolarizing channel would be randomly replaced by a maximally mixed state with a certain error rate.
The single-qubit depolarizing (DE) noise $\cN_{\rm DE}^\epsilon$ has an exact form,
\begin{equation}
    \cN_{\rm DE}^\epsilon(\rho)=(1-\epsilon)\rho + \epsilon \frac{I}{2},
\end{equation}
where $\epsilon$ is the noise level, and $I$ refers to the identity operator.

Given many copies of such noisy quantum states, our method derives a protocol for retrieving the second-order moment $\tr[\rho^2]$ using only one quantum channel and post-processing. Specifically, we have Proposition~\ref{prop:retrieve_DE}.
\begin{proposition}\label{prop:retrieve_DE}
    Given two copies of noisy states, $\cN^\epsilon_{\rm DE}(\rho)^{\ox 2}$, and error tolerance $\delta$, the second order moment $\tr[\rho^2]$ can be estimated by $f\trace{H\cC\left(\cN^\epsilon_{\rm DE}(\rho)^{\ox 2}\right)} - t$, with optimal sample complexity $\mathcal{O}(1/(\delta^2 (1-\epsilon)^4)$, where $f=\frac{1}{(1-\epsilon)^2}$, $t = \frac{1-(1-\epsilon)^2}{2(1-\epsilon)^2}$. The term $\trace{H\cC\left(\cN^\epsilon_{\rm DE}(\rho)^{\ox 2}\right)}$ can be estimated by implementing a quantum retriever $\cC$ on noisy states and making measurements over moment observable $H$. Moreover, there exists an ensemble of unitary operations $\{p_j, U_j\}_j$ such that the action of the retriever $\cC$ can be interpreted as,
    \begin{equation}
        \cC(\cdot) = \sum_{j=1} p_j 
        U_j (\cdot) U_j^\dagger
    \end{equation}
\end{proposition}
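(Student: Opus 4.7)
My plan is to exploit the high symmetry of the depolarizing channel to solve the SDP in Eq.~\eqref{eq:total_SDP_for_obs_shift} analytically rather than numerically. The first step is to expand the moment observable in the two-qubit Pauli basis as $H=\tfrac{1}{2}\sum_{i=0}^{3}\sigma_i\otimes\sigma_i$ (with $\sigma_0=I$), the standard Pauli expansion of SWAP. Since the depolarizing adjoint is diagonal in the Pauli basis with $(\cN^\epsilon_{\rm DE})^\dagger(I)=I$ and $(\cN^\epsilon_{\rm DE})^\dagger(\sigma_i)=(1-\epsilon)\sigma_i$ for $i\in\{1,2,3\}$, a short calculation gives
\begin{equation*}
(\cN^\epsilon_{\rm DE})^{\otimes 2,\dagger}(H)=(1-\epsilon)^2 H+\tfrac{1}{2}\bigl(1-(1-\epsilon)^2\bigr)\,I,
\end{equation*}
which already has the form required by the observable-shift equation~\eqref{eq:obs_shift} of Lemma~\ref{lemma:CPTS}.

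Based on this identity, my candidate retriever is simply $\cC=\id$, so that $\cC^\dagger(H)=H$. Matching the resulting expression against $\tfrac{1}{f}(H+tI)$ immediately yields $f=1/(1-\epsilon)^2$ and $t=(1-(1-\epsilon)^2)/(2(1-\epsilon)^2)$, which are exactly the stated values. The sample-complexity bound $\cO(1/(\delta^2(1-\epsilon)^4))$ then follows by plugging this $f$ into Proposition~\ref{prop:sample_complexity}. The mixed-unitary decomposition required by the final claim is trivial for the identity channel: one can take the single unitary $U_1=I\otimes I$ with weight $p_1=1$.

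The main obstacle is establishing optimality of $f=1/(1-\epsilon)^2$, i.e., ruling out any CPTP retriever with a strictly smaller sampling overhead. My strategy is a twirling/covariance argument. Both $(\cN^\epsilon_{\rm DE})^{\otimes 2}$ and the observable $H$ commute with $U\otimes U$ conjugation for every unitary $U$, so replacing any feasible $\cC$ by its Haar $U\otimes U$-twirl preserves feasibility and does not increase $f$. We may therefore restrict to $U\otimes U$-covariant retrievers, for which Schur--Weyl duality forces $\cC^\dagger(H)$ to lie in the two-dimensional subspace spanned by $\{I,H\}$; combined with the unital condition $\cC^\dagger(I)=I$, the SDP collapses to a low-dimensional convex problem whose minimum is attained at $\cC^\dagger(H)=H$, giving $f=1/(1-\epsilon)^2$. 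Equivalently, one can exhibit a matching dual-feasible point for~\eqref{eq:total_SDP_for_obs_shift}. Carrying either route out rigorously is the most technical step and would be delegated to the Supplementary Material.
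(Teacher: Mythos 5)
Your primal construction is correct and is actually \emph{simpler} than the paper's. The identity $(\cN^\epsilon_{\rm DE})^{\ox 2,\dagger}(H)=(1-\epsilon)^2H+\tfrac12\bigl(1-(1-\epsilon)^2\bigr)I$ checks out, so $\cC=\id$ together with the stated $f$ and $t$ satisfies the observable-shift condition, and the one-element ensemble $\{1,\,I\ox I\}$ trivially meets the mixed-unitary requirement. The paper instead takes $\cC$ to be the Haar $U\ox U$ twirl, realized as a uniform mixture of twelve explicit unitaries with Choi matrix $J_\cC=\tfrac14\, II\ox II+\tfrac1{12}(XX+YY+ZZ)\ox(XX+YY+ZZ)$, and verifies the same $f$ and $t$ by a longer direct computation; since $H$ lies in the commutant of $\{U\ox U\}$, both channels act identically on $H$ in the Heisenberg picture, so your choice loses nothing and shortens the feasibility argument considerably.

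The genuine divergence is in the proof of optimality, which is half of the claim (``optimal sample complexity''). The paper proves the matching lower bound by exhibiting an explicit feasible point of the dual SDP, namely $M=\tfrac14 II-\tfrac1{12}(XX+YY+ZZ)$ and $K=-\tfrac{1}{6(1-\epsilon)^2}(XX+YY+ZZ)$, and checking $-\tr[KH]=1/(1-\epsilon)^2$. You propose a symmetrization argument instead, but you defer it, so as written the lower bound is asserted rather than proved. The sketch does close, and the missing step is short enough that you should include it: after Haar $U\ox U$ twirling (which preserves feasibility because $\cN^{\ox 2}$ and $H$ are covariant, and preserves $f$ by convexity), Schur--Weyl duality forces $\cC^\dagger(H)=\alpha H+\beta I$; since $\cC^\dagger$ is positive and unital and $\|H\|_\infty=1$, the eigenvalues $\beta\pm\alpha$ of $\alpha H+\beta I$ lie in $[-1,1]$, hence $\alpha\le 1$; matching the coefficient of $H$ after applying $(\cN^{\ox 2})^\dagger$ gives $1/f=\alpha(1-\epsilon)^2\le(1-\epsilon)^2$, i.e.\ $f\ge 1/(1-\epsilon)^2$. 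Either finish this covariance argument or supply the dual certificate; without one of them the proposition is only half proved.
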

We derive an explicit form of a mixed-unitary ensemble containing twelve fixed unitary $U_j$'s given in the Supplementary Material. 

As a result, the second order moment can be retrieved from depolarized states $\cN^\epsilon_{\rm DE}(\rho)^{\ox 2}$ by applying the unitaries $U_j$ randomly with equal probabilities and then performing measurements with respect to the moment observable $H$. After repeating these steps for $T$ rounds, where $T$ is given by Eq.~\eqref{eq:hoeffding}, and averaging the measurement results, we can obtain the estimated expectation value $\zeta = \tr[H\cC(\cN^\epsilon_{\rm DE}(\rho)^{\ox 2})]$. Then, the desired second-order moment is given by 
\begin{equation}
     \tr[\rho^2]= \frac{1}{(1-\epsilon)^2} \zeta - \frac{1-(1-\epsilon)^2}{2(1-\epsilon)^2}.
\end{equation}

When estimating $\tr[\rho^2]$ from copies of the noisy state $\cN^\epsilon_{\rm DE}(\rho)^{\ox 2}$, QPD-based methods incurs a sampling overhead $\frac{(1+\epsilon/2)^2}{(1-\epsilon)^2}$. On the other hand, our observable shift method offers a protocol with a lower sampling overhead $\frac{1}{(1-\epsilon)^2}$, which is much lower than that of the QPD-based method.

Besides, the quantum amplitude damping (AD) channel is another important model that we are interested in, which often appears in superconducting qubits or trapped ions. This type of noise is particularly relevant for the loss of energy or the dissipation of excited states~\cite{breuer2002theory}, whose action results in the transition of a qubit's excited state to its ground state, offering a more realistic representation of energy relaxation processes in quantum systems. The AD channel $\cN_{\rm AD}^\varepsilon$ is characterized by a single parameter $\varepsilon$, representing the damping rate, which has two Kraus operators: $A_0^\varepsilon \coloneqq \proj{0} + \sqrt{1-\varepsilon}\proj{1}$ and $A_1^\varepsilon \coloneqq \sqrt{\varepsilon}\ketbra{0}{1}$, where $\varepsilon\in[0,1]$. Similarly, given many copies of AD-produced quantum states, the second-order information $\tr[\rho^2]$ can be retrieved by applying only one quantum channel and post-processing with our protocol in Proposition~\ref{prop:retrieve_AD}.
\begin{proposition}\label{prop:retrieve_AD}
    Given two copies of noisy states, $\cN^\varepsilon_{\rm AD} (\rho)^{\ox 2}$, and error tolerance $\delta$, the second order moment $\tr[\rho^2]$ can be estimated by $f\trace{H\cC\left(\cN^{\varepsilon}_{\rm AD}(\rho)^{\ox 2}\right)} - t$, with optimal sample complexity $\mathcal{O}(1/(\delta^2 (1-\varepsilon)^4)$, where $f=\frac{1}{(1-\varepsilon)^2}$, $t = -\frac{\varepsilon^2}{(1-\varepsilon)^2}$. The term $\trace{H\cC\left(\cN^{\varepsilon}_{\rm AD}(\rho)^{\ox 2}\right)}$ can be estimated by implementing a quantum retriever $\cC$ on noisy states and making measurements. Moreover, the Choi matrix of such the retriever $\cC$ is
    \begin{align}\label{eq:AD_choi}
        J_\cC &= \proj{00} \ox \frac{1}{6} ((1+2\varepsilon)II + (1-4\varepsilon)H)\nonumber\\
        &\quad+ \proj{\Psi^+} \ox \frac{1}{6}((1+2\varepsilon)II + (1-4\varepsilon)H) \nonumber\\
        &\quad + \proj{\Psi^-} \ox \frac{1}{2} (II - H) \nonumber\\
        &\quad+ \proj{11} \ox \frac{1}{6}(II + H),
    \end{align}
where $\proj{\Psi^\pm}=\frac{1}{2}(\ket{01}\pm\ket{10})(\bra{01}\pm\bra{10})$ are Bell states.
\end{proposition}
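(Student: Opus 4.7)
The plan is to instantiate the observable-shift framework of Lemma~\ref{lemma:CPTS} and Proposition~\ref{prop:sample_complexity} with $k=2$, noise $\cN=\cN^\varepsilon_{\rm AD}$, and $H$ the two-qubit SWAP operator that realizes $\tr[H\rho^{\ox 2}]=\tr[\rho^2]$. Once the explicit retriever $\cC$ specified by the Choi matrix $J_\cC$ in Eq.~\eqref{eq:AD_choi} is shown to be CPTP and to satisfy the observable-shift identity $(\cN_{\rm AD}^{\varepsilon,\ox 2})^\dagger\circ\cC^\dagger(H) = \tfrac{1}{f}(H+tI)$ with $f=(1-\varepsilon)^{-2}$ and $t=-\varepsilon^2(1-\varepsilon)^{-2}$, the unbiased estimator $f\tr[H\cC(\cN^\varepsilon_{\rm AD}(\rho)^{\ox 2})]-t$ and the sample-complexity bound $\cO(1/(\delta^2(1-\varepsilon)^4))$ follow immediately from the general analysis together with Hoeffding's inequality~\eqref{eq:hoeffding}.

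First I would verify that $J_\cC$ defines a CPTP map. Complete positivity is checked term by term: the four input-side projectors $\proj{00},\proj{\Psi^+},\proj{\Psi^-},\proj{11}$ are manifestly PSD, and each output-side operator is a non-negative combination of the symmetric and antisymmetric projectors $(II\pm H)/2$. For instance, $(1+2\varepsilon)II+(1-4\varepsilon)H$ has eigenvalues $2-2\varepsilon$ and $6\varepsilon$ on the two subspaces, both non-negative for $\varepsilon\in[0,1]$. Trace preservation reduces to showing that $\tr_C$ of each output-side operator equals $I_B$, a short calculation using $\tr_C[II]=2I_B$ and $\tr_C[H]=I_B$, combined with the resolution of identity $\proj{00}+\proj{\Psi^+}+\proj{\Psi^-}+\proj{11}=I_B$.

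The central algebraic step is verification of the observable-shift identity. Using $\cC^\dagger(H)=\tr_C[J_\cC(I_B\ox H_C^T)]$ together with the fact that $H$ takes value $+1$ on the three-dimensional symmetric subspace spanned by $\ket{00},\ket{11},\ket{\Psi^+}$ and $-1$ on $\ket{\Psi^-}$, one obtains $\cC^\dagger(H)$ as an explicit operator in the Bell basis whose coefficients are linear in $\varepsilon$. I would then apply $(\cN_{\rm AD}^{\varepsilon,\ox 2})^\dagger$ by conjugating with the Kraus tensors $A_i\ox A_j$ for $i,j\in\{0,1\}$: the term $A_0^\dagger(\cdot)A_0$ damps coherences on the $\ket{1}$ sector by $\sqrt{1-\varepsilon}$ per copy, while $A_1^\dagger(\cdot)A_1$ injects an $\varepsilon$-weighted contribution supported on $\ketbra{1}{1}$. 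Summing these contributions across the Bell basis should collapse to $(1-\varepsilon)^2 H-\varepsilon^2 I$, which matches $\tfrac{1}{f}(H+tI)$ for the claimed $f,t$.

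The remaining task is \emph{optimality} of $f_{\min}(\cN^\varepsilon_{\rm AD},2)=(1-\varepsilon)^{-2}$ in the SDP~\eqref{eq:total_SDP_for_obs_shift}. The explicit $\cC$ constructed above already furnishes a primal feasible point attaining this value, so only a matching lower bound is needed. My plan is to produce a dual-feasible solution of the same value by exploiting symmetries: $\cN^\varepsilon_{\rm AD}$ commutes with the diagonal phase group generated by $\sigma_z$, and $H$ is invariant under the $\mathbb{Z}_2$ exchange of the two copies; averaging an arbitrary dual point under these symmetries reduces the dual to a small block problem solvable analytically. The hardest part will be precisely this optimality argument, since even after symmetry reduction several scalar parameters remain, and one must verify that the $(1-\varepsilon)^{1/2}$-contraction of off-diagonal elements per copy forces $f\geq(1-\varepsilon)^{-2}$ for every admissible retriever.
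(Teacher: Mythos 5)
Your primal half is sound and is essentially the paper's argument: you check that $J_\cC$ is the Choi matrix of a measure-and-prepare CPTP channel and then verify the observable-shift identity $(\cN_{\rm AD}^{\varepsilon\,\ox 2})^\dagger\circ\cC^\dagger(H)=(1-\varepsilon)^2H-\varepsilon^2 I$, from which $f$, $t$, the unbiased estimator and the Hoeffding bound follow. The paper performs the same verification in the Schr\"odinger picture, writing out $\cN_{\rm AD}^{\varepsilon}(\rho)^{\ox2}$ as an explicit $4\times4$ matrix and evaluating $\tr[H\,\cC(\rho')]$ entrywise, whereas you work with adjoints in the Bell basis; the two routes are equivalent. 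One small slip: trace preservation is not ``$\tr_C$ of each output-side operator equals $I_B$'' but rather that each output operator $\sigma_i$ has unit trace, so that $\tr_C[J_{\cC}]=\sum_i\tr[\sigma_i]\proj{b_i}=I_B$; with $\tr[II]=4$ and $\tr[H]=2$ one gets $\tfrac{1}{6}\bigl(4(1+2\varepsilon)+2(1-4\varepsilon)\bigr)=1$, which is the check you intend.

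The genuine gap is the optimality lower bound, which is exactly the part that justifies the word ``optimal'' in the statement. You only outline a strategy --- symmetrize the dual of the SDP~\eqref{eq:total_SDP_for_obs_shift} under phase rotations and copy exchange and solve the residual block problem --- and you yourself concede that several free parameters survive and that the decisive inequality $f\geq(1-\varepsilon)^{-2}$ has not been established. A plan for a dual argument is not a dual argument. The paper closes this half by exhibiting explicit feasible witnesses for the dual program~\eqref{eq:dual}, namely
\begin{align}
M&=\tfrac{1}{4}(\ket{01}-\ket{10})(\bra{01}-\bra{10})+\tfrac{1}{2}\proj{11},\\
K&=\tfrac{1}{2(1-\varepsilon)^2}\Bigl[-\varepsilon\proj{00}-\proj{11}+\tfrac{1+\varepsilon}{2}(\proj{01}+\proj{10})+\tfrac{\varepsilon-1}{2}(\ketbra{01}{10}+\ketbra{10}{01})\Bigr],
\end{align}
then checking $\tr[M]\le1$, $\tr[K]=0$ (tracelessness of $K$ is what allows the shift $t$ to float freely), and positivity of the operator in the final dual constraint; weak duality then gives $f_{\min}\ge-\tr[KH]=(1-\varepsilon)^{-2}$. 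Until you either carry your symmetry reduction through to a closed-form bound or produce such a witness, your proof establishes only feasibility at overhead $(1-\varepsilon)^{-2}$, not its optimality.
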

The above retriever $\cC$ can be implemented based on the following measurements and post-processing. Given amplitude damping noisy states $\cN^\varepsilon_{\rm AD}(\rho)^{\ox 2}$, we make measurements in the basis $\cB = \{\ket{00}, \ket{\Psi^+}, \ket{\Psi^-}, \ket{11}\}$.
From the Choi matrix of the retriever $\cC$, which is shown in Eq.~\eqref{eq:AD_choi}, we know that based on the obtained measurement results, the quantum system collapses to the states
$\sigma_1, \sigma_2, \sigma_3,\sigma_4$ correspondingly, where $\sigma_1 = \sigma_2 = \frac{1}{6}((1+2\varepsilon)II + (1-4\varepsilon)H)$,  $\sigma_3 = \frac{1}{2}(II-H)$ and $\sigma_4 = \frac{1}{6}(II+H)$. Each state corresponds to a fixed expectation value $\tr[H\sigma_i]$, which can be predetermined via direct matrix calculation with fixed $H$ and known  $\sigma_i$.
The next step is to run sufficient shots of basis-$\cB$ measurements to determine the probability of measuring each basis state, denoted as $p_i$, respectively. The term $\trace{H\cC\left(\cN^{\varepsilon}_{\rm AD}(\rho)^{\ox 2}\right)}$ is then given by the estimated value $\zeta = \sum_{i=1}^4 p_i\tr[H \sigma_i]$. The desired second-order moment is obtained by
\begin{equation}
    \tr[\rho^2] = \frac{1}{(1-\varepsilon)^2}\zeta + \frac{\varepsilon^2}{(1-\varepsilon)^2}.
\end{equation}
More details can be found in the Supplementary Material. The sampling overhead for QPD-based methods is $\frac{(1+\varepsilon)^2}{(1-\varepsilon)^2}$, while the overhead incurred by our method is still as low as $\frac{1}{(1-\varepsilon)^2}$, saying that our method requires fewer quantum resources.

\section{Generalized observable shift method}
{The specific protocol of the observable shift method with minimal sampling overhead is given by SDP as shown in Eq.~\eqref{eq:total_SDP_for_obs_shift}. However, when the size of the system increases, the computer memory for solving such SDP increases exponentially. In personal computers, a 5-qubit system is the largest problem size that SDP can solve. Achieving a specific error mitigation protocol on a large system is a crucial problem.}

{We present a scalable approach for creating an error mitigation protocol by exploiting the observable shift method recursively. More specifically, we demonstrate the capability to design a quantum protocol for mitigating depolarizing noises on arbitrary copies of noisy qudits, as evidenced in the subsequent Proposition~\ref{prop:kth moment n qudit}.}

\begin{proposition}~\label{prop:kth moment n qudit}
    {Given arbitrary $k$ copies of noisy states $\cN^\epsilon_{\rm DE}(\rho)^{\ox k}$, the $k$-th order moment $\tr[ \rho^k ]$ can be estimated by $f_k \trace{H \cC_k \left(\cN^\epsilon_{\rm DE}(\rho)^{\ox k}\right)} - t_k$, where the term $\trace{H \cC_k \left(\cN^\epsilon_{\rm DE}(\rho)^{\ox k}\right)}$ can be estimated by implementing a quantum retriever $\cC_k$ on noisy states and making measurements. Moreover, such $\cC_k$, $f_k$, $t_k$ can be recursively constructed as
\begin{align}
    f_k &= \frac{1}{(1 - \eps)^k},\\
    \cC_k &= {\rm id}_k + \sum_{l = 2}^{k  - 1} \binom{k}{l} (1 - \eps)^{l}\eps^{k - l} f_l \cR_{l}^\dag \circ \left(\cC_{l} \ox {\rm id}_{k - l} \right), \text{ and }\\
    t_k &= f_k \left[ \frac{\eps^k}{d^k} + \frac{k (1 - \eps) \eps^{k - 1}}{d^{k - 1}} 
    - \sum_{l = 2}^{k  - 1} \binom{k}{l} \frac{ (1 - \eps)^{l}\eps^{k - l} }{d^{k - l}} t_l \right]
,\end{align}
    for $\cC_2, t_2$ given in Proposition~\ref{prop:retrieve_DE} and some CP maps $\cR_l$.}
\end{proposition}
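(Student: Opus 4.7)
The proof will proceed by induction on $k$, with the base case $k=2$ provided by Proposition~\ref{prop:retrieve_DE}. The key algebraic reduction that drives the induction is the observation that, because $I/d$ is a scalar multiple of the identity, the ``partially depolarized'' tensor $\rho_S \otimes (I/d)_{\bar S}$ (with $\rho$ on sites in $S$ and $I/d$ on sites in $\bar S$) collapses under the moment observable to a single power of $\rho$ that depends only on $|S|$, not on where the depolarized sites sit. Concretely, writing $\cN_{\rm DE}^\eps = (1-\eps)\,\id + \eps\,\cR_{\max}$ with $\cR_{\max}(\cdot)=(I/d)\,\tr[\cdot]$, expanding $\cN_{\rm DE}^\eps(\rho)^{\otimes k}$ as a sum over subsets and using the fact that $\tr[H (\rho_S \otimes (I/d)_{\bar S})]$ depends only on $|S|$ yields
\begin{equation}
\tr\!\bigl[H\,\cN_{\rm DE}^\eps(\rho)^{\otimes k}\bigr] = \sum_{l=0}^{k}\binom{k}{l}(1-\eps)^{l}\eps^{k-l}\,\frac{\tr[\rho^{l}]}{d^{\,k-l}},
\end{equation}
a clean polynomial in $\eps$ whose coefficients involve only the scalar lower moments $\tr[\rho^l]$.

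From this identity the structure of the argument is dictated. Multiplying by $f_k=1/(1-\eps)^k$ isolates $\tr[\rho^k]$ from the $l=k$ term and sends the $l=0,1$ contributions to state-independent constants, which get absorbed into $t_k$ (yielding the $\eps^k/d^k$ and $k(1-\eps)\eps^{k-1}/d^{k-1}$ pieces in the claimed formula). The intermediate $l\in\{2,\ldots,k-1\}$ terms carry unwanted multiples of $\tr[\rho^l]$ that must be subtracted; by the inductive hypothesis each such $\tr[\rho^l]$ is retrievable from an $l$-block of noisy copies via the triple $(\cC_l,f_l,t_l)$. The recursive definition of $\cC_k$ implements precisely this cancellation: the term $\binom{k}{l}(1-\eps)^{l}\eps^{k-l}f_l\,\cR_l^\dagger\!\circ(\cC_l\otimes\id_{k-l})$ applies the inductive retriever on one $l$-qudit sub-block and then uses the CP map $\cR_l$ to symmetrize over the $\binom{k}{l}$ equivalent placements, producing under measurement of $H$ a contribution of the form $\binom{k}{l}(1-\eps)^{l}\eps^{k-l}(\tr[\rho^l]+t_l)/d^{\,k-l}$. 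Matching coefficients, the $\tr[\rho^l]$ pieces cancel for every $l\ge 2$ while the propagated $t_l$ constants aggregate (with the opposite sign) into exactly the stated expression for $t_k$.

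The main obstacle I anticipate is verifying that the operator sum defining $\cC_k$ is a \emph{bona fide} quantum channel. Each summand $\cR_l^\dagger\!\circ(\cC_l\otimes\id_{k-l})$ is a composition of CP maps (CP by the inductive hypothesis), but one must check that the overall combination respects the trace-scaling condition of Eq.~\eqref{eq:total_SDP_obs_TS} with scaling factor $f_k$, so that it can be implemented physically with sampling overhead $f_k$. The plan is to exhibit the maps $\cR_l$ explicitly as suitably normalized averages over qudit-position permutations (so that the symmetrization over the $\binom{k}{l}$ sub-blocks is realized by a mixed-unitary channel), and to verify the Choi-matrix constraints inductively: CP follows because permutation-symmetrization preserves positivity, and trace scaling follows because the weights $\binom{k}{l}(1-\eps)^l\eps^{k-l}f_l$ are precisely the Bernstein-type coefficients dual to the binomial expansion of $\cN_{\rm DE}^\eps$. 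Once these checks are in place, the inductive step closes and the recursion delivers $\cC_k$, $f_k=1/(1-\eps)^k$, and the claimed $t_k$ for every $k$, completing the proof.
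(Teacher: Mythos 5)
Your overall strategy coincides with the paper's: induction on $k$ anchored at Proposition~\ref{prop:retrieve_DE}, the binomial expansion $\trace{H\,\cN_{\rm DE}^\eps(\rho)^{\ox k}} = \sum_{l=0}^{k}\binom{k}{l}(1-\eps)^{l}\eps^{k-l}\trace{\rho^{l}}/d^{k-l}$, and a recursive cancellation of the intermediate moments $\trace{\rho^l}$ via the inductive retrievers $\cC_l$. That part, together with the bookkeeping that produces $f_k$ and $t_k$, is correct and matches the paper.

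The gap is in your treatment of $\cR_l$, which is where the actual difficulty of the proposition lives. For the $\trace{\rho^l}$ pieces to cancel against the contribution $+\binom{k}{l}(1-\eps)^l\eps^{k-l}\trace{\rho^l}/d^{k-l}$ coming from the ${\rm id}_k$ term, the correction $\cR_l^\dag\circ(\cC_l\ox{\rm id}_{k-l})$ must enter with a \emph{negative} sign; in the Heisenberg picture this forces $\cR_l(H_k)=-H_l\ox I_{k-l}/d^{k-l}$. Note that $\tr[H_k]=d>0$ while the target operator has trace $-d$, so no trace-preserving map --- in particular no mixed-unitary average over qudit-position permutations, as you propose --- can realize $\cR_l$. (Even dropping the sign, permutation conjugates of $H_k$ are all of the form $\tfrac{1}{2}(S_\tau+S_\tau^\dag)$ for $k$-cycles $\tau$, and no convex combination of these equals $H_l\ox I/d^{k-l}$.) As written, your symmetrization yields a positive contribution, which would \emph{add to} rather than cancel the unwanted terms; repairing the sign by placing negative coefficients in front of the summands of $\cC_k$ would destroy complete positivity and collapse the construction back to a quasi-probability decomposition, defeating the point of the proposition. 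The paper resolves this with Proposition~\ref{prop:cptn permutation transfer}: it spectrally decomposes the cyclic shift $S_k$ into eigenprojectors $\Pi_k^{(m)}$ with eigenvalues $\o_k^{m}$ and exhibits entrywise non-negative matrices $\widetilde{Q}$ satisfying $\sum_m \widetilde{Q}_{lm}\o_k^{m}=-\o_{k-1}^{l}$, which produces genuinely CP (non-trace-preserving) maps $\widetilde{\cT}_k$ with $\widetilde{\cT}_k(H_k)=-H_{k-1}\ox I/d$; composing these across orders gives $\cR_l$. You would need to supply an argument of this eigenvalue-manipulation kind --- the permutation-averaging ansatz does not close the inductive step.
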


{Note that the implementation of $\cC_k$ requires post-selection of measurement outcomes, as $\cC_k$ is completely positive but not necessarily trace-preserving. The detailed proof of Proposition~\ref{prop:kth moment n qudit} are deferred to Appendix~\ref{appendix:depo n qudit k moment}. We also numerically verify the feasibility of the number of state copies to be up to a hundred.}

\section{Comparison with existing protocols}
To extract high-order moment information from noisy states, one straightforward method is to apply an inverse operation of noisy channel $\cN^{-1}$ on quantum states to mitigate error, and then perform measurements over moment observable $H$, which is $\tr[H\rho^{\ox k}] = \tr[H \left(\cN^{-1} \right)^{\ox k} \left(\cN (\rho)^{\ox k} \right)]$. However, the map $\cN^{-1}$ might not be a physical quantum channel~\cite{jiang2021physical}, thus we cannot implement it directly on a quantum system. Fortunately, we can simulate such channel by quasi-probability decomposition, which decomposes such non-physical map into a linear combination of physical quantum channels, i.e., $\cN^{-1} = \sum_i c_i \cC_i$, where $c_i$ are the real coefficients and $\cC_i$ are physical quantum channels.  We need to note that $c_i$ can be negative. From the aspect of physical implementation, in the $t$-th round of total $T$ times of sampling, we first sample a quantum channel $\cC^{(t)}$ from $\{\cC_i\}$ with probability $\{|c_i|/g\}$, where $g = \sum_i |c_i|$, and apply it to noisy state $\cC^{(t)}\circ\cN(\rho)$. Then we take measurements and get results $o^{(t)}$. After $T$ rounds of sampling, we attain an estimation for the expectation value $\zeta = \frac{g}{T}\sum_{t=1}^T \text{\rm sgn}(c_i^{(t)})o^{(t)} = \tr[O\rho]$. The total sampling times $T$ is also given by Hoeffding's inequality as shown in Eq.~\eqref{eq:hoeffding}. The optimal sampling overhead $g_{\min}(\cN)$ is given by \cite{jiang2021physical, zhao2023information, regula2021operational}
\begin{align}
    g_{\min}(\cN) = \min\Big\{\sum_i |c_i| \mid& \cN^{-1} = \sum_i c_i \cC_i, \nonumber \\ 
    & c_i\in\mathbb{R}, \cC_i\in \text{CPTP}\Big\},
\end{align}
which can be obtained by SDP as displayed in the Supplementary Material.

When we apply the channel inverse method to retrieve the $k$-th moment, we should apply the inverse operation simultaneously on $k$ quantum systems, the corresponding optimal sampling overhead if given by $g_{\min}(\cN, k)$, which is 
\begin{align}\label{eq:g(N,k)}
    g_{\min}(\cN, k) = \min\Big\{\sum_i |c_i| \mid& (\cN^{-1})^{\ox k} = \sum_i c_i \cC_i, \nonumber \\
    & c_i\in\mathbb{R}, \cC_i\in \text{CPTP}\Big\},
\end{align}

\begin{figure}[ht]
\centering
\includegraphics[width=0.9\linewidth]{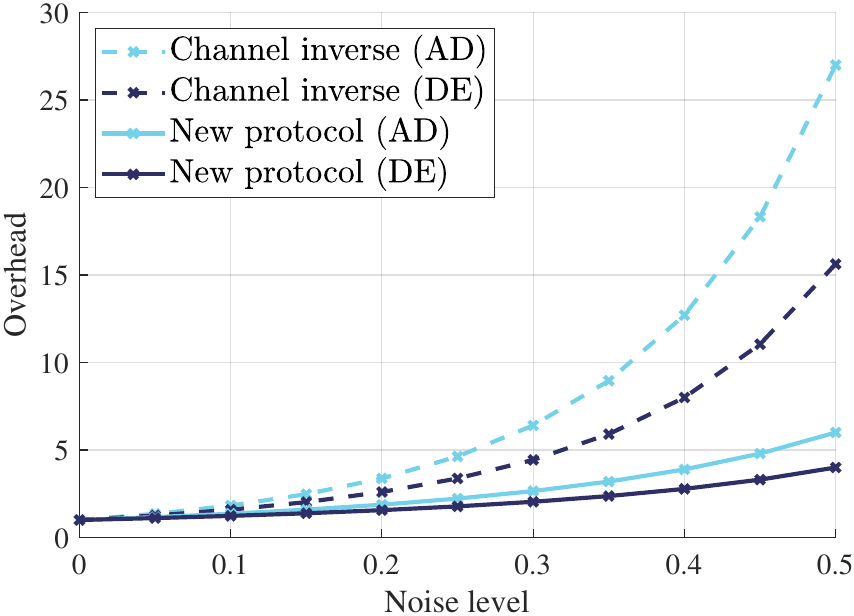}
    \caption{The sampling overhead respective to noise level for estimating $\tr[\rho^3]$ from amplitude damping noise corrupted state $\cN(\rho)$. The dashed curve refers to the overhead for the channel inverse, and the solid curve stands for the newly proposed method. The light blue and dark blue curves represent results from amplitude damping (AD) and depolarizing channel (DE), respectively.}
    \label{fig:overhead_AD_k=3}
\end{figure}

With Eq.\eqref{eq:f(N,k)} and Eq.~\eqref{eq:g(N,k)}, we can make a comparison of the sampling overhead between our method and the conventional QPD channel inverse method, which leads to Lemma~\ref{lemma:advantage}. 

\begin{lemma}\label{lemma:advantage}
    For arbitrary invertible quantum noisy channel $\cN$, and moment order $k$, we have 
    \begin{equation}
        f_{\min}(\cN, k) \le g_{\min}(\cN, k)
    \end{equation}
\end{lemma}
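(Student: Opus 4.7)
The plan is to take an arbitrary QPD $(\cN^{-1})^{\ox k} = \sum_i c_i \cC_i$ with $\sum_i |c_i| = g$ and exhibit a feasible triple $(\cC, f, t)$ of the observable-shift SDP~\eqref{eq:total_SDP_for_obs_shift} with $f = g$. Taking the infimum over QPDs on the right then yields $f_{\min}(\cN, k) \le g_{\min}(\cN, k)$.

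First, I would rewrite the observable-shift constraint $(\cN^{\ox k})^\dagger \cC^\dagger(H) = (H + tI)/f$. Because $\cN$ is trace preserving and invertible, $\cN^{-1}$ is also trace preserving, so $((\cN^{-1})^{\ox k})^\dagger$ is unital. Applying this adjoint to both sides turns the constraint into the equivalent form $\cC^\dagger(H) = A/f + (t/f) I$, where $A := ((\cN^{-1})^{\ox k})^\dagger(H) = \sum_i c_i \cC_i^\dagger(H)$. Since $\cC$ is CPTP, $\cC^\dagger$ is positive and unital, hence $\mu_{\min}(H) I \le \cC^\dagger(H) \le \mu_{\max}(H) I$, where $\mu_{\min}, \mu_{\max}$ are the extreme eigenvalues of $H$. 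So the task reduces to two sub-problems: (i) choose $f, t$ such that the spectrum of $A/f + (t/f)I$ lies in $[\mu_{\min}(H), \mu_{\max}(H)]$, and (ii) exhibit a CPTP $\cC$ realizing the required action on $H$.

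For (i), the key estimate is $\lambda_{\max}(A) - \lambda_{\min}(A) \le g(\mu_{\max}(H) - \mu_{\min}(H))$. Indeed, for any unit $\ket{\psi}$, $\bra{\psi} A \ket{\psi} = \sum_i c_i \tr[H \cC_i(\proj{\psi})]$, and each $\cC_i(\proj{\psi})$ is a density operator so $\tr[H\cC_i(\proj{\psi})] \in [\mu_{\min}(H), \mu_{\max}(H)]$; maximizing the sign choice per term gives the bound. This makes the linear inequalities $g\mu_{\min}(H) - \lambda_{\min}(A) \le t \le g\mu_{\max}(H) - \lambda_{\max}(A)$ consistent, so with $f = g$ and any admissible $t$ the target operator $Y := (A + tI)/g$ satisfies $\text{spec}(Y) \subseteq [\mu_{\min}(H), \mu_{\max}(H)]$.

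For (ii), I would use a diagonal measure-and-prepare construction. Writing $H = U D_H U^\dagger$ and $Y = V D_Y V^\dagger$, I pick a row-stochastic matrix $Q$ with $\sum_j Q_{ij}(D_H)_{jj} = (D_Y)_{ii}$, which exists because each $(D_Y)_{ii}$ lies in the convex hull of $\{(D_H)_{jj}\}$. Then the map $\cC^\dagger(X) := V\big(\sum_i \big(\sum_j Q_{ij}\bra{j}U^\dagger X U\ket{j}\big)\proj{i}\big)V^\dagger$ is CP unital, so its dual $\cC$ is CPTP, and by construction $\cC^\dagger(H) = V D_Y V^\dagger = Y$. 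Substituting back, $(\cN^{\ox k})^\dagger \cC^\dagger(H) = ((\cN^{\ox k}))^\dagger(A)/g + (t/g)I = H/g + (t/g)I = (H+tI)/g$, confirming feasibility with $f = g$. The main obstacle is precisely this sufficiency step---showing that any Hermitian $Y$ with $\text{spec}(Y) \subseteq [\mu_{\min}(H), \mu_{\max}(H)]$ arises as the image of $H$ under some CPTP map. The diagonal construction resolves this, with only routine care needed for degenerate spectra and for ensuring the stochastic kernel is chosen consistently on each eigenspace.
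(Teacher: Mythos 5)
Your proof is correct, and while it shares the paper's overall strategy---turning a quasi-probability decomposition of $(\cN^{-1})^{\ox k}$ into a feasible point of the observable-shift program via a shift by $tI$---your execution is genuinely different and considerably more complete than the paper's. The paper's argument shifts $((\cN^{-1})^{\ox k})^\dagger(H)$ by $tI$ until it is positive semidefinite, asserts that the resulting object is a CP unital-scaling map whose adjoint factors as $f$ times a CPTP map, and stops; it never explains why prescribing the image of the single operator $H$ yields a completely positive map, and, more importantly, it never relates the resulting scaling factor $f$ to the QPD overhead $\sum_i\abs{c_i}$, which is the actual content of the inequality. You supply both missing pieces. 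First, the spectral-width estimate $\lambda_{\max}(A)-\lambda_{\min}(A)\le g\left(\mu_{\max}(H)-\mu_{\min}(H)\right)$, obtained by splitting $\bra{\psi}A\ket{\psi}=\sum_i c_i\tr[H\,\cC_i(\proj{\psi})]$ according to the signs of the $c_i$, is exactly what makes the window $g\mu_{\min}(H)-\lambda_{\min}(A)\le t\le g\mu_{\max}(H)-\lambda_{\max}(A)$ nonempty and hence pins $f=g$; this is the step that actually ties the observable-shift overhead to the QPD overhead. Second, the measure-and-prepare kernel built from a row-stochastic matrix gives an explicit CP unital map with $\cC^\dagger(H)=(A+tI)/g$, settling the existence question the paper glosses over (note the necessary condition $\mathrm{spec}(Y)\subseteq[\mu_{\min}(H),\mu_{\max}(H)]$ from positivity and unitality is also shown to be sufficient, which is the nontrivial direction). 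Working with an arbitrary decomposition and taking the infimum at the end is also cleaner than assuming an optimal one is attained. The only point worth adding explicitly is that $g=\sum_i\abs{c_i}\ge\abs{\sum_i c_i}=1>0$ because $(\cN^{-1})^{\ox k}$ is trace preserving, so $f=g$ is an admissible positive value in Eq.~\eqref{eq:f(N,k)}.
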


Lemma~\ref{lemma:advantage} implies that in the task of extracting the $k$-th moment $\tr[\rho^k]$ from noisy states, the proposed method requires fewer sampling times, i.e., consumes fewer quantum resources. The detailed proof is displayed in Supplementary Material. It has been proven by Regula \etal in Ref.~\cite{regula2021operational} that the sampling overhead for simulating a trace preserving linear map is equivalent to its diamond norm. Specifically, in the case of inverse operation $\cN^{-1}$, we have $g_{\min}(\cN) = \|\cN^{-1}\|_\diamond$. Note that the diamond norm is multiplicativity with respect to tensor product~\cite{regula2021operational, jiang2021physical}, i.e., $\|(\cN^{-1})^{\ox k}\|_\diamond = \|\cN^{-1}\|_\diamond^k$. Thus, we conclude that the optimal sampling overhead for retrieving high-moment from noisy states increases exponential respect to the moment order $k$, which is 
\begin{align}
    g_{\min}(\cN, k) &= g_{\min}(\cN^{\ox k}) = \|(\cN^{-1})^{\ox k}\|_\diamond
    = \|\cN^{-1}\|_\diamond^{k} \nonumber\\
    &= g_{\min}(\cN)^k.
\end{align}

In order to illustrate the advantage of the proposed method over the channel inverse method in terms of sampling overhead, we conduct a numerical experiment to extract the third moment $\tr[\rho^3]$ from amplitude damping noise channel with different noise levels. The results are shown in Fig.~\ref{fig:overhead_AD_k=3}. The dashed and solid curves stand for the sampling overhead for the channel inverse and observable shift method, respectively. {We also made a comparison with another QPD-based method which was proposed in~\cite{zhao2023information}. The details are shown in the Appendix~\ref{appen:comparison_with_diff_methods}.}

Compared with the QPD-based inverse operation method, the proposed observable shift method has at least two-fold advantages. First, our method can achieve a lower sampling overhead, i.e., $f_{\min}(\cN, k) \le g_{\min}(\cN, k)$. In previous, we have shown examples where $f_{\min}(\cN, k)$ is strictly smaller, indicating the effectiveness of the newly proposed observable shift technique.
It is also observed that the optimal protocol given by our method is generally an entangled one. In contrast to the uselessness of entanglement in QPD~\cite{jiang2021physical, regula2021operational}, our method demonstrates the power of entanglement for tackling noise.
Second, protocols given by our method are more hardware friendly as they only need to repeat one fixed quantum channel, whereas the QPD-based methods have to sample from multiple quantum channels and implement each of them.
 
\section{Application to Fermi-Hubbard model}

The Fermi-Hubbard model is a key focus in condensed matter physics due to its relevance in metal-insulator transitions and high-temperature superconductivity~\cite{Cade_2020,Dagotto_1994}. Recent studies have shown that entanglement spectroscopy can be utilized to extract critical exponents and phase transitions in the Fermi-Hubbard model~\cite{Kokail_2021,linke2018measuring,szasz2020chiral}. As the model is characterized by a broad range of correlated electrons, it necessitates multi-determinant and highly accurate calculations~\cite{ferreira2022quantum,szasz2020chiral} which hence demand ingenious methods of quantum noise control.

In a physical system such as a metallic crystal with an $n_x\times n_y$ square lattice, each lattice point, known as a site, is assigned an index. The Hubbard model Hamiltonian takes on a fermionic form in second quantization,
\begin{equation}\label{eq:fermihubbard}
\begin{aligned}
    H_{\text{Hubbard}} &= -J\sum_{\langle i,j\rangle,\sigma}(a^\dag_{i\sigma}a_{j\sigma}+a^\dag_{j\sigma}a_{i\sigma})\\ &+ U\sum_i n_{i\uparrow}n_{i\downarrow} + H_{\text{local}},
\end{aligned}
\end{equation}
where $a^\dag_{i\sigma},a_{i\sigma}$ are fermionic creation and annihilation operators; $n_{i\sigma} = a^\dag_{i\sigma},a_{i\sigma}$, are the number operators; the notation $\langle i,j\rangle$ associates adjacent sites in the $n_x\times n_y$ rectangular lattice; $\sigma\in\{\uparrow,\downarrow\}$ labels the spin orbital. The first term in Eq.~\eqref{eq:fermihubbard} corresponds to the hopping term, where $J$ denotes the tunneling amplitude. The second term involves the on-site Coulomb repulsion, represented by $U$. The final term in the equation defines the local potential resulting from nuclear-electron interaction, which we have chosen to be the Gaussian form~\cite{Wecker_2015}.
\begin{equation}
    H_{\text{local}} = \sum_{j=1} \sum_{\nu = \uparrow,\downarrow} \epsilon_{j,\nu} n_{j, \nu}; \quad \epsilon_{j,\nu} = -\lambda_{\nu} e^{-\frac{1}{2}(j-m_{\nu})^2 / \sigma_\nu^2}.
\end{equation}

In the following, we consider a specific 3-site (6-qubit) Fermi-Hubbard Hamiltonian with $J=2,U=3$ and $\lambda_{\uparrow, \downarrow}=3, 0.1$, $m_{\uparrow, \downarrow} = 3, 3$. The standard deviation $\sigma_v$ for both spin-up and -down potentials are set to $1$ guaranteeing a charge-spin symmetry around the center site ($i=2$) of the chain system.

The ground state entanglement spectroscopy of the model identifies the topological-ordering signatures of the system which requires high-precision entropy estimations over each bipartite sector of the entire system. We show that the mitigation of the quantum noise can be achieved and therefore, enhance the determination of $\tr[\cN(\rho_A)^2]$, via our proposed method, which is displayed in the Supplementary Material.

Fig.~\ref{fig:simulation} displays the sampling distribution with and without error mitigation. The orange curve refers to the estimation distribution of second-order information $\tr[\rho^2]$ from noisy states, and the cyan curve shows the estimation distribution with error mitigation. And the black dash line is the exact value of $\tr[\rho^2]$.
\begin{figure}[]
\centering
\includegraphics[width=\linewidth]{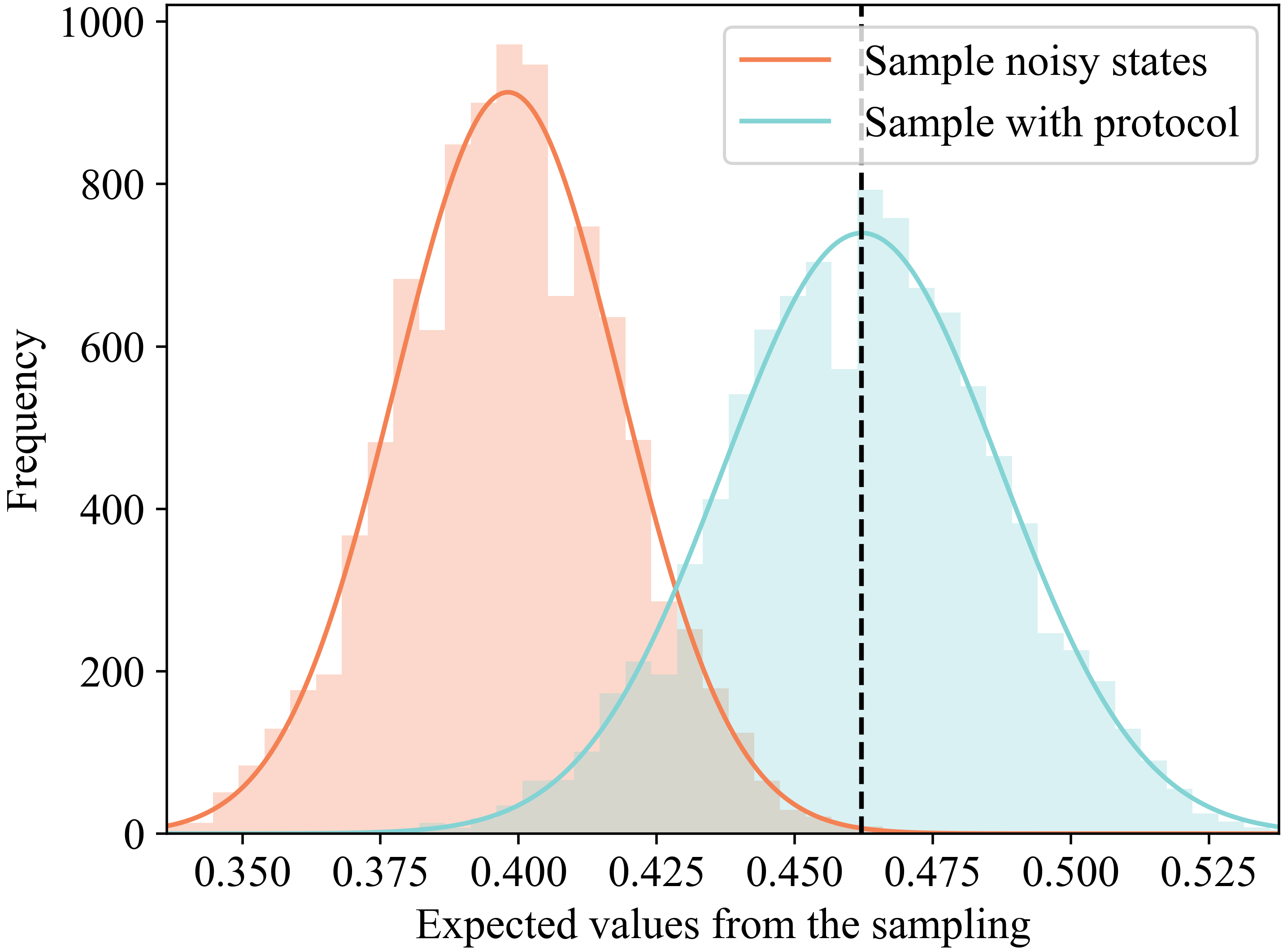}
    \caption{Simulation of high-order moment $\tr[\rho^2]$ estimation. The curves are calculated from sampling. The orange curve represents the estimation from depolarizing noised state $\cN(\rho)$ with noise level $\epsilon=0.1$. The cyan curve is the estimation with the proposed error mitigation method. The black dashed line stands for the exact value of $\tr[\rho^2]$.}
    \label{fig:simulation}
\end{figure}

\vspace{5mm}
\section{Conclusion and discussion}
In this study, we establish that when quantum states are distorted by noises, the original moment information can still be retrieved through post-processing if and only if the noise is invertible. 
Furthermore, our proposed method, called observable shift, outperforms QPD-based techniques in two aspects: (1) The proposed method requires lower quantum sampling complexity than the existing one, which implies the superiority of entangled protocols over product protocols. This contrasts with the multiplicativity of overhead observed in QPD-based methods for quantum error mitigation. (2) The observable shift method is easier to implement than the QPD-based method as it only involves a single quantum operation, which makes our method more friendly to quantum devices. {We also propose the construction of a protocol to retrieve the depolarizing channel of large-size quantum systems.} Our findings have implications for the dependable estimation of non-linear information in quantum systems and can influence various applications, including entanglement spectroscopy and ground-state property estimation. 


For further work, one important task is to improve the scalability of the observable shift method, which makes this approach more practical. Also, investigating the approximate version of retrieving non-linear features is also interesting. We expect the observable shift technique can be incorporated into more algorithms and protocols to boost efficiency. {It will be also interesting to explore other error mitigation methods~\cite{Cai2022,Endo2018a,Cai2020,Koczor2020,Cai2021,Huggins2021} to extract non-linear features from noisy quantum states.}

\textbf{Code availability.--}
In our numerical experiments, we computed SDPs by the package CVX~\cite{cvx, gb08}. The code for numerical simulation is operated on Paddle Quantum~\cite{paddlequantumn}. The code has been uploaded to Github, which can be found at {\url{https://github.com/Dragon-John/high-moment-info}}.

\textbf{Acknowledgements.--}
The authors would like to thank Chengkai Zhu and Chenghong Zhu for their valuable discussion. B.Z would like to thank Giulio Chiribella for helpful his comments. This work was supported by the Start-up Fund (No. G0101000151) from The Hong Kong University of Science and Technology (Guangzhou), the Guangdong Provincial Quantum Science Strategic Initiative (No. GDZX2303007), and the Education Bureau of Guangzhou Municipality.

\bibliography{references}

\clearpage

\vspace{2cm}
\onecolumngrid
\vspace{2cm}
\begin{center}
{\textbf{\large Appendix for \\Retrieving non-linear features from noisy quantum states}}
\end{center}

\appendix

\section{Symbols and notations} \label{sec:symbols_and_notations}
In this work, we focus on linear maps $\cN$ having the same input and output dimension. A linear map $\cN$ is called Hermitian-preserving (HP) if, for any introduced reference system $R$, the product map $id_R\ox \cN$ maps any Hermitian operator to another Hermitian operator. If additionally the map also preserves the positivity of the operators. We say it is completely positive (CP). Besides, a linear map $\cN$ is trace scaling (TS) means $\tr[\cN(X)] = a\tr[X]$ for all $X\in \cL_A, a\in\mathbb{R}$. If $a=1$, such map $\cN$ is called trace-preserving (TP). We call a linear map $\cN_{A\rightarrow A'}$ a quantum channel if it is both \textit{completely positive} and \textit{trace preserving} (CPTP). More, by saying $\cN$ is unital scaling (US), we mean $\tr[\cN(I)] = b\tr[I]$, where $I$ refers to the identity operator, and $b\in\mathbb{R}$. Again, if $b=1$, we say $\cN$ is unital preserving (UP). A sketch of the relationships among these classes of linear maps has been shown in the Venn diagram~\ref{fig:Venn}.

\noindent\textbf{Remark:} Adjoint map has the following properties~\cite{khatri2020principles,zhao2023information} 
\begin{itemize}
    \item The adjoint of a CP map is CP;
    \item The adjoint of a TP map is UP;
    \item The adjoint of a UP map is TP;
    \item The adjoint of a TS map is US;
    \item The adjoint of a US map is TS;
\end{itemize}

\begin{figure}[hptb]
    \centering
    \includegraphics[width = 0.6\linewidth]{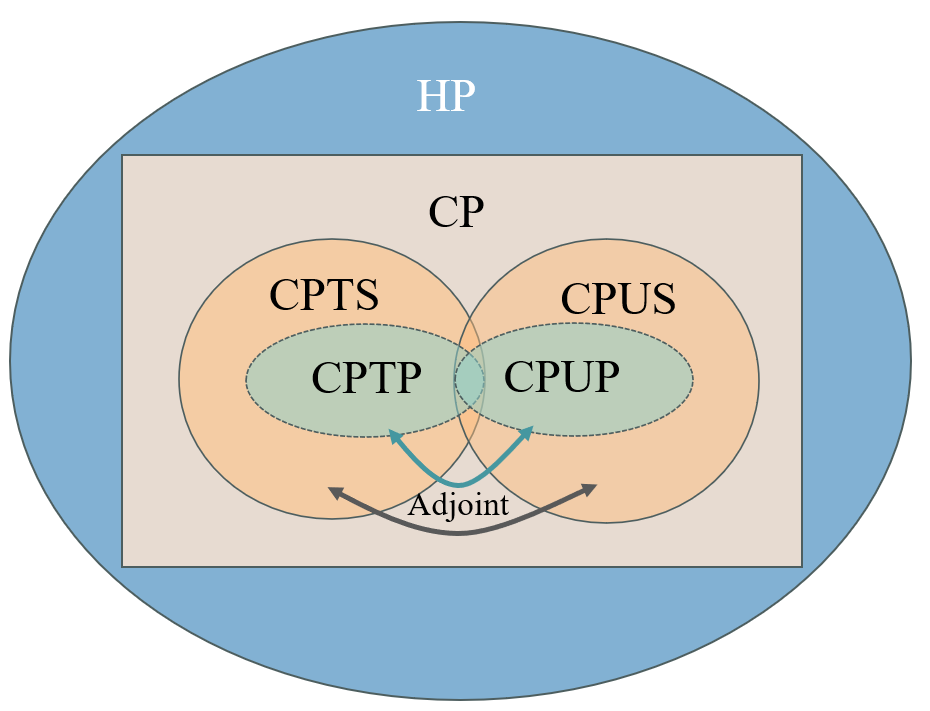}
    \caption{The Venn diagram summarizing relationships among different classes of linear maps stated in the section of notations.}
    \label{fig:Venn}
\end{figure}
\clearpage

\section{Proof of the existence of observable for high-order moment} 
\begin{lemma}~\label{lemma:exitance_of_H}
    Suppose $\rho$ is an $n$-qubit state. Then for a positive integer $k$, there exists an $nk$-qubit Hermitian matrix $H$ such that $\trace{ H \rho^{\ox k} } = \trace{\rho^k}$.
\end{lemma}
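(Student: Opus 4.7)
The plan is to exhibit an explicit Hermitian operator $H$ built from the cyclic permutation (shift) operator on $k$ tensor factors. Let $W_k$ denote the operator on $(\CC^{2})^{\ox nk}$ that cyclically permutes the $k$ copies of the $n$-qubit register, i.e.\ $W_k(\ket{\psi_1}\ox\cdots\ox\ket{\psi_k}) = \ket{\psi_k}\ox\ket{\psi_1}\ox\cdots\ox\ket{\psi_{k-1}}$. The standard swap-trick identity, which can be verified by expanding in a product basis and reading off the trace, gives
\begin{equation}
\tr\!\bigl[ W_k\, \rho^{\ox k}\bigr] = \tr[\rho^k].
\end{equation}

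The obstacle is that $W_k$ is unitary but not Hermitian once $k \geq 3$, so it is not yet a valid moment observable in the sense of the lemma. The fix is to symmetrize. Define
\begin{equation}
H \;:=\; \tfrac{1}{2}\bigl(W_k + W_k^{\dagger}\bigr),
\end{equation}
which is manifestly Hermitian on $nk$ qubits. Since $W_k^{\dagger}$ is the inverse cyclic shift, the same swap-trick computation applied in the reverse direction yields $\tr[W_k^{\dagger}\rho^{\ox k}] = \tr[\rho^k]$; equivalently, one can argue that $\tr[W_k^{\dagger}\rho^{\ox k}] = \overline{\tr[W_k\rho^{\ox k}]} = \overline{\tr[\rho^k]}$, and this complex conjugate equals $\tr[\rho^k]$ because $\rho^k$ is Hermitian and positive semidefinite, hence its trace is real.

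Combining the two identities,
\begin{equation}
\tr\!\bigl[H\,\rho^{\ox k}\bigr] \;=\; \tfrac{1}{2}\tr\!\bigl[W_k\rho^{\ox k}\bigr] + \tfrac{1}{2}\tr\!\bigl[W_k^{\dagger}\rho^{\ox k}\bigr] \;=\; \tr[\rho^k],
\end{equation}
which establishes the lemma with an explicit construction. I expect no serious obstacle here; the only subtlety to highlight is the necessity of symmetrizing $W_k$, together with the observation that $\tr[\rho^k] \in \RR$ for density matrices, which is precisely what allows the symmetrization to preserve the target value rather than halve it or cancel it.
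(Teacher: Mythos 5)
Your construction is essentially identical to the paper's: the paper also takes the cyclic permutation operator $S_k$, verifies $\tr[S_k\,\rho^{\ox k}]=\tr[S_k^{\dagger}\rho^{\ox k}]=\tr[\rho^k]$, and sets $H=\tfrac{1}{2}(S_k+S_k^{\dagger})$. The only cosmetic difference is that the paper checks the adjoint identity by an index computation using Hermiticity of $\rho$ (arriving at $\tr[(\rho^*)^k]=\tr[\rho^k]$), whereas you invoke the equivalent fact that $\tr[\rho^k]$ is real; both are correct.
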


\begin{proof}
    Consider the computational basis $\{ \ket{x} \}_x$ for an $n$-qubit system. The desired Hermitian matrix can be constructed from an $nk$-qubit cyclic permutation matrix
\begin{equation} \label{eq:s_k decomp}
    S_k = \sum_{\bm x \coloneqq x_1 \cdots x_k = 0}^{2^{nk} - 1} \bigotimes_{j = 1}^k \ketbra{x_{\pi(j)}}{x_j}
,\end{equation}
    where $\pi = (1, \ldots, k)$ is a permutation function in cyclic notation. Note that the conjugate transpose of this matrix permutes the subsystems in an inverse order i.e.
\begin{equation} \label{eq:s_k^dag decomp}
    S_k^\dagger = \sum_{\bm x} \bigotimes_{j = 1}^k \ketbra{x_j}{x_{\pi(j)}} = \sum_{\bm x} \bigotimes_{j = 1}^k \ketbra{x_{\pi^{-1}(j)}}{x_j}
.\end{equation}
    In the rest of this proof, we will prove that 
\begin{equation}
    \trace{ S_k \rho^{\ox k} } = \trace{ S_k^\dag \rho^{\ox k} } = \trace{ \rho^k }
\end{equation}
    for all $\rho$, and hence the result follows by setting $H = \frac{1}{2} \left( S_k + S_k^\dag \right)$.

    To start with, decomposing $\rho = \sum_{x, y} C_{xy} \ketbra{x}{y}$ gives
\begin{align}
    \rho^k &= \sum_{\bm x, \bm y} \prod_{j = 1}^k C_{x_j y_j} \ketbra{x_j}{y_j} = \sum_{\bm x, y_k} \left( \prod_{j = 1}^{k - 1} C_{x_j x_{\pi(j)}} \right) c_{x_k y_k} \ketbra{x_1}{y_k} \implies \trace{ \rho^k } = \sum_{\bm x} \prod_{j = 1}^{k} C_{x_j x_{\pi(j)}}, \\
    \rho^{\ox k} &= \sum_{\bm x, \bm y}  \bigotimes_{j = 1}^k C_{x_j y_j} \ketbra{x_j}{y_j}
.\end{align}
    Apply $S_k$ on $\rho^{\ox k}$ and we have
\begin{align}
    S_k \rho^{\ox k} &= \sum_{\bm x, \bm y, \bm z} \bigotimes_{j=1}^k \ketbra{z_{\pi(j)}}{z_j} \cdot C_{x_j y_j} \ketbra{x_j}{y_j} = \sum_{\bm x, \bm y} \bigotimes_{j=1}^k C_{x_j y_j} \ketbra{x_{\pi(j)}}{y_j} \\
    \implies \trace{S_k \rho^{\ox k}} &= \sum_{\bm x} \prod_{j=1}^k C_{x_j x_{\pi(j)}} = \trace{ \rho^k }
.\end{align}
    Lastly, the statement $\trace{S_k^\dagger \rho^{\ox k}}$ holds from the fact that $\rho$ is Hermitian and thus
\begin{equation}
    \trace{S_k^\dagger \rho^{\ox k}} = \sum_{\bm x} \prod_{j=1}^k C_{x_j x_{\pi^{-1}(j)}} = \sum_{\bm x} \prod_{j=1}^k C_{x_{\pi^{-1}(j)} x_j}^* = \sum_{\bm x} \prod_{j=1}^k C_{x_j x_{\pi(j)}}^* = \trace{ (\rho^*)^k } = \trace{ \rho^k }.
\end{equation}
\end{proof}


\section{Proof of Theorem \ref{theorem:NS_condition}}

The theorem is established through the vectorization of observables and their effective rank. To start with, we present the definitions of these two concepts.

\begin{definition}{\rm (Vectorization ~\cite{khatri2020principles})}
The vectorization of a matrix $M = \sum_{ij} M_{ij} \ketbra{i}{j}$ is defined as
\begin{equation}
    \ket{M} \coloneqq \sum_{i, j} M_{ij} \ket{j} \ox \ket{i}.
\end{equation}
\end{definition}

\begin{definition}{\rm (Hermitian effective rank)} For a $2^n\times 2^n$ Hermitian matrix $O$ in a bipartite system $AB$, the effective rank $\cR(O)$ is defined as
\begin{equation}
    \cR(O) \coloneqq \rank(\proj{O}_A),
\end{equation}
where $\proj{O}_A = \tr_B[\proj{O}_{AB}]$.
\end{definition}

\begin{definition}{\rm (Matrix of channel ~\cite{zhao2023information})}
A quantum channel $\cN$ can be represented in the matrix form $M_\cN$ by its Kraus operators $\{E_k\}$. Given $\cN(\ketbra{i}{j}) = \sum_k E_k \ketbra{i}{j} E_k^\dagger$, we can represent it in vector form, which is $\text{vec}(\cN(\ketbra{i}{j})) = \sum_k \bar{E_k}\ket{j} \ox E_k \ket{i} = (\sum_k \bar{E_j}\ox E_k)\ket{j}\ox\ket{i}$, where $\bar{E_k}$ is the complex conjugate of $E_k$. Here we can define 
\begin{equation}
    M_\cN \coloneqq \sum_k \bar{E_k}\ox E_k
\end{equation}
as the matrix representation of channel $\cN$. The rank of such matrix $M_\cN$ is called the channel rank of $\cN$.
\end{definition}

In addition, here are a few results that will be used in the following proof of Theorem~\ref{theorem:NS_condition}.

\begin{lemma}\label{lemma:simplfy}{\rm \cite{zhao2023information}}
    Suppose $H$ is an observable and $\cM$ is a Hermitian-preserving is a map. Then $\tr[(\cM(\rho)O)]=\tr[\rho O]$ for any state $\rho$ if and only if $\cM^\dagger(O) = O$.
\end{lemma}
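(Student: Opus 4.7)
The plan is to prove the biconditional by handling the two directions separately, with the nontrivial content sitting in the ``only if'' direction. For the easy ``if'' direction, I would simply invoke the defining property of the adjoint: assuming $\cM^\dagger(O)=O$, one has $\tr[\cM(\rho)O]=\tr[\rho\,\cM^\dagger(O)]=\tr[\rho O]$ for every $\rho$, with no use yet of the Hermitian-preserving hypothesis.

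For the ``only if'' direction, I would start from the hypothesis that $\tr[\cM(\rho)O]=\tr[\rho O]$ for all states $\rho$ and push the map across the inner product, so that the condition becomes $\tr[\rho\,(\cM^\dagger(O)-O)]=0$ for every state $\rho$. Writing $A\vcentcolon=\cM^\dagger(O)-O$, the task reduces to showing that $A$ must be the zero operator. Here is where the Hermitian-preserving hypothesis enters: since $\cM$ is Hermitian-preserving, its adjoint $\cM^\dagger$ is Hermitian-preserving as well (this is the standard duality for Hermitian inner products on operator spaces), and together with $O$ being Hermitian this guarantees that $A$ is Hermitian.

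I would then close the argument by a density-of-states argument: the set of density matrices spans the real vector space of Hermitian operators, so $\tr[\rho A]=0$ for all $\rho$ already forces $A=0$. Concretely, specializing $\rho=\proj{\psi}$ gives $\bra{\psi}A\ket{\psi}=0$ for every unit vector $\ket{\psi}$; choosing $\ket{\psi}$ to range over the eigenbasis of $A$ in its spectral decomposition shows every eigenvalue vanishes, so $A=0$ and therefore $\cM^\dagger(O)=O$.

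I do not anticipate any serious obstacle; the only step requiring care is the remark that Hermitian-preservation passes to the adjoint, which ensures that $A$ is Hermitian and thus that the trace-zero test is strong enough to conclude $A=0$. Without that observation one would only obtain a weaker conclusion, so I would make sure this is the explicit bridge between the two displayed lines of the argument.
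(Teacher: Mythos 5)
Your argument is correct. Note that the paper does not actually prove this lemma: it is imported verbatim from the cited reference \cite{zhao2023information}, so there is no in-paper proof to compare against. Your proof is the standard one, and both directions are sound: the ``if'' direction is the definition of the adjoint, and the ``only if'' direction correctly reduces to showing $A=\cM^\dagger(O)-O$ vanishes given $\tr[\rho A]=0$ for all states. One small remark: your insistence that Hermiticity of $A$ is the essential bridge is slightly stronger than necessary. Over a complex Hilbert space, $\bra{\psi}A\ket{\psi}=0$ for all unit vectors already forces $A=0$ for an arbitrary operator $A$ by the polarization identity, so the trace-zero test against pure states is strong enough even without invoking that $\cM^\dagger$ is Hermitian-preserving. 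That said, your route through the spectral decomposition of the Hermitian $A$ is equally valid and arguably more transparent, and the observation that Hermitian-preservation passes to the adjoint is correct and consistent with the properties listed in the paper's notation section.
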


\begin{lemma} \label{lem:full effective rank}
    Suppose $H$ is an $nk$-qubit Hermitian matrix with $k$ subsystems, such that $\trace{H \rho^{\ox k}} = \trace{\rho^k}$ for any $n$-qubit state $\rho$. Then the effective rank of $H$ for arbitrary subsystem $A$ is full.
\end{lemma}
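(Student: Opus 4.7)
My plan is to translate the effective-rank statement into an injectivity property of a linear map and then derive that injectivity by polarizing the defining identity $\tr[H\rho^{\ox k}]=\tr[\rho^k]$. By the standard vectorization correspondence, full effective rank of $H$ on a single $n$-qubit subsystem $A$ (so that the vectorized $A$-system has dimension $4^n$) is equivalent to the injectivity of the map $\Phi_A\colon L(A)\to L(B)$, $V\mapsto\tr_A[(V\ox I_B)H]$, where $B$ collects the remaining $k-1$ subsystems; equivalently, I must show that whenever $V\in L(A)$ is nonzero there exists a $W\in L(B)$ with $\tr[(V\ox W)H]\neq 0$. I will assume the contrary and aim for a contradiction.

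Since $\rho^{\ox k}$ is invariant under every permutation of its $k$ tensor factors, the symmetrization $H_{\rm sym}=\frac{1}{k!}\sum_{\sigma\in S_k}V_\sigma HV_\sigma^\dagger$ still satisfies the same moment identity, and I may therefore work with a fully $S_k$-symmetric $H$ for the polarization step (the reduction of the rank claim from $H$ to $H_{\rm sym}$ is a short separate argument noted at the end). Extending $\tr[HX^{\ox k}]=\tr[X^k]$ to all operators $X$ by polynomial continuation, substituting $X=Y+tV$, and reading off the coefficient of $t^1$ on both sides yields
\begin{equation*}
    \sum_{i=1}^{k}\tr\!\bigl[H\,Y^{\ox(i-1)}\ox V_{(i)}\ox Y^{\ox(k-i)}\bigr]=k\,\tr[VY^{k-1}],
\end{equation*}
where $V_{(i)}$ denotes $V$ placed in the $i$-th tensor factor. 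By the $S_k$-invariance of $H$ every term on the left is equal, so
\begin{equation*}
    \tr\!\bigl[H\,(V\ox Y^{\ox(k-1)})\bigr]=\tr[VY^{k-1}]\qquad\text{for all }V,Y\in L(A).
\end{equation*}

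Specializing the hypothesized orthogonality $\tr[(V\ox W)H]=0$ to $W=Y^{\ox(k-1)}$ and combining with the displayed identity forces $\tr[VY^{k-1}]=0$ for every $Y\in L(A)$. The linear span of $\{Y^{k-1}:Y\in L(A)\}$ is however all of $L(A)$: polarizing the power map $Y\mapsto Y^{k-1}$ recovers every ordered product $Y_1 Y_2\cdots Y_{k-1}$, and such products generate the matrix algebra $L(A)$. Therefore $V$ is Hilbert--Schmidt orthogonal to every element of $L(A)$ and so $V=0$, contradicting our assumption. Since the argument depends only on the permutation symmetry of $H$, the same conclusion holds for any single-copy choice of the subsystem $A$. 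The main anticipated obstacle is the rigorous transfer of the rank claim from $H$ to $H_{\rm sym}$---symmetrization can in principle alter the operator Schmidt rank across the $A|B$ cut, so one must decompose $H$ along its $S_k$-isotypic components and use the positivity of $\tr_{\tilde B}[\proj{H}]$ to show the antisymmetric components cannot open up a kernel direction in $\tilde A$; the polarization and span steps at the heart of the argument are then essentially mechanical.
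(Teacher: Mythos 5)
Your reduction of full effective rank to injectivity of $V\mapsto\tr_A[(V\ox I_B)H]$ is correct, and your polarization argument does prove full effective rank for the \emph{symmetrized} observable $H_{\rm sym}$: the identity $\tr[H_{\rm sym}(V\ox Y^{\ox(k-1)})]=\tr[VY^{k-1}]$ plus the fact that $(k-1)$-th powers span $L(A)$ (true, though polarization gives only symmetrized products $\sum_\sigma Y_{\sigma(1)}\cdots Y_{\sigma(k-1)}$, not every ordered product; one can instead use idempotents such as $E_{ii}$ and $E_{ii}+E_{ij}$) settles that case. The genuine gap is the transfer from $H_{\rm sym}$ back to $H$, which you first call ``a short separate argument'' and then concede is the main obstacle. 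This transfer is not a formality --- it is the entire content of the lemma. The hypothesis $\tr[H\rho^{\ox k}]=\tr[\rho^k]$ pins down only $H_{\rm sym}$; the remainder $G=H-H_{\rm sym}$ is an arbitrary Hermitian operator with $\tr[G\rho^{\ox k}]=0$ (for $k=2$ and Pauli strings $P$, exactly the operators $\sum_{P<Q}g_{PQ}(P\ox Q-Q\ox P)$ with $g$ real antisymmetric), and injectivity is not stable under arbitrary additive perturbations. Your own polarization identity shows where the argument stalls: if $\tr[(V\ox W)H]=0$ for all $W$ with $V$ in the fixed subsystem $A$ (position $1$, say), the first-order coefficient only yields $\sum_{i=2}^{k}\tr[H(Y^{\ox(i-1)}\ox V\ox Y^{\ox(k-i)})]=k\tr[VY^{k-1}]$, i.e.\ $V$ is detected by $H$ in \emph{some} tensor slot, not necessarily in the slot the lemma fixes. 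Closing this requires exploiting the specific structure of $G$ (for $k=2$ the coefficient matrix across the cut is $\tfrac12 I+g$ with $g$ real antisymmetric, invertible because its spectrum is $\tfrac12\pm i\lambda$); your sketched ``isotypic decomposition plus positivity'' points in a plausible direction but is not carried out, and that is where all the work lies.

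For comparison, the paper's proof never symmetrizes: it Schmidt-decomposes $\ket{H}=\sum_j c_j\ket{u_j}_A\ket{v_j}_B$ across the cut $A|B$ and argues that a vectorized state $\ket{\rho}$ in $\ker(\proj{H}_A)$ would force $\braket{\rho}{u_j}=0$ for all $j$, hence $\braket{\rho^{\ox k}}{H}=\tr[\rho^k]=0$, contradicting $\tr[\rho^k]>0$. That route works with the actual $H$ (so the antisymmetric part never has to be isolated), at the cost of testing the kernel only against vectorized states rather than all of $L(A)$; your polarization trick tests against everything but only for $H_{\rm sym}$. As written, the proposal establishes the lemma for $H_{\rm sym}$ but not for the $H$ in the statement.
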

\begin{proof}
    The property $\trace{XY} = \braket{X}{Y}$ implies 
\begin{equation}
    \braket{\rho^{\ox k}}{H} = \trace{\rho^k}
.\end{equation}
    Let $A$ be an $n$-qubit subsystem and $B$ be the rest of the subsystems. With respect to bipartite system $AB$, applying Schmidt decomposition on $\ket{H}$ gives
\begin{equation}
    \braket{\rho^{\ox k}}{H} = \sum_{j} c_j \braket{\rho}{u_j}_A \cdot \braket{\rho^{\ox (k - 1)}}{v_j}_B \textrm{, where } \ket{H} = \sum_j c_j \ket{u_j}_A \ket{v_j}_B
.\end{equation}
    Note that $\ketbra{H}{H}_A = \sum_{j} |c_j|^2 \ketbra{u_j}{u_j}$. Then the assumption $\ket{\rho} \in \ker(\ketbra{H}{H}_A)$ would imply
\begin{equation}
    \sum_j |c_j|^2 \braket{\rho}{u_j} \braket{u_j}{\rho} = 0
    \implies \braket{\rho}{u_j} = 0 \,\,\forall j 
    \implies \braket{\rho^{\ox k}}{H} = 0 
    \implies \trace{\rho^k} = 0
,\end{equation}
    which forms a contradiction as $\trace{\rho^k} > 0$ for finite $k$. That is, the vectorization of all $n$-qubit quantum states are in the column space of $\ketbra{H}{H}_A$, and hence $\ketbra{H}{H}_A$ is of full rank.
\end{proof}

We are now ready to present the main theorem.

\vspace{0.5cm}
\noindent\textbf{Theorem ~\ref{theorem:NS_condition}} {(\rm Necessary and sufficient condition for existence of protocol)}
\textit{Given a noisy channel $\cN$, there exists a quantum protocol to extract the $k$-th moment $\tr[\rho^k]$ for any state $\rho$ if and only if the noisy channel $\cN$ is invertible.}
\vspace{0.5cm}

\begin{proof}
    From Lemma~\ref{lemma:exitance_of_H}, there exists an observable $H$ such that $\tr[H\rho^{\ox k}] = \tr[\rho^k]$. If the noisy channel $\cN$ is invertible, then there exists the inverse operation of noisy channel $\cN^{-1}$, such that we can apply the inverse operation on quantum system, and perform measurements with respect to moment observable $H$. Thus, the high-order moment can be retrieved, which is
    \begin{align}
        \tr[H \cdot (\cN^{-1})^{\ox k}\circ\cN^{\ox k}(\rho^{\ox k})] = \tr[H\rho^{\ox k}] = \tr[\rho^k].
    \end{align}
    The proof for the "if" part is completed. 

    For the "only if" part, let's assume there exists a quantum protocol to extract $k$-th order information $\tr[\rho^k]$ from noisy state $\cN(\rho)$, with non-invertible noise channel $\cN$. Then, from the definition of quantum protocol, the existence of quantum protocol to extract information $\tr[\rho^k]$ refers to the existence of an HP map $\cD$ such that the following equation holds
    \begin{equation}
        \tr[H \cdot \cD\circ\cN^{\ox k}(\rho^{\ox k})] = \tr[H \rho^{\ox k}] = \tr[\rho^ k].
    \end{equation}
    From Lemma~\ref{lemma:simplfy}, we convert the problem into that there exists an HP map $\cD$ such that the following equation holds
    \begin{equation}
        (\cN^{\ox k})^\dagger \circ \cD^\dagger (H) = H.
    \end{equation}
    Since $\cD$ is an HP map, then the adjoint map $\cD^\dagger$ is an HP map as well, thus we have $D^\dagger(H) = Q$, where $Q$ is hermitian. Now we have
    \begin{equation}
        (\cN^{\ox k})^\dagger (Q) = H
    \end{equation}

    Next we vectorize the Hermitian matrix $H$ and $Q$ into $\ket{H}$ and $\ket{Q}$, the corresponding channel $(\cN^{\ox k})^\dagger$ in vectoriztion representation is $M_{\cN^\dagger}^{\ox k}$, where $M_{\cN^\dagger}=\sum_k\bar{E_k}^\dagger\ox E_k^\dagger$, $\{E_k\}$ is the Kraus representation of channel $\cN$ operator, and $\bar{E_k}$ is the complex conjugate of $E_k$. Thus the equation becomes
    \begin{equation}
       M_{\cN^\dagger}^{\ox k}\ket{Q} = \ket{H},
    \end{equation}
    which is equivalent to 
    \begin{equation}
        M_{\cN^\dagger}^{\ox k} \ketbra{Q}{Q} M_{\cN^\dagger}^{\ox k\dagger} = \ketbra{H}{H}.
    \end{equation}
    Then take partial trace on both side,
    \begin{equation}
         \tr_B[M_{\cN^\dagger}^{\ox k} \ketbra{Q}{Q} M_{\cN^\dagger}^{\ox k \dagger}] = \tr_B[\ketbra{H}{H}]
    ,\end{equation}
    where $B$ is an arbitrary subsystem that contains $k - 1$ copies of $\rho$. We can simplify it into
    \begin{equation}
        M_{\cN^\dagger}\tr_B[ I\ox M^{\ox (k-1)}_{\cN^\dagger}\ketbra{Q}{Q}I\ox M_{\cN^\dagger}^{\ox(k-1)\dagger}] M_{\cN^\dagger}^\dagger = \tr_B[\ketbra{H}{H}].
    \end{equation}
    For further step, let's take $P = \tr_B[ I\ox M^{\ox (k-1)}_{\cN^\dagger}\ketbra{Q}{Q}I\ox M_{\cN^\dagger}^{\ox(k-1)\dagger}]$, and $T=\tr_B[\ketbra{H}{H}]$, thus
    \begin{equation}
        M_{\cN^\dagger} P M_{\cN^\dagger}^\dagger = T.
    \end{equation}
    From Lemma~\ref{lem:full effective rank}, the matrix $T$ is full rank, thus the rank of $M_{\cN^\dagger} P M_{\cN^\dagger}^\dagger$ is full as well. Due to the fact that $\rank(XY)\le \min (\rank(X), \rank(Y))$, the rank on the left hand side is
    \begin{equation}
        \rank(M_{\cN^\dagger} P M_{\cN^\dagger}^\dagger)\le\min \{ \rank(M_{\cN^\dagger}), \rank(P), \rank(M_{\cN^\dagger}^\dagger) \}
    ,\end{equation}
    implying that $M_{\cN^\dagger}$, $P$ and $M_{\cN^\dagger}^\dagger$ have full ranks. Then we have
\begin{equation}
    M_{\cN^\dagger} = \sum_k\bar{E_k}^\dagger\ox E_k^\dagger = \sum_k(\bar{E_k}\ox E_k)^\dagger = M_\cN ^\dagger
,\end{equation}
   since conjugate transpose won't affect the rank of a matrix, which means $M_\cN$ has full rank. This concludes that $\cN$ is invertible, and eventually, a contradiction is formed. Thus, no quantum protocol exists to extract high-order moment $\tr[\rho^k]$.
\end{proof}

\section{Proof for Lemma~\ref{lemma:CPTS}}

\begin{lemma}{\rm ~\cite{zhao2023information}}\label{lemma:HPTS}
    Given a quantum channel $\cN$ and an observable $O$, there exists a HPTS map $\cD$ 
    such that $\cN^\dagger\circ\cD^\dagger(O)=O$ if and only if $O\in\cN^\dagger(\cL^H_{A'})$.
\end{lemma}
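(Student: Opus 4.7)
The plan is to prove the equivalence in two directions separately, with the reverse direction being where an explicit construction is needed.

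For the forward direction ($\Rightarrow$), suppose an HPTS map $\cD$ satisfies $\cN^\dagger \circ \cD^\dagger(O) = O$. Because $\cD$ is Hermitian-preserving, so is its adjoint $\cD^\dagger$ (by the standard duality listed in the ``Symbols and notations'' remark), which therefore sends the Hermitian observable $O$ to a Hermitian operator $O' := \cD^\dagger(O) \in \cL^H_{A'}$. The hypothesized equation then reads $\cN^\dagger(O') = O$, exhibiting $O \in \cN^\dagger(\cL^H_{A'})$.

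For the reverse direction ($\Leftarrow$), given any $O' \in \cL^H_{A'}$ with $\cN^\dagger(O') = O$, I would construct the desired $\cD$ via the rank-one adjoint ansatz
\begin{equation}
    \cD^\dagger(X) := \tr(A X)\, O',
\end{equation}
where $A$ is a Hermitian operator to be chosen. This $\cD^\dagger$ is automatically Hermitian-preserving, since $\tr(AX)$ is real for every Hermitian $X$ and $O'$ is Hermitian. To ensure $\cD^\dagger$ is unital-scaling---equivalently that $\cD$ is trace-scaling---one needs $\cD^\dagger(I) = \tr(A)\,O' \propto I$; the cleanest choice is $\tr(A) = 0$, which yields scaling factor zero, permitted since the TS scalar is allowed to range over all of $\RR$. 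To secure the target action $\cN^\dagger \circ \cD^\dagger(O) = O$, one additionally requires $\tr(AO) = 1$, so that $\cD^\dagger(O) = O'$ and the identity follows from $\cN^\dagger(O') = O$. When $O$ is not proportional to $I$, both constraints are simultaneously realizable: take $A$ proportional to the traceless part $O - (\tr O / d)\, I$ and normalize so that $\tr(AO) = 1$. The degenerate case $O \propto I$ is handled separately by choosing $\cD = \id$, reducing the target identity to $\cN^\dagger(I) = I$, which holds because $\cN$ is a quantum channel.

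The principal subtlety, and the main obstacle, is simultaneously enforcing the trace-scaling property of $\cD$ and the prescribed action on $O$: a naive construction that only fixes $\cD^\dagger(O)$ without controlling $\cD^\dagger(I)$ generically breaks the unital-scaling structure required of $\cD^\dagger$. The rank-one ansatz above threads this needle by exploiting the full freedom of the TS scaling constant---in particular, its ability to vanish---so that the two linear constraints on $A$ become compatible in the generic case, with the $O \propto I$ exception absorbed by the trivial choice $\cD = \id$. This construction follows the spirit of related arguments in~\cite{zhao2023information}.
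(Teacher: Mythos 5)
The paper never proves Lemma~\ref{lemma:HPTS}: it is imported from Ref.~\cite{zhao2023information} and used as a black box in the proof of Lemma~\ref{lemma:CPTS}, so there is no in-paper argument to compare yours against. Judged against the paper's own definitions, your proof is correct. The forward direction is the standard fact that the adjoint of a Hermitian-preserving map is Hermitian-preserving (note the paper's remark does not actually list this item, but it is true since the Choi matrix of the adjoint inherits Hermiticity), so $\cD^\dagger(O)$ is a Hermitian preimage of $O$ under $\cN^\dagger$. For the converse, your rank-one ansatz $\cD^\dagger(X)=\tr(AX)\,O'$ checks out: its Choi matrix is $A^T\ox O'$, hence Hermitian, so $\cD$ is HP; the condition $\tr(A)=0$ gives $\cD^\dagger(I)=0=0\cdot I$, making $\cD$ trace-scaling with scalar $a=0$, which the paper's definition of TS permits since it only requires $a\in\RR$; and the two constraints on $A$ are compatible because $\tr\bigl[(O-(\tr O/d)I)\,O\bigr]=\tr(O^2)-(\tr O)^2/d>0$ whenever $O\not\propto I$, with the $O\propto I$ case correctly absorbed by $\cD=\id$ and unitality of $\cN^\dagger$. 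The one caveat worth flagging is that your witness is deliberately degenerate: a rank-one map with vanishing trace-scaling factor certifies the existence statement literally, but it is surely not what Ref.~\cite{zhao2023information} constructs (something closer to a generalized inverse of $\cN$ on the relevant subspace), and if one were to feed such a $\cD$ into the downstream argument of Lemma~\ref{lemma:CPTS}, the resulting shift $t$ and overhead $f$ would generally be far from optimal. Since the lemma asks only for existence, this does not affect correctness.
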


\vspace{0.5cm}
\noindent\textbf{Lemma~\ref{lemma:CPTS}}
\textit{Given an invertible quantum channel $\cN$ and an observable $O$, there exists a quantum channel $\cC$ and real coefficients $t, f$ such that}
\begin{equation}
    \cN^\dagger\circ\cC^\dagger(O)=\frac{1}{f}\left(O + tI\right).
\end{equation}

\begin{proof}
    Since the noise channel $\cN$ is invertible, the image of the adjoint map of the noise channel $\cN^\dagger(\cL)$ has full dimension and hence $O\in \cN^\dagger(\cL)$. From Lemma~\ref{lemma:HPTS}, there exists a HPTS map $\cD$ such that $\cN^\dagger\circ \cD^\dagger(H)=H$. Then we take the trick of observable shift, which is
    \begin{align}
        \cN^\dagger \left(\cD^\dagger(H)\right) =H+tI-tI.
    \end{align}
    Since $\cN$ is CPTP, its adjoint map $\cN^\dagger$ is unital-preserving, implying
    \begin{equation}
        \cN^\dagger\circ \cD^\dagger(H+tI)=H+tI.
    \end{equation}
    Denote $\Tilde{\cD}^\dagger(H) = \cD^\dagger(H)+tI$. Since $\cD$ is an HPTS map, thus $\cD^\dagger$ is a HPUS, i.e., $\cD^\dagger(I) = a I$, where $a$ is a real coefficient. Correspondingly, 
    \begin{equation}
        \Tilde{\cD}^\dagger (I) = \cD^\dagger(I) + tI = (a+t)I
    \end{equation}
    is also an HPUS map. Due to the fact that the adjoint of an HPUS map is HPTS, meaning $\Tilde{\cD}$ is an HPTS map. As long as the value $t$ no smaller than the absolute minimum eigenvalue of $\cD^\dagger(H)$, i.e., $t\ge |\min\{eig(\cD^\dagger(H))\}|$, the map $\Tilde{\cD}$ reduces to CPTS. Thus, we have 
    \begin{equation}
        \cN^\dagger\circ \Tilde{\cD}^\dagger(H) =H+tI
    \end{equation}
    Note that any CPTS map and be written as a coefficient $f$ times a CPTP map $\cC$, which is $\Tilde{\cD} = f\cC$, meaning
    \begin{align}
        f\cN^\dagger\circ \cC^\dagger(H) &=H+tI\\
        \Rightarrow \cN^\dagger\circ \cC^\dagger(H) &=\frac{1}{f}(H+tI)
    \end{align}
        
    which complete the proof.
\end{proof}

\section{Dual SDP for sampling overhead}
The original SDP is 

\begin{subequations}
\begin{align}
    f_{\min}(\cN, k) &= \min  \quad f\\
    \st &\quad J_{\Tilde{\cC}} \ge 0 \\
    &\quad \tr_{C}[J_{\Tilde{\cC}_{BC}}]=f I_{B}\\
    &\quad J_{\cF_{AC}} \equiv \tr_B[(J_{\cN_{AB}^{\ox k}}^{T_B}\ox I_C) (I_A\ox J_{\Tilde{\cC}_{BC}})]\\
    &\quad \tr_C[(I_A\ox H_C^T)J_{\cF_{AC}}^T] = H_A + tI_A
\end{align}
\end{subequations}
The Lagrange function is 
\begin{align}
    L(f, J_{\Tilde{\cC}},t, M, K) &= f + \langle M, \tr_C[J_{\Tilde{\cC}}]-f I_B \rangle + \langle K, \tr_C[(I_A\ox H_C^T)J_{\cF_{AC}}^T] - H_A - tI_A \rangle\\
    &= f(1-\tr[M]) + \langle M\ox I, J_{\Tilde{\cC}} \rangle - \langle K, H\rangle - t\tr[K] + \langle K, \tr_C[(I_A\ox H_C^T)J_{\cF_{AC}}^T]\rangle,
\end{align}
where $M,K$ are the dual variables. The last term can be expanded as  
\begin{align}
    \langle K, \tr_A[(H_A^T\ox I_C)J_{\cF_{AC}}^T]\rangle &= \tr[K \tr_C[(I_A\ox H_C^T)J_{\cF_{AC}}^T]]\\
    &= \tr[K\ox I (I_A\ox H_C^T)J_{\cF_{AC}}^T]\\
    &= \tr[(K_A\ox H_C^T)J_{\cF_{AC}}^T]\\
    &= \tr[(K_A^T\ox H_C)J_{\cF_{AC}}]\\
    &= \tr[(K_A^T\ox H_C) \tr_B[(J_{\cN_{AB}^{\ox k}}^{T_B}\ox I_C) (I_A\ox J_{\Tilde{\cC}_{BC}})]]\\
    &= \tr[(K_A^T\ox I_B\ox H_C) (J_{\cN_{AB}^{\ox k}}^{T_B}\ox I_C) (I_A\ox J_{\Tilde{\cC}_{BC}})]\\
    &= \tr[\tr_A[(K_A^T\ox I_B\ox H_C)(J_{\cN_{AB}^{\ox k}}^{T_B}\ox I_C)] J_{\Tilde{\cC}_{BC}}]
\end{align}
Thus, the Lagrange function can be expressed as
\begin{align}
    \cL(f, J_{\Tilde{\cC}},t, M, K)
    &= f(1-\tr[M]) + \langle M\ox I, J_{\Tilde{\cC}} \rangle - \langle K, H\rangle - t\tr[K] + \langle \tr_A[(K_A^T\ox I_B\ox H_C)(J_{\cN_{AB}^{\ox k}}^{T_B}\ox I_C)], J_{\Tilde{\cC}_{BC}}\rangle\\
    &= f(1-\tr[M])  - \langle K, H\rangle - t\tr[K] + \langle  M\ox I + \tr_A[(K_A^T\ox I_B\ox H_C)(J_{\cN_{AB}^{\ox k}}^{T_B}\ox I_C)], J_{\Tilde{\cC}}\rangle
\end{align}
The corresponding Lagrange dual function is
\begin{align}
    f(M,K) &= \inf_{J_{\Tilde{\cD}}, t, f} \cL(f,J_{\Tilde{\cC}},t, M,K)\\
    &=  - \langle K, H\rangle + \inf_{J_{\Tilde{\cD}}, t, f}( f(1-\tr[M]) - t\tr[K] + \langle  M\ox I + \tr_A[(K_A^T\ox I_B\ox H_C)(J_{\cN_{AB}^{\ox k}}^{T_B}\ox I_C)], J_{\Tilde{\cC}}\rangle)
\end{align}
In order to bound the infimum term, every sub-term inside the infimum should greater greater or equal to 0, thus, we have 
\begin{align}
    &1-\tr[M]\ge 0, \quad t\tr[K] = 0\\
    &\langle  M\ox I + \tr_A[(K_A^T\ox I_B\ox H_C)(J_{\cN_{AB}^{\ox k}}^{T_B}\ox I_C)], J_{\Tilde{\cC}}\rangle \ge 0
\end{align}
Since $J_{\Tilde{\cC}} \ge 0$, we have $M\ox I + \tr_A[(K_A^T\ox I_B\ox H_C)(J_{\cN_{AB}^{\ox k}}^{T_B}\ox I_C)]\ge 0$. Thus, we arrive at the following dual SDP: 
\begin{subequations}\label{eq:dual}
\begin{align}
    f_{\min}(\cN, k) &= \max  \quad -\tr[KH]\\
    \st 
    &\quad \tr[M] \le 1\\
    &\quad \tr[K] = 0\\
    &\quad M\ox I + \tr_A[(K_A^T\ox I_B\ox H_C)(J_{\cN_{AB}^{\ox k}}^{T_B}\ox I_C)]\ge 0
\end{align}
\end{subequations}

\section{Optimal sampling overhead for depolarizing channel (Proposition~\ref{prop:retrieve_DE})}

The depolarizing channel is one of the most common noises in quantum computers which maps part of the quantum state into a maximally mixed state. The $1$-qubit depolarizing channel can be written in the following form:
\begin{equation}
    \cN^\epsilon_{\rm DE}(\rho) = (1-\epsilon)\rho + \epsilon\frac{I}{2},
\end{equation}

where $\epsilon\in [0,1]$ is the noisy level and $I$ is identity matrix. The retrieving protocol and retrieving overhead for such noise channel was well studied in ~\cite{zhao2023information}. Here we will focus on retrieving the second order information $\tr[\rho^2]$ from two-copy of depolarizing noised state $\cN(\rho\ox\rho) = \cN^\epsilon_{\rm DE}\ox \cN^\epsilon_{\rm DE}(\rho\ox\rho)$, after applying such channel, the state becomes
\begin{align}
    \cN(\rho\ox\rho) &= \cN^\epsilon_{\rm DE}(\rho)\ox\cN^\epsilon_{\rm DE}(\rho) \\
    &= [(1-\epsilon) \rho + \epsilon \frac{I}{2}] \ox [(1-\epsilon) \rho + \epsilon \frac{I}{2}]\\ 
    &= (1-\epsilon)^2 \rho\ox\rho + \frac{\epsilon(1-\epsilon)}{2} I\ox\rho + \frac{(1-\epsilon) \epsilon}{2} \rho\ox I + \frac{\epsilon^2}{4} I\ox I.\label{eq:depo_state}
\end{align}

\vspace{0.5cm}
\noindent\textbf{Proposition~\ref{prop:retrieve_DE}} 
\textit{Given noisy states $\cN^\epsilon_{\rm DE}(\rho)^{\ox 2}$, and error tolerance $\delta$, the second order moment $\tr[\rho^2]$ can be estimated by $f\trace{H\cC\left(\cN^\epsilon_{\rm DE}(\rho)^{\ox 2}\right)} - t$, with optimal sample complexity $\mathcal{O}(1/(\delta^2 (1-\epsilon)^4)$, where $f=\frac{1}{(1-\epsilon)^2}$, $t = \frac{1-(1-\epsilon)^2}{2(1-\epsilon)^2}$. The term $\trace{H\cC\left(\cN^\epsilon_{\rm DE}(\rho)^{\ox 2}\right)}$ can be estimated by implementing a quantum retriever $\cC$ on noisy states and making measurements over moment observable $H$. Moreover, there exists an ensemble of unitary operations $\{p_j, U_j\}_j$ such that the action of the retriever $\cC$ can be interpreted as,}
\begin{equation}
    \cC(\cdot) = \sum_{j=1} p_j 
    U_j \cdot U_j^\dagger
\end{equation}
    
\begin{proof}
    To begin with, we give the explicit form of the retriever $\cC$. 
    \begin{equation}
        \cC(\cdot) = \sum_{j=1}^{12} \frac{1}{12} 
        U_j \cdot U_j^\dagger,
    \end{equation}
    where all the probabilities $p_j$ are the same, i.e., $p_j=\frac{1}{12}$, and the corresponding unitaries $U_j$ are 
    \begin{equation}
    \begin{aligned}
        U_1 &= I\ox I; \quad U_2 = X\ox X; \quad U_3 = Y\ox Y; \quad U_4 = Z\ox Z;\\
        U_5 &= \frac{1}{2}
        \begin{pmatrix}
            -i & 1 & 1 & i\\
            i & 1 & -1 & i\\
            i & -1 & 1 & i\\
            -i & -1 & -1 & i
        \end{pmatrix}; \
        U_6 = \frac{1}{2}
        \begin{pmatrix}
            i & -1 & -1 & -i\\
            i & 1 & -1 & i\\
            i & -1 & 1 & i\\
            i & 1 & 1 & -i
        \end{pmatrix}; 
        \ U_7 = \frac{1}{2}
        \begin{pmatrix}
            i & i & i & i\\
            -1 & 1 & -1 & 1\\
            -1 & -1 & 1 & 1\\
            -i & i & i & -i
        \end{pmatrix};
        U_8 = \frac{1}{2}
        \begin{pmatrix}
            -i & i & i & -i\\
            1 & 1 & -1 & -1\\
            1 & -1 & 1 & -1\\
            i & i & i & i
        \end{pmatrix}; \\
        U_9 &= \frac{1}{2}
        \begin{pmatrix}
            i & 1 & 1 & -i\\
            -i & 1 & -1 & -i\\
            -i & -1 & 1 & -i\\
            i & -1 & -1 & -i
        \end{pmatrix}; \ U_{10} = \frac{1}{2}
        \begin{pmatrix}
            -i & -1 & -1 & i\\
            -i & 1 & -1 & -i\\
            -i & -1 & 1 & -i\\
            -i & 1 & 1 & i
        \end{pmatrix};
        U_{11} = \frac{1}{2}
        \begin{pmatrix}
            i & -i & -i & i\\
            1 & 1 & -1 & -1\\
            1 & -1 & 1 & -1\\
            -i & -i & -i & -i
        \end{pmatrix}; \
        U_{12} = \frac{1}{2}
        \begin{pmatrix}
            -i & -i & -i & -i\\
            -1 & 1 & -1 & 1\\
            -1 & -1 & 1 & 1\\
            i & -i & -i & i
        \end{pmatrix}.
    \end{aligned}
    \end{equation}
    
    Next, we are going to prove the retriever $\cC$ has the optimal sampling overhead. The optimal complexity $\mathcal{O}(1/(\delta^2 (1-\epsilon)^4)$ implies the optimal sampling overhead is $\frac{1}{(1-\epsilon)^2}$. To prove estimating the second order moment by implementing quantum channel $C$ with optimal sampling overhead is $\frac{1}{(1-\epsilon)^2}$, we divide the process into two steps. For the first step, we prove that quantum channel $\cC$ is a feasible solution to the prime problem with sampling overhead $\frac{1}{(1-\epsilon)^2}$, meaning the optimal overhead $\gamma^*$, $\gamma^* \le\frac{1}{(1-\epsilon)^2}$. Then we are going to find one feasible solution to the dual problem with overhead $\frac{1}{(1-\epsilon)^2}$, meaning $\gamma^* \ge\frac{1}{(1-\epsilon)^2}$. Thus we have the optimal overhead $\gamma^*=\frac{1}{(1-\epsilon)^2}$, which complete the proof.

    Now, we are going to proof the first part. At first, we need to write the Choi matrix form of the retriever $\cC$, which is 
    \begin{equation}
        J_\cC = \frac{1}{4} IIII + \frac{1}{12}(XX+YY+ZZ)\ox(XX+YY+ZZ).
    \end{equation}
    For an arbitrary state $\rho\ox\rho$, after applying the noise $\cN$ onto it, we denote the state as $\rho' = \cN(\rho\ox\rho)$, thus 
    \begin{align}
        \cC\circ \cN(\rho\ox\rho) &= \cC(\rho')  = \tr_A[(\rho'^T \ox II) J_\cC]\\
        &=  \tr_A\Big[(\rho'^T \ox II) (\frac{1}{4} IIII  + \frac{1}{12}(XX+YY+ZZ)\ox(XX+YY+ZZ))\\
        &=  \frac{1}{4}\tr[\rho'^T] II
       + \frac{1}{12}\tr[\rho'^T(XX+YY+ZZ)](XX+YY+ZZ))\\
        &=  \frac{1}{4}II +  \frac{1}{12}\tr[\rho'^T(XX+YY+ZZ)](XX+YY+ZZ)) \Big]\label{eq:D_circ_N_depo}
    \end{align}
    Note that the above equations utilized the face that transpose operation is trace preserving, i.e., $\tr[\rho^T]=\tr[\rho]=1$. Since the matrix $XX+YY+ZZ$ is symmetry, thus we have
    \begin{align}
        \cC\circ \cN(\rho\ox\rho) &= \frac{1}{4}II +  \frac{1}{12} \tr[\rho'^T(XX+YY+ZZ)^T](XX+YY+ZZ))\\
        &= \frac{1}{4}II +  \frac{1}{12} \tr[\rho'(XX+YY+ZZ)](XX+YY+ZZ)) \label{eq:DE_after_transpose}
    \end{align}
    The trace term can be calculated by substituting Eq.~\eqref{eq:depo_state}, we have 
    \begin{align}
        \tr[\rho'(XX+YY+ZZ)] &= \tr[((1-\epsilon)^2 \rho\ox\rho + \frac{\epsilon(1-\epsilon)}{2} I\ox\rho + \frac{(1-\epsilon) \epsilon}{2} \rho\ox I + \epsilon^2 (I\ox I)) (XX+YY+ZZ)]\\
        &=(1-\epsilon)^2\tr[\rho\ox\rho (XX+YY+ZZ)] + \frac{\epsilon(1-\epsilon)}{2}\tr[X\ox \rho X + Y\ox \rho Y + Z\ox \rho Z]\nonumber\\
        &\quad +\frac{(1-\epsilon)\epsilon}{2}\tr[\rho X\ox X + \rho Y\ox Y + \rho Z\ox Z] + \frac{\epsilon^2}{4}\tr[XX+YY+ZZ]\\
        &= (1-\epsilon)^2\tr[\rho\ox\rho (XX+YY+ZZ)] .\label{eq:depo_mid}
    \end{align}
    In the second equation, since $X,Y,Z$ are all traceless Hermitian matrices, all terms are zeros except the first term. Replace the equation back to Eq.~\eqref{eq:DE_after_transpose}, then
    \begin{align}
        \cC\circ \cN(\rho\ox\rho) &=  \frac{1}{4}II + \frac{(1-\epsilon)^2}{12}\tr[\rho\ox\rho (XX+YY+ZZ)] (XX+YY+ZZ).
    \end{align}
    The information $\tr[\rho^2]$ is estimated from $ \tr[H(\rho\ox\rho)]$, where $H = \frac{1}{2}(II+XX+YY+ZZ)$ is cyclic permutation operator (in the 2-qubit case, $H$ is just SWAP operator). It is easy to check that 
    \begin{align}
        \tr[H \cC\circ\cN(\rho\ox\rho)] &= \frac{1}{(1-\epsilon)^2} \Big[\frac{1}{4}\tr[H*II] + \frac{(1-\epsilon)^2}{12}\tr[\rho\ox\rho (XX+YY+ZZ)]\tr[H*(XX+YY+ZZ)] \Big].
    \end{align}
    We can quickly get $\tr[H] = 2$ and $\tr[H*(XX+YY+ZZ)]=6$. Then
    \begin{align}
        \tr[H \cC\circ\cN(\rho\ox\rho)] &= \frac{1}{2} +\frac{(1-\epsilon)^2}{2}\tr[\rho\ox\rho (XX+YY+ZZ)]\\
        &= (1-\epsilon)^2 \Big[\frac{1}{2(1-\epsilon)^2} + \frac{1}{2}\tr[\rho\ox\rho (XX+YY+ZZ)]\Big]\\
        &= (1-\epsilon)^2 \Big[\frac{1}{2(1-\epsilon)^2}-\frac{1}{2} + \frac{1}{2} +  \frac{1}{2}\tr[\rho\ox\rho (XX+YY+ZZ)]\Big]\\
        &= (1-\epsilon)^2 \Big[\frac{1}{2(1-\epsilon)^2}-\frac{(1-\epsilon)^2}{2(1-\epsilon)^2} +  \frac{1}{2}\tr[\rho\ox\rho (XX+YY+ZZ+II)]\Big]\\
        &= (1-\epsilon)^2 \Big[\frac{2\epsilon^2-\epsilon^2}{2(1-\epsilon)^2} + \tr[\rho\ox\rho H]\Big]\\
        &= (1-\epsilon)^2 \Big[\frac{2\epsilon^2-\epsilon^2}{2(1-\epsilon)^2} + \tr[\rho^2]\Big].
    \end{align}
    The desired high-order moment $\tr[\rho^2]$ value equals to
    \begin{equation}
        \tr[\rho^2] = f\tr[H \cC\circ\cN(\rho\ox\rho)] - t
    \end{equation}
    where $f = \frac{1}{(1-\epsilon)^2}$ is the sampling overhead and $t = \frac{1-(1-\epsilon)^2}{2(1-\epsilon)^2}$ is the shifted distance. In order to estimate the value with the error $\delta$, the sampling overhead should be $1/(1-\epsilon)^2$. Thus, we have $\gamma^*\le 1/(1-\epsilon)^2$.

    Next, we are going to use dual SDP to show that $\gamma^*\ge 1/(1-\epsilon)^2$. We set the dual variables as $M=\frac{1}{4}II-\frac{1}{12}(XX+YY+ZZ)$, and $K=q(XX+YY+ZZ)$ where $q=-\frac{1}{6(1-\epsilon)^2}$. We will show the variables $\{M,K\}$ is a feasible solution to the dual problem.

    If we substitute the variables into the dual problem Eq.~\eqref{eq:dual}, we can easily check that $\tr[M]\le 1, \tr[K]=0$. For the last condition, we have 
    \begin{align}
        \tr_A[(K_A^T\ox I_B\ox H_C)(J_{\cN_{AB}^{\ox k}}^{T_B}\ox I_C)] = -\frac{1}{12}(XX+YY+ZZ)(II+XX+YY+ZZ)
    \end{align}
    Thus, we have 
    \begin{align}
        &M\ox II + \tr_A[(K_A^T\ox I_B\ox H_C)(J_{\cN_{AB}^{\ox k}}^{T_B}\ox I_C)]\\
        &= \frac{1}{4}IIII-\frac{1}{6}(XX+YY+ZZ)II - \frac{1}{12}(XX+YY+ZZ)(XX+YY+ZZ)\\
        &\ge 0,
    \end{align}
    which means $\{M,K\}$ is a feasible solution to the dual SDP. Therefore, we have $\gamma*\ge -\tr[KH]=\frac{1}{(1-\epsilon)^2}$. Combined with prime part, we have $\gamma^* = \frac{1}{(1-\epsilon)^2}$, meaning that given error tolerance $\delta$, the optimal sample complexity is $\cO(\frac{1}{\delta^2(1-\epsilon)^4})$. The proof is complete.
    \end{proof}


\section{Optimal sampling overhead for amplitude damping channel (Proposition~\ref{prop:retrieve_AD})}
The amplitude damping (AD) channel is a physical channel that describes the energy leakage, dropping from a high energy state to a low energy state. The qubit amplitude damping channel $\cN_{\rm AD}^\varepsilon$ has two Kraus operators: $A_0^\varepsilon \coloneqq \proj{0} + \sqrt{1-\varepsilon}\proj{1}$ and $A_1^\varepsilon \coloneqq \sqrt{\varepsilon}\ketbra{0}{1}$. A single qubit state $\begin{pmatrix}
    \rho_{00} & \rho_{01}\\
    \rho_{10} & \rho_{11}
\end{pmatrix}$
after going through the AD channel is 
\begin{equation}
    \cN_{\rm AD}^\varepsilon(\rho)=\begin{pmatrix}
    \rho_{00}+\varepsilon\rho_{11} & \sqrt{1-\varepsilon}\rho_{01}\\
    \sqrt{1-\varepsilon}\rho_{10} & (1-\varepsilon)\rho_{11}
    \end{pmatrix}
\end{equation}
where $\varepsilon\in[0,1]$ is the damping factor. This part we will focus on retrieving the information $\tr[\rho^2]$ from two-copy AD channel noised state $\cN_{\rm AD}^\varepsilon\ox\cN_{\rm AD}^\varepsilon(\rho\ox\rho)$. The noised state is
\begin{align}
    \cN(\rho\ox\rho) &= \cN_{\rm AD}^\varepsilon\ox\cN_{\rm AD}^\varepsilon(\rho\ox\rho)\\
    &=\begin{pmatrix}
        (\rho_{00}+\varepsilon\rho_{11})^2& (\rho_{00}+\varepsilon\rho_{11})(\sqrt{1-\varepsilon}\rho_{01}) & (\sqrt{1-\varepsilon}\rho_{01})(\rho_{00}+\varepsilon\rho_{11})& (1-\varepsilon)\rho_{01}^2 \\
        (\rho_{00}+\varepsilon\rho_{11})(\sqrt{1-\varepsilon}\rho_{10}) &(\rho_{00}+\varepsilon\rho_{11})(1-\varepsilon)\rho_{11} & (1-\varepsilon) \rho_{01}\rho_{10} & (1-\varepsilon)\sqrt{1-\varepsilon}\rho_{01}\rho_{11}\\
        \sqrt{1-\varepsilon}\rho_{10}(\rho_{00}+\varepsilon\rho_{11})&(1-\varepsilon)\rho_{10}\rho_{01} & (1-\varepsilon)\rho_{11} (\rho_{00}+\varepsilon\rho_{11})& (1-\varepsilon)\sqrt{1-\varepsilon}\rho_{11}\rho_{01}\\
        (1-\varepsilon)\rho_{10}^2 & (1-\varepsilon)\sqrt{1-\varepsilon}\rho_{10} \rho_{11} & (1-\varepsilon)\sqrt{1-\varepsilon}\rho_{11}\rho_{10}&(1-\varepsilon)^2\rho_{11}^2\\
    \end{pmatrix}\label{eq:AD_matrix}
\end{align} 

\vspace{0.5cm}
\noindent\textbf{Proposition \ref{prop:retrieve_AD}} 
\textit{Given noisy states $\cN^\varepsilon_{\rm AD} (\rho)^{\ox 2}$, and error tolerance $\delta$, the second order moment $\tr[\rho^2]$ can be estimated by $f\trace{H\cC\left(\cN^{\varepsilon}_{\rm AD}(\rho)^{\ox 2}\right)} - t$, with optimal sample complexity $\mathcal{O}(1/(\delta^2 (1-\varepsilon)^4)$, where $f=\frac{1}{(1-\varepsilon)^2}$, $t = -\frac{\varepsilon^2}{(1-\varepsilon)^2}$. The term $\trace{H\cC\left(\cN^{\varepsilon}_{\rm AD}(\rho)^{\ox 2}\right)}$ can be estimated by implementing a quantum retriever $\cC$ on noisy states and making measurements. Moreover, the Choi matrix of such the retriever $\cC$ is}
    \begin{align}
        J_\cC &= \proj{00} \ox \frac{1}{6} ((1+2\varepsilon)II + (1-4\varepsilon)H) \nonumber\\
        &\quad + \proj{\Psi^+} \ox \frac{1}{6}((1+2\varepsilon)II + (1-4\varepsilon)H) \nonumber\\
        &\quad + \proj{\Psi^-} \ox \frac{1}{2} (II - H) \nonumber\\
        &\quad + \proj{11} \ox \frac{1}{6}(II + H).
    \end{align}
    \textit{where $\proj{\Psi^\pm} = \frac{1}{2}(\ket{01} \pm \ket{10})(\bra{01} \pm \bra{10})$ are Bell states.}
  
\begin{proof}
    The optimal complexity $\mathcal{O}(1/(\delta^2 (1-\varepsilon)^4)$ implies the optimal sampling overhead is $\frac{1}{(1-\varepsilon)^2}$. To prove to estimate the second order moment by implementing quantum channel $C$ with optimal sampling overhead is $\frac{1}{(1-\varepsilon)^2}$, we divide the process into two steps. For the first step, we prove that quantum channel $\cC$ is a feasible solution to the prime problem with sampling overhead $\frac{1}{(1-\varepsilon)^2}$, meaning the optimal overhead $\gamma^*$, $\gamma^* \le\frac{1}{(1-\varepsilon)^2}$. Then we are going to find one feasible solution to the dual problem with overhead $\frac{1}{(1-\varepsilon)^2}$, meaning $\gamma^* \ge\frac{1}{(1-\varepsilon)^2}$. Thus we have the optimal overhead $\gamma^*=\frac{1}{(1-\varepsilon)^2}$, which complete the proof.

    Now, we are going to prove the first part. For an arbitrary state $\rho\ox\rho$, after applying the noise $\cN$ onto it, we denote the state as $\rho' = \cN(\rho\ox\rho)$, thus 
    \begin{align}\label{eq:apply_ad_decoder}
        \cC\circ \cN(\rho\ox\rho) &= \cC(\rho')  = \tr_A[(\rho'^T \ox II) J_\cC]\nonumber\\
        & =\tr_A\Big[(\rho'^T \ox II)  \Big( \proj{00} \ox \frac{1}{6} ((1+2\varepsilon)II + (1-4\varepsilon)H) \nonumber\\
        &\quad + \frac{1}{2}(\ket{01} + \ket{10})(\bra{01} + \bra{10}) \ox \frac{1}{6}((1+2\varepsilon)II + (1-4\varepsilon)H) \nonumber\\
        &\quad + \frac{1}{2}(\ket{01} - \ket{10})(\bra{01} - \bra{10}) \ox \frac{1}{2} (II - H) \nonumber\\
        &\quad + \proj{11} \ox \frac{1}{6}(II + H) \Big)\Big]\\
        &= \tr[\rho'^T \proj{00}]\bigg(\frac{1}{6} ((1+2\varepsilon)II + (1-4\varepsilon)H)\bigg)\nonumber\\
        &\quad + \tr[\rho'^T ((\ket{01} + \ket{10})(\bra{01} + \bra{10})] \bigg(\frac{1}{12}((1+2\varepsilon)II + (1-4\varepsilon)H)\bigg)\nonumber\\
        &\quad+ \tr[\rho'^T ((\ket{01} - \ket{10})(\bra{01} - \bra{10})] \bigg(\frac{1}{4}(II - H)\bigg)\nonumber\\
        &\quad+ \tr[\rho'^T \proj{11}] \bigg(\frac{1}{6}(II+H)\bigg)
    \end{align}
    we can take transpose of the quantum noisy state $\rho'$, as shown in Eq.~\eqref{eq:AD_matrix}, then substitute it into our equation, we get 
    \begin{align}
        \cC\circ \cN(\rho\ox\rho)
        &= (\rho_{00}+\varepsilon\rho_{11})^2\bigg(\frac{1}{6} ((1+2\varepsilon)II + (1-4\varepsilon)H)\bigg)\nonumber\\
        &\quad + [(1-\varepsilon)\rho_{11}(\rho_{00}+\varepsilon\rho_{11}) + (1-\varepsilon)\rho_{01}\rho_{10}] \bigg(\frac{1}{6}((1+2\varepsilon)II + (1-4\varepsilon)H)\bigg)\nonumber\\
        &\quad+ [(1-\varepsilon)\rho_{11}(\rho_{00}+\varepsilon\rho_{11}) - (1-\varepsilon)\rho_{01}\rho_{10}] \bigg(\frac{1}{2}(II - H)\bigg)\nonumber\\
        &\quad+ (1-\varepsilon)^2\rho_{11}^2 \bigg(\frac{1}{6}(II+H)\bigg)
    \end{align}
    The value $\tr[\rho^2]$ is usually estimated by $\tr[H(\rho\ox\rho)]$, where $H=\frac{1}{2}(II+XX+YY+ZZ)$ is the two-qubit cyclic permutation operator. Then the estimated value from our protocol can be arrived at
    \begin{align}
        \cC\circ \cN(\rho\ox\rho)
        &= (\rho_{00}+\varepsilon\rho_{11})^2\bigg(\frac{1}{6} \tr[H((1+2\varepsilon)II + (1-4\varepsilon)H)]\bigg)\nonumber\\
        &\quad + [(1-\varepsilon)\rho_{11}(\rho_{00}+\varepsilon\rho_{11}) + (1-\varepsilon)\rho_{01}\rho_{10}] \bigg(\frac{1}{6}\tr[H((1+2\varepsilon)II + (1-4\varepsilon)H)]\bigg)\nonumber\\
        &\quad+ [(1-\varepsilon)\rho_{11}(\rho_{00}+\varepsilon\rho_{11}) - (1-\varepsilon)\rho_{01}\rho_{10}] \bigg(\frac{1}{2}\tr[H(II - H)]\bigg)\nonumber\\
        &\quad+ (1-\varepsilon)^2\rho_{11}^2 \bigg(\frac{1}{6} \tr[H(II+H)]\bigg)
    \end{align}

    Note that the $\tr[H]=2$ and $\tr[H\cdot H]=4$, then we have
    \begin{align}
        \tr[H \cC\circ\cN(\rho\ox\rho)]
        &= (\rho_{00}+\varepsilon\rho_{11})^2(1-2\varepsilon)) + [(1-\varepsilon)\rho_{11}(\rho_{00}+\varepsilon\rho_{11}) + (1-\varepsilon)\rho_{01}\rho_{10}] (1-2\varepsilon)\nonumber\\
        &\quad - [(1-\varepsilon)\rho_{11}(\rho_{00}+\varepsilon\rho_{11}) - (1-\varepsilon)\rho_{01}\rho_{10}] + (1-\varepsilon)^2\rho_{11}^2\\
        &= (1-2\varepsilon)\rho_{00}^2 + 2(1-\varepsilon)^2\rho_{01}\rho_{10} - 2\varepsilon^2 \rho_{00}\rho_{11} + (1-2\varepsilon)\rho_{11}^2\\
        &= (1-\varepsilon)^2 \rho_{00}^2 + 2(1-\varepsilon)^2\rho_{01}\rho_{10} + (1-\varepsilon)^2\rho_{11}^2 - \varepsilon^2\rho_{00}^2 + 2\varepsilon^2 \rho_{00}\rho_{11} + \varepsilon^2\rho_{11}^2\\
        &= (1-\varepsilon)^2\Big[\rho_{00}^2+ \rho_{11}^2 + 2\rho_{10}\rho_{01} -\frac{\varepsilon^2}{(1-\varepsilon)^2} [\rho_{00}^2+ \rho_{11}^2 + \rho_{00}\rho_{11}+ \rho_{11}\rho_{00}]\Big]\\
        &= (1-\varepsilon)^2\Big[\tr[H\rho\ox\rho] - \frac{\varepsilon^2}{(1-\varepsilon)^2}\Big]\\
        &= (1-\varepsilon)^2 \Big[\tr[\rho^2] - \frac{\varepsilon^2}{(1-\varepsilon)^2}\Big].\\
    \end{align}
    The desired high-order moment $\tr[\rho^2]$ value equals to 
    \begin{equation}
        \tr[\rho^2] = f\tr[H \cC\circ\cN(\rho\ox\rho)] - t
    \end{equation}
    where $f=\frac{1}{(1-\varepsilon)^2}$ is the sampling overhead and $t = - \frac{\varepsilon^2}{(1-\varepsilon)^2}$ is the shifted distance. In order to estimate the value with the error $\delta$, the sampling overhead should be $1/(1-\varepsilon)^2$. Thus, we have $\gamma^*\le 1/(1-\varepsilon)^2$

    Next, we are going to use dual SDP to show that $\gamma^*\ge 1/(1-\varepsilon)^2$. We set the dual variables as 
    \begin{align}
        M&=\frac{1}{4}(\ket{01}-\ket{10})(\bra{01}-\bra{10}) + \frac{1}{2}\proj{11}\\
        K &=\frac{1}{2(1-\varepsilon)^2} \Big[-\varepsilon\proj{00}-\proj{11}+\frac{1+\varepsilon}{2}(\proj{01}+\proj{10}) + \frac{\varepsilon-1}{2}(\ketbra{01}{10} + \ketbra{10}{01})\Big].
    \end{align}
    We will show the variables $\{M,K\}$ is a feasible solution to the dual problem. If we substitute the variables into the dual problem Eq.~\eqref{eq:dual}, we can easily check that $\tr[M]\le 1$ and $\tr[K]=1$. For the last condition, after simplifying, we have 
    \begin{equation}
        \tr_A[(K_A^T\ox I_B\ox H_C)(J_{\cN_{AB}^{\ox k}}^{T_B}\ox I_C)] = (M-\proj{11})\ox H
    \end{equation}
    Thus
    \begin{align}
        &M\ox I + \tr_A[(K_A^T\ox I_B\ox H_C)(J_{\cN_{AB}^{\ox k}}^{T_B}\ox I_C)]\\
        =&  M\ox I  + (M-\proj{11})\ox H\\
        \ge& 0 
    \end{align}
    which means $\{M,K\}$ is a feasible solution to the dual SDP, therefore, we have $\gamma^*\ge -\tr[KH]=\frac{1}{(1-\varepsilon)^2}$. Combined with prime part, we have $\gamma^*=\frac{1}{(1-\varepsilon)^2}$, meaning that given error tolerance $\delta$, the optimal sample complexity is $\cO(\frac{1}{\delta^2(1-\varepsilon)^2})$. The proof is complete.
\end{proof}

Notice in Eq.~\eqref{eq:apply_ad_decoder}, for any input state $\rho'$ to the AD-decoder $\cC$, the action can be implemented as a construction of a state,
    \begin{equation}\label{eq:decompose_ad_decoder}
    \begin{aligned}
        \cC(\rho') &= \frac{p_1}{6}\left((1+2\varepsilon)II + (1-4\varepsilon)H\right) + \frac{p_2}{6}\left((1+2\varepsilon)II + (1-4\varepsilon)H\right) + \frac{p_3}{2}(II-H) + \frac{p_4}{6}(II + H),
    \end{aligned}
    \end{equation}
    where $p_{1,2,3,4}$ can be considered as the measurement probabilities that state $\rho'$ could project onto the orthonormal basis states $\ket{00}, \ket{11}, \ket{\Psi^+}, \ket{\Psi^-}$, respectively, where $\ket{\Psi^\pm} = 1/\sqrt{2} (\ket{01}\pm\ket{10})$, and the four states form an orthonormal basis $\cB$. Since $\cC$ is CPTP for $\varepsilon\in[0,1]$. The output state can then be derived as a convex combination of states,
    \begin{equation}
        \sigma_1 = \sigma_2 = \frac{1}{6}((1+2\varepsilon)II + (1-4\varepsilon)H); \ \sigma_3 = \frac{1}{2}(II-H); \ \sigma_4 = \frac{1}{6}(II+H),
    \end{equation}
    which is just,
    \begin{equation}
        \cC(\rho') = \sum_{j=1}^4 p_j \sigma_j.
    \end{equation}
    Now the question becomes how to construct this ensemble of states $\sigma_j$'s based on $\cC(\rho')$. Notice that for states $\sigma_{1,2}$, we have,
    \begin{equation}
        \sigma_1 = \sigma_2 = \frac{1-\varepsilon}{3}\ketbra{00}{00} + \frac{1-\varepsilon}{3}\ketbra{11}{11} + \frac{1-\varepsilon}{3}\ketbra{\Psi^+}{\Psi^+} + \varepsilon \ketbra{\Psi^-}{\Psi^-},
    \end{equation}
    and,
    \begin{equation}
        \sigma_3 = \ketbra{\Psi^-}{\Psi^-}; \ \sigma_4 = \frac{1}{3}\ketbra{00}{00} + \frac{1}{3}\ketbra{11}{11} + \frac{1}{3}\ketbra{\Psi^+}{\Psi^+}.
    \end{equation}
    Notice that for these four states, the corresponding expectation value with respect to the SWAP observable can be directly computed as
    \begin{equation}
        \tr[\sigma_{1,2} H] = 1-2\varepsilon; \ \tr[\sigma_3 H] = -1; \ \tr[\sigma_4 H] = 1.
    \end{equation}
    \begin{figure}[h]
        \centering
        \includegraphics[width=0.45\linewidth]{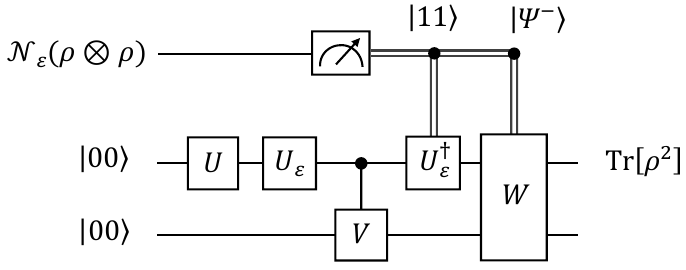}
        \caption{The circuit protocol for retrieving the 2nd moment $\tr[\rho^2]$ from AD product channel with identical noise factor $\varepsilon$. The input state $\rho' = \cN_{\varepsilon}(\rho\ox\rho)$ is the noise resource which is measured in the basis $\cB = \{\ket{00}, \ket{11}, \ket{\Psi^+}, \ket{\Psi^-} \}$. Here we need to include the ancillary system in the bottom 2-qubit system and use measurement-controlled operations shown in the diagram. The middle circuit of 2-qubit then can be used to retrieve the purity $\tr[\rho^2]$.}
        \label{fig:ad_decoder}
    \end{figure}
    
    Notice that the four measurements, in fact, form orthogonal projective measurements. The retrieving of $\tr[\rho^2]$ can then be performed via the probabilistic weighted sum of these four values based on the measurement outcomes or equivalently realized by a circuit framework as shown in Fig.~\ref{fig:ad_decoder}. In this example, the protocol can be implemented by setting up an, in total, $6$-qubit system where the top two are the input subsystems, the middle two as the retrieved output subsystems and the bottom two as the ancillary systems. We first require a $2$-qubit unitary $U$ mapping the initial states $\ket{00}$ in the middle to an intermediate state,
    \begin{equation}
        U\ket{00} = \sqrt{\frac{1}{3}}\ket{00} + \sqrt{\frac{1}{3}}\ket{11} + \sqrt{\frac{1}{3}}\ket{\Psi^+}.
    \end{equation}
    Later, a parameterized unitary $U_\varepsilon$ based on the noise factor $\varepsilon$ is settled to create the correct state coefficients, i.e.,
    \begin{equation}\label{eq:coefficient_ad}
        U_\varepsilon U\ket{00} = \sqrt{\frac{1-\varepsilon}{3}}\ket{00}+\sqrt{\frac{1-\varepsilon}{3}} \ket{11} + \sqrt{\frac{1-\varepsilon}{3}} \ket{\Psi^+} + \sqrt{\varepsilon} \ket{\Psi^-}.
    \end{equation}
    By writing out $U_\varepsilon$ with respect to the basis $\cB$, we could derive the following exact form as,
    \begin{equation}
    \begin{aligned}
        U_{\varepsilon} = \begin{pmatrix}
            1 & 0 & 0 & 0\\
            0 & \frac{3+\sqrt{3}}{6} & \frac{3-\sqrt{3}}{6} & \frac{\sqrt{3}}{3}\\
            0 & \frac{3-\sqrt{3}}{6} & \frac{3+\sqrt{3}}{6} & -\frac{\sqrt{3}}{3}\\
            0 & -\frac{\sqrt{3}}{3} & \frac{\sqrt{3}}{3} & \frac{\sqrt{3}}{3}
        \end{pmatrix}
        \begin{pmatrix}
            \sqrt{1-\varepsilon} & 0 & 0 & -\sqrt{\varepsilon}\\
            0 & \sqrt{1-\varepsilon} & -\sqrt{\varepsilon} & 0\\
            0 & \sqrt{\varepsilon} & \sqrt{1-\varepsilon} & 0\\
            \sqrt{\varepsilon} & 0 & 0 & \sqrt{1-\varepsilon}
        \end{pmatrix},
    \end{aligned}
    \end{equation}
    where in this $\cB$, the basis states are represented as,
    \begin{equation}
        \ket{00} = 
        \begin{pmatrix}
            1\\
            0\\
            0\\
            0
        \end{pmatrix}; \
        \ket{11} = 
        \begin{pmatrix}
            0\\
            1\\
            0\\
            0
        \end{pmatrix}; \
        \ket{\Psi^+} = 
        \begin{pmatrix}
            0\\
            0\\
            1\\
            0
        \end{pmatrix}; \
        \ket{\Psi^-} = 
        \begin{pmatrix}
            0\\
            0\\
            0\\
            1
        \end{pmatrix}.
    \end{equation}
    After matching the coefficients, a specialized control-$V$ ($CV$) gate will be applied to generate the purification of ~\eqref{eq:coefficient_ad} as,
    \begin{equation}
        \ket{\psi'} = CV (U_\varepsilon U \ox I)\ket{00}\ket{00} = \sqrt{\frac{1-\varepsilon}{3}}\ket{00}\ket{00}+\sqrt{\frac{1-\varepsilon}{3}} \ket{11}\ket{01} + \sqrt{\frac{1-\varepsilon}{3}} \ket{\Psi^+}\ket{10} + \sqrt{\varepsilon} \ket{\Psi^-}\ket{11}
    \end{equation}
    
    Now for the rest of the circuit, a measurement-controlled system will be applied based on the four orthogonal projective measurements.
    If we measured $\ket{00}, \ket{\Psi^+}$ with probabilities $p_1$ and $p_2$, the control system will do nothing and leave the principal system free to output $\sigma_{1,2}$ guaranteed by Schmidt decomposition. Once the state $\ket{11}$ is measured with probability $p_4$, a $U_\varepsilon^\dagger$ will be applied to the main system in order to erase the factor of $\varepsilon$ and produce $\sigma_4$. At last, if $\ket{\Psi^-}$ is measured with probability $p_3$, a unitary $W = (V_{\Psi^-}U^\dagger U_\varepsilon^\dagger\ox I) CV^\dagger$ will be applied on both principal and the ancillary systems to reverse to the initial state $\ket{00}\ket{00}$ and then generate a pure state $\sigma_3 = \ketbra{\Psi^-}{\Psi^-}$.


\section{Comparison with different QPD-based error mitigation methods}\label{appen:comparison_with_diff_methods}
{In this section, we are going to make a comparison of the sampling overhead of three different methods, which are channel inverse, information recover, and observable shift, respectively. The straightforward comparison is shown in the following.}
\begin{figure}[htb]
\label{fig:three_method_table}
    \centering
    \includegraphics[width = 0.9\linewidth]{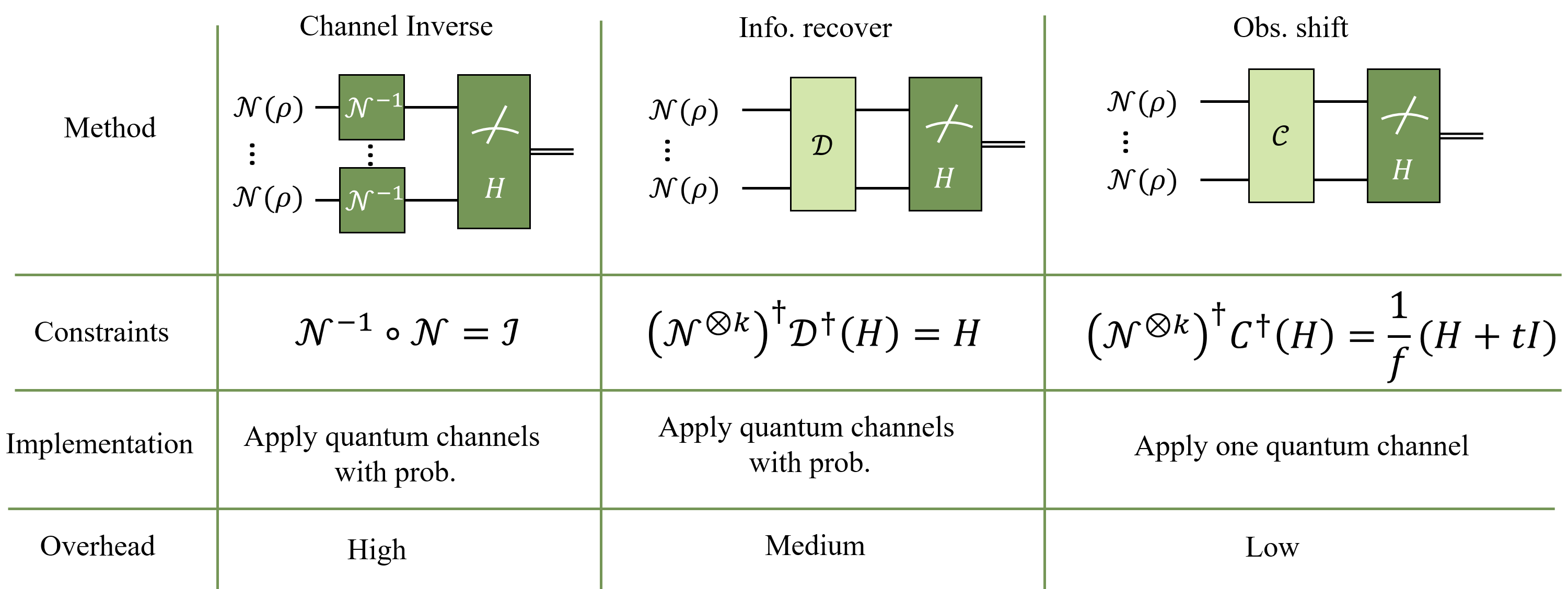}
    \caption{Comparison of the setting between three error mitigation methods.}
\end{figure}

\begin{enumerate}
    \item{ \textbf{Channel inverse.} This method was proposed in Ref.~\cite{temme2017error, jiang2021physical}, which cancels the noise by applying the inverse of the noise channel $\cN^{-1}$ on noisy state $\cN(\rho)$, such that the expectation value is unbiased, i.e., $\tr[O \cN^{-1}\circ\cN(\rho)] = \tr[O\rho]$. The minimal sampling overhead of this method can be given by SDP:
    \begin{align}
        \min & \quad p_1 + p_2\\
        \st & \quad J_\cD \equiv J_1 - J_2\\
        &\quad \tr_{B}[J_1]=p_1 I_{A},\quad \tr_{B}[J_2]=p_2 I_{A}\\
        &\quad J_1 \ge 0, \quad J_2 \ge 0\\
        &\quad J_{\cF_{AC}} \equiv \tr_B[(J_{\cN_{AB}}^{T_B}\ox I_C) (I_A\ox J_{\cD_{BC}})]\\
        &\quad J_{\cF_{AC}} = J_{\id_{AC}}
    \end{align}
    $J_\cD$ and $J_\cN$ are the Choi matrices for retriever $\cD$ and noisy $\cN$ respectively. $J_{\id_{AB}}$ refers to the Choi matrix for the identity channel. }

    \item{ \textbf{Information recover.} This method was proposed in Ref.~\cite{zhao2023information}, which aims to find the hermitian preserving trace preserving map $\cD$ such that the information can be recovered respective to observable $O$, i.e., $\tr[O\cD\circ\cN(\rho)] = \tr[O\rho]$. Note that this method differs from the channel inverse method. The channel inverse can be understood as recovering information for all observables, while this method only recovers information concerning a certain observable. The minimal sampling overhead of this method can be given by SDP:
    \begin{align}
        \min & \quad c_1 + c_2\\
        \st & \quad J_\cD \equiv J_1 - J_2\\
        &\quad \tr_{B}[J_1]=c_1 I_{A},\quad \tr_{B}[J_2]=c_2 I_{A}\\
        &\quad J_1 \ge 0, \quad J_2 \ge 0\\
        &\quad J_{\cF_{AC}} \equiv \tr_B[(J_{\cN_{AB}}^{T_B}\ox I_C) (I_A\ox J_{\cD_{BC}})]\\
        &\quad \tr_C[(I_A\ox O^T_C)J_{\cF_{AC}}^T] = O_A
    \end{align}
    $J_\cD$ and $J_\cN$ are the Choi matrices for retriever $\cD$ and noisy $\cN$ respectively. $T$, $T_B$ stand for transpose and partial transpose, respectively. $O$ refers to observable.}

    \item {\textbf{Observable shift.} This method is proposed in this work. Details can be found in the OBSERVABLE SHIFT METHOD section in the main text. The minimal sampling overhead of this method is given by the following SDP:
    \begin{align}
    \min  &\quad f\\
    \st &\quad J_{\Tilde{\cC}} \ge 0 \\
    &\quad \tr_{C}[J_{\Tilde{\cC}_{BC}}]=f I_{B}\\
    &\quad J_{\cF_{AC}} \equiv \tr_B[(J_{\cN_{AB}^{\ox k}}^{T_B}\ox I_C) (I_A\ox J_{\Tilde{\cC}_{BC}})]\\
    &\quad \tr_C[(I_A\ox H_C^T)J_{\cF_{AC}}^T] = H_A + tI_A.
    \end{align}
    $J_{\Tilde{\cC}}$ and $J_\cN$ are the Choi matrices for retriever $\Tilde{\cC}$ and noisy $\cN$ respectively. $I$ is identity. $T$, $T_B$ stand for transpose and partial transpose, respectively. $H$ refers to observable for estimating high order moment.}
\end{enumerate}

{Here we conduct a numerical simulation to compare the minimal sampling overhead of the three methods in the task of estimating $\tr[\rho^3]$ from noisy states. The noise model is amplitude damping and the observable is $H=(S+S^\dagger)/2$, where $S$ is the cyclic permutation matrix. The results are shown in Fig.~\ref{fig:three_method_AD}. It is straightforward to conclude that the new proposed observable shift method requires the lowest sampling overhead compared with the channel inverse method and observable shift method. Besides, both channel inverse and information recover method requires to implement HPTP map by sampling quantum channels with certain probabilities, while the observable shift method skips that process and only needs to repeat one CPTP map. Thus, the newly proposed method in this work has two advantages: easier implementation and fewer sampling times.}

\begin{figure}[hptb]
\label{fig:three_method_AD}
    \centering
    \includegraphics[width = 0.55\linewidth]{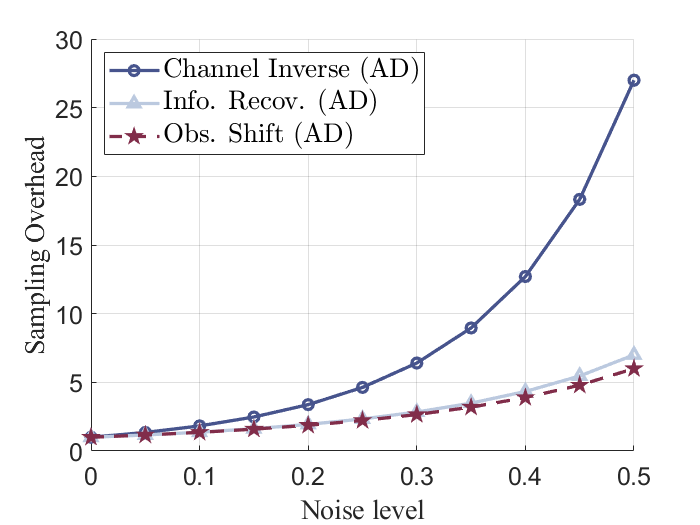}
    \caption{Comparison of sampling overhead between three error mitigation methods under amplitude damping channel.}
\end{figure}

\section{Proof for Lemma ~\ref{lemma:advantage}}

\begin{align}
    f_{\min}(\cN, k) &= \min\Big\{f \mid (\cN^{\ox k})^\dagger\circ \Tilde{\cC}^\dagger(H) = \frac{1}{f} \left( H+tI \right), f\in\mathbb{R}^+, t\in\mathbb{R},  \Tilde{\cC}\in\text{CPTP} \Big\}\\
    g_{\min}(\cN) &= \min\Big\{\sum_i |c_i| \mid \cN^{-1} = \sum_i c_i \cC_i, c_i\in\mathbb{R}, \cC_i\in \text{CPTP}\Big\},
\end{align}

\noindent \textbf{Lemma ~\ref{lemma:advantage}}~
\textit{For arbitrary invertible quantum noisy channel $\cN$, and moment order $k$, we have 
    \begin{equation}
        f_{\min}(\cN, k) \le g_{\min}(\cN, k).
    \end{equation}}

\begin{proof}
    In the task of extracting $k$-th moment from noisy states $\cN(\rho)$, we can apply the inverse operation with optimal decomposition on quantum systems simultaneously to mitigate the error. The corresponding sampling overhead for $k$-th moment will be 
    \begin{equation}
        g_{\min}(\cN, k) = \min\Big\{\sum_i |c_i| \mid (\cN^{-1})^{\ox k} = \sum_i c_i \cC_i,  c_i\in\mathbb{R}, \cC_i\in \text{CPTP}\Big\}.
    \end{equation}

    Let's assume the optimal decomposition with the lowest sampling overhead is 
    \begin{equation}
        (\cN^{-1})^{\ox k} = \sum_i c_i' \cC_i'.
    \end{equation}
    Since $(\cN^{-1})^{\ox k} \circ \cN^{\ox k}= I$, where $I$ refers to identity. we directly have $\cN^{\ox k\dagger} \circ (\cN^{-1})^{\ox k\dagger} (H) = H$, where $H$ is the moment observable. For further step, we have $\cN^{\ox k\dagger} \circ (\cN^{-1})^{\ox k\dagger} (H+tI-tI) = H$, where $t$ is a real coefficient. Note that $\cN^\dagger$ is a CPUP map, and  $(\cN^{-1})^\dagger$ is a HPUP map, thus
    \begin{equation}\label{eq:inverse_transfer}
        \cN^{\ox k\dagger} \circ [(\cN^{-1})^{\ox k\dagger} (H) +tI] = H +tI.
    \end{equation}
    We can denote $\cC^\dagger = [(\cN^{-1})^{\ox k\dagger} (H) +tI]$ can be consider as a whole. If we take the coefficient $t$ greater than the absolute value of the minimum eigenvalue of the hermitian matrix $(\cN^{-1})^{\ox k\dagger} (H)$, which is $t \ge |\min(\text{eig}(\cN^{-1})^{\ox k\dagger} (H))|$. Then, the map $\cC^\dagger$ is a CPUS map, and its corresponding adjoint map $\cC$ is a CPTS map. We can substitute the constructed map $\cC$ back to Eq.~\eqref{eq:inverse_transfer}, we have 
    \begin{equation}
        (\cN^{\ox k})^\dagger\circ \cC^\dagger(H) = H+tI
    \end{equation}
    Since any CPTS map $\cC$ can be written in the form of $\cC = f\Tilde{\cC}$, where $f$ is the real coefficient, and $\Tilde{\cC}$ is a CPTP map. We have
    \begin{equation}
        (\cN^{\ox k})^\dagger\circ f\Tilde{\cC}^\dagger(H) = H+tI,
    \end{equation}
    which has the same form as Eq.~\eqref{eq:f(N,k)}. Therefore, we have proved that the optimal decomposition of Eq.~\eqref{eq:g(N,k)} is one feasible solution of Eq.~\ref{eq:f(N,k)}, and we get the relation $ f_{\min}(\cN, k) \le g_{\min}(\cN, k)$ straightforwardly. The proof is completed
\end{proof}
    
\section{Depolarizing noise retriever for the second moment when \texorpdfstring{$\rho$}{} is \texorpdfstring{$n$}{}-qubit state}\label{appendix:depo n qubit}

\begin{proposition}\label{prop:2nd moment n qubit}
    Given noised states $\cN^\epsilon_{\rm DE}(\rho)^{\ox2}$, and error tolerance $\delta$ where $\rho$ is an $n$-qubit state, the second order moment $\tr[\rho^2]$ can be estimated by $f\tr[H\cC\circ \cN^\epsilon_{\rm DE}(\rho)^{\ox2}] - t$ where $f = \frac{1}{(1-\epsilon)^2}$ and $t = \frac{1-(1-\epsilon)^2}{2^n(1-\epsilon)^2}$. The term $\tr[H\cC\circ \cN^\epsilon_{\rm DE}(\rho)^{\ox2}]$ can be estimated by implementing quantum retriever $\cC$ with optimal sample complexity $\cO(1/\delta^2(1-\epsilon)^4)$. The Choi matrix of such a quantum retriever is,
    $$J_\cC = \frac{1}{4^n} I^{\ox n}I^{\ox n} + \frac{1}{4^n(4^n-1)}(\sum_{i\neq 0} P_i\ox P_i)\ox(\sum_{j\neq 0} P_j\ox P_j).$$
\end{proposition}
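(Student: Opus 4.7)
The plan is to mirror the proof of Proposition~\ref{prop:retrieve_DE} and generalize it from the single-qubit case ($d=2$) to the $n$-qubit case ($d=2^n$), replacing the four-element Pauli basis by the $4^n$-element basis $\{P_i\}_{i=0}^{4^n-1}$ of $n$-qubit Pauli strings, with $P_0 = I^{\ox n}$ and Hilbert--Schmidt orthogonality $\tr[P_i P_j] = 2^n\delta_{ij}$. Three statements must be established in order: (i) $J_\cC$ is the Choi matrix of a CPTP map; (ii) the identity $\tr[H\cC\circ\cN^\epsilon_{\rm DE}(\rho)^{\ox 2}] = (1-\epsilon)^2\tr[\rho^2] + \frac{1 - (1-\epsilon)^2}{2^n}$ holds, which rearranges to $\tr[\rho^2] = f\tr[H\cC(\cdot)] - t$ with the claimed $f, t$; (iii) the sampling overhead $f = 1/(1-\epsilon)^2$ is optimal via the dual SDP in Eq.~\eqref{eq:dual}.

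The key algebraic observation driving both primal feasibility and CPTP is the SWAP decomposition $H = \frac{1}{2^n}\sum_i P_i\ox P_i$, which rewrites $A \coloneqq \sum_{i\neq 0} P_i\ox P_i$ as $A = 2^n H - I^{\ox 2n}$. Since $H$ has spectrum $\{\pm 1\}$, $A$ has spectrum $\{2^n-1,\ -(2^n+1)\}$, and so $A\ox A$ has minimum eigenvalue $-(4^n-1)$. Hence $J_\cC = \frac{1}{4^n}I^{\ox 4n} + \frac{1}{4^n(4^n-1)}A\ox A$ has minimum eigenvalue $\frac{1}{4^n}-\frac{1}{4^n}=0$, establishing positivity. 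Trace preservation follows because $\tr[P_i]=0$ for $i\neq 0$ makes the $A\ox A$ term traceless on the output subsystem, while partial tracing the identity term gives $I^{\ox 2n}$ exactly.

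For step (ii), expand $\cN^\epsilon_{\rm DE}(\rho)^{\ox 2} = (1-\epsilon)^2\rho^{\ox 2} + \frac{\epsilon(1-\epsilon)}{2^n}(\rho\ox I + I\ox\rho) + \frac{\epsilon^2}{4^n}I^{\ox 2n}$ and apply the Choi rule $\cC(\sigma) = \tr_A[(\sigma^T\ox I)J_\cC]$. Pauli orthogonality combined with $\tr[P_i]=0$ for $i\neq 0$ collapses every cross term involving $A$, leaving $\tr[\sigma A] = (1-\epsilon)^2(2^n\tr[\rho^2] - 1)$, where the Bloch expansion $\tr[\rho^2] = 2^{-n}\sum_i \tr[\rho P_i]^2$ is used. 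Combining with $\tr[H] = 2^n$ and $\tr[HA] = 2^n(4^n-1)$ (the latter from $HA = 2^n I - H$ via $H^2=I$) yields the stated matching relation after a short rearrangement, fixing $f$ and $t$ as claimed.

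For step (iii), I would use the dual ansatz $M = \frac{1}{4^n}I^{\ox 2n} - \frac{1}{4^n(4^n-1)}A$ and $K = -\frac{1}{2^n(4^n-1)(1-\epsilon)^2}A$, generalizing the single-qubit choice in the proof of Proposition~\ref{prop:retrieve_DE}. The scalar constraints $\tr[M]=1\le 1$ and $\tr[K]=0$ are immediate from $\tr[A]=0$, and the dual objective evaluates to $-\tr[KH] = 1/(1-\epsilon)^2$, matching the primal. The main obstacle is verifying the operator inequality $M\ox I^{\ox 2n} + \tr_A[(K_A^T\ox I_B\ox H_C)(J^{T_B}_{\cN^{\ox 2}_{AB}}\ox I_C)] \succeq 0$: it requires explicitly unpacking the Choi matrix of $(\cN^\epsilon_{\rm DE})^{\ox 2}$ in the Pauli basis and carrying out the partial trace. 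The simplification arises from the $U\ox U$-twirling invariance of the depolarizing channel, which guarantees that the resulting operator lies in the two-dimensional algebra spanned by $\{I^{\ox 2n}, A\}$; positivity can then be checked on that constant-dimensional subspace, independent of $n$.
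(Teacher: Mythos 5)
Your proposal is correct, and its central computation (step (ii)) -- expanding $\cN^\epsilon_{\rm DE}(\rho)^{\ox 2}$, applying the Choi rule, killing all cross terms by Pauli tracelessness, and using $\tr[H]=2^n$ and $\tr[H\sum_{j\neq 0}P_j\ox P_j]=2^n(4^n-1)$ to read off $f$ and $t$ -- is essentially the paper's own proof of this proposition. Where you genuinely go further is in steps (i) and (iii): the paper's proof of the $n$-qubit case carries out only the primal trace identity, and neither verifies that $J_\cC$ is positive semidefinite and trace-preserving nor certifies the optimality claim; those points are established only for $n=1$ in Proposition~\ref{prop:retrieve_DE}. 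Your observation that $A \coloneqq \sum_{i\neq 0}P_i\ox P_i = 2^nH - I^{\ox 2n}$ has spectrum $\{2^n-1,\,-(2^n+1)\}$, so that $A\ox A$ has minimum eigenvalue $-(4^n-1)$ and $J_\cC$ is exactly rank-deficient positive, cleanly closes the first gap, and your dual ansatz $M=\frac{1}{4^n}I^{\ox 2n}-\frac{1}{4^n(4^n-1)}A$, $K=-\frac{1}{2^n(4^n-1)(1-\epsilon)^2}A$ is the correct generalization of the paper's $n=1$ certificate for Eq.~\eqref{eq:dual}, with the right objective value $-\tr[KH]=1/(1-\epsilon)^2$. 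Two small points: the algebra in which the final dual operator lives is spanned by $\{I\ox I, A\ox I, I\ox A, A\ox A\}$ (four-dimensional, not two as you write, though still independent of $n$, so positivity is a check of four scalar combinations); and that last operator inequality, which you rightly flag as the remaining work, does require writing $J_{(\cN^\epsilon_{\rm DE})^{\ox 2}}$ in the Pauli basis and performing the partial trace -- it goes through exactly as in the $n=1$ computation. Overall your write-up is more complete than the paper's own proof of this statement.
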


\begin{proof}
    For an arbitrary state $\rho\ox\rho$, after applying the noise $\cN$ onto it, we denote the state as $\rho' = \cN(\rho\ox\rho)$, thus 
    \begin{align}
        \cC\circ \cN(\rho\ox\rho) &= \cD(\rho')  = \tr_A[(\rho'^T \ox I^{\ox n}) J_\cC]\\
    &=  \tr_A\Big[(\rho'^T \ox I^{\ox n}) (\frac{1}{4^n} I^{\ox n}I^{\ox n}  + \frac{1}{4^n(4^n-1)}(\sum_{i\neq 0} P_i\ox P_i)\ox(\sum_{j\neq 0} P_j\ox P_j))\Big]\\
    &=\frac{1}{4^n}\tr[\rho'^T] I^{\ox n}
   + \frac{1}{4^n(4^n-1)}\tr[\rho'^T(\sum_{i\neq 0} P_i\ox P_i)](\sum_{j\neq 0} P_j\ox P_j)\\
    &= \frac{1}{4^n}I^{\ox n} +  \frac{1}{4^n(4^n-1)}\tr[\rho'^T(\sum_{i\neq 0} P_i\ox P_i)](\sum_{j\neq 0} P_j\ox P_j)\label{eq:D_circ_N_depo_material}
    \end{align}
    Note that the above equations utilized the face that transpose operation is trace-preserving, i.e., $\tr[\rho^T]=\tr[\rho]=1$. Since the matrix $\sum_{i\neq 0} P_i$ is symmetric, thus we have
    \begin{align}
        \cC\circ \cN(\rho\ox\rho) &= \frac{1}{4^n}I^{\ox n} +  \frac{1}{4^n(4^n-1)} \tr[\rho'^T(\sum_{i\neq 0} P_i\ox P_i)^T](\sum_{j\neq 0} P_j\ox P_j)\\
        &= \frac{1}{4^n}I^{\ox n} +  \frac{1}{4^n(4^n-1)} \tr[\rho'(\sum_{i\neq 0} P_i\ox P_i)](\sum_{j\neq 0} P_j\ox P_j) \label{eq:DE_after_transpose_nq}
    \end{align}
    The trace term can be calculated by substituting Eq.~\eqref{eq:depo_state}, we have 
    \begin{align}
        \tr[\rho'(\sum_{i\neq 0} P_i\ox P_i)] &= \tr[((1-\epsilon)^2 \rho\ox\rho + \frac{\epsilon(1-\epsilon)}{2^n} I^{\ox n}\ox\rho + \frac{(1-\epsilon) \epsilon}{2^n} \rho\ox I^{\ox n} + \frac{\epsilon^2}{4^n} (I^{\ox n}\ox I^{\ox n})) (\sum_{i\neq 0} P_i\ox P_i)]\\
        &= (1-\epsilon)^2\tr[\rho\ox\rho(\sum_{i\neq 0} P_i\ox P_i)\ox(\sum_{j\neq 0} P_j\ox P_j)] .\label{eq:depo_mid_nq}
    \end{align}
    In the second equation, all other terms are traceless, thus, only the first term survives. Replace the equation back to Eq.~\eqref{eq:DE_after_transpose_nq}, then
    \begin{align}
        \cC\circ \cN(\rho\ox\rho) &= \frac{1}{4^n}I^{\ox n} + \frac{(1-\epsilon)^2}{4^n(4^n-1)}\tr[\rho\ox\rho (\sum_{i\neq 0} P_i\ox P_i)] (\sum_{j\neq 0} P_j\ox P_j).
    \end{align}
    The information $\tr[\rho^2]$ is estimated from $ \tr[H(\rho\ox\rho)]$, where $H = \frac{1}{2^n}\sum_{i\neq 0} P_i\ox P_i$ is cyclic permutation operator. It is easy to check that 
    \begin{align}
        \tr[H \cC\circ\cN(\rho\ox\rho)] &=  \frac{1}{4^n}\tr[H*I^{\ox n}] + \frac{(1-\epsilon)^2}{4^n(4^n-1)}\tr[\rho\ox\rho (\sum_{i\neq 0} P_i\ox P_i)]\tr[H*(\sum_{j\neq 0} P_j\ox P_j)].
    \end{align}
    We can quickly get $\tr[H] = 2^n$ and $\tr[H*(\sum_{j\neq 0} P_j\ox P_j)]=2^n(4^n-1)$. Then
    \begin{align}
        f\tr[H \cC\circ\cN(\rho\ox\rho)] &= \frac{1}{(1-\epsilon)^2} \Big[\frac{1}{2^n} +\frac{(1-\epsilon)^2}{2^n}\tr[\rho\ox\rho (\sum_{i\neq 0} P_i\ox P_i)]\Big]\\
        &= \frac{1}{2^n(1-\epsilon)^2} + \frac{1}{2^n}\tr[\rho\ox\rho (\sum_{i\neq 0} P_i\ox P_i)]\\
        &= \frac{1}{2^n (1-\epsilon)^2}-\frac{1}{2^n} + \frac{1}{2^n} +  \frac{1}{2^n}\tr[\rho\ox\rho (\sum_{i\neq 0} P_i\ox P_i)]\\
        &= \frac{1}{2^n(1-\epsilon)^2}-\frac{(1-\epsilon)^2}{2^n(1-\epsilon)^2} +  \frac{1}{2^n}\tr[\rho\ox\rho (\sum_{i} P_i\ox P_i)]\\
        &= \frac{1-(1-\epsilon)^2}{2^n(1-\epsilon)^2} + \tr[\rho\ox\rho H]\\
        &= t + \tr[\rho^2].
    \end{align}
    where $t = \frac{1-(1-\epsilon)^2}{2^n(1-\epsilon)^2}$ is the shift coefficient. The desired high-order moment $\tr[\rho^2]$ is directly obtained by estimated value $\tr[H \cC\circ\cN(\rho\ox\rho)]$ substrate a constant $t$.
\end{proof}
\section{Depolarizing noise retriever for the \texorpdfstring{$k$}{}-th moment when \texorpdfstring{$\rho$}{} is a qudit state}~\label{appendix:depo n qudit k moment}

{Appendix~\ref{appendix:depo n qubit} demonstrates the efficacy of the observable shift method in extracting the second moment of an $n$-qubit quantum state $\rho$ subjected to a depolarizing channel $D_\eps$. This section extends the method's applicability, showing it can also determine the $k$-th moment of a $\rho$ given $k$ copies of $D_\eps(\rho)$ for $100 \geq k > 2$.
Our extension is based on the fact that the $k$-th moment of $\rho$ is intrinsically a linear combination of $\trace{\rho^2}, \ldots, \trace{\rho^{k-1}}$ and $\trace{D_\eps(\rho)^k}$ with a constant, as evidenced by the following binomial expansion:
\begin{equation}\label{eq:depo expansion}
    \trace{D_\eps(\rho)^k} 
    = \trace{\left( (1 - \eps) \rho + \eps I / d \right)^k}
    = \sum_{l = 0}^k \binom{k}{l} \frac{ (1 - \eps)^{l}\eps^{k - l} }{d^{k - l}} \trace{\rho^l}
,\end{equation}
where $I$ is the identity matrix of dimension $d = 2^n$. Then the recovery map can be recursively found by reducing $\trace{\rho^l}$ to an expectation value of Hamiltonian $H_k$ under the depolarizing noise, and finally the observable shift method can be applied. Before presenting the details, we need an extra proposition to guarantee the existence of such a reduction, and the proof of which is deferred to Appendix~\ref{appendix:proof perm cp transfer}.}

\begin{proposition}~\label{prop:cptn permutation transfer}
    {Suppose $100 \geq k > 2$. Denote $H_l = \frac{1}{2} (S_l + S_l^\dag)$ for the cyclic permutation operator $S_l \in \CC^{ld \times ld}$ permuting $l$ subsystems. Then there exists CP maps $\cT_k$ and $\widetilde{\cT}_k$ such that 
\begin{equation}
    \cT_k(H_k) = H_{k - 1} \ox I / d, \quad 
    \widetilde{\cT}_k(H_k) = -H_{k - 1} \ox I / d
.\end{equation}}
\end{proposition}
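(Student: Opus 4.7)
My plan is to construct the two maps separately. The first is an explicit partial-trace-then-reinsert construction, while the second requires finding a CP operation that produces a sign-flipped output on an indefinite Hermitian input; I expect the second to be the main source of difficulty.

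For $\cT_k$ I would set $\cT_k(X) := \tr_k[X] \ox I/d$, where $\tr_k$ denotes the partial trace over the $k$-th subsystem. This map is manifestly CP, since partial trace is CP and appending the fixed state $I/d$ is a preparation channel, hence CP. A direct index calculation shows that tracing the last subsystem of $S_k$ amounts to removing the symbol $k$ from the cycle $(1\,2\,\ldots\,k)$, so $\tr_k[S_k] = S_{k-1}$; taking adjoints then gives $\tr_k[S_k^\dagger] = S_{k-1}^\dagger$, hence $\tr_k[H_k] = H_{k-1}$, and tensoring with $I/d$ on the reinstated subsystem yields $\cT_k(H_k) = H_{k-1} \ox I/d$ as required.

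For $\widetilde{\cT}_k$ I would cast existence as a semidefinite feasibility problem on the Choi matrix $J_{\widetilde{\cT}_k} \ge 0$ subject to the linear equality $\widetilde{\cT}_k(H_k) = -H_{k-1} \ox I/d$. To make this tractable I would exploit that both $H_k$ and $-H_{k-1} \ox I/d$ are invariant under the cyclic subgroup generated by $S_{k-1}$ acting on the first $k-1$ subsystems, so by a standard twirl any feasible map may be replaced by one whose Choi matrix lies in the commutant of this action; this reduces the SDP dimension drastically and makes numerical feasibility checks for each $k \in \{3,\ldots,100\}$ achievable, consistent with the paper's numerical verification.

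The principal obstacle is obtaining a uniform analytic construction rather than one-by-one SDP checks. A natural attempt would use the spectral decomposition $H_k = \sum_{m=0}^{k-1} \cos(2\pi m/k)\,\Pi_m$ into eigenprojectors of $S_k$, together with Kraus operators built from the associated Fourier basis that redistribute weight between eigenspaces of opposite sign while simultaneously tracing the last subsystem and reinstating $I/d$; the delicate point is producing a single explicit mixed-isometry ensemble that effects exactly the sign-flipped partial-trace action for all admissible $k$. Absent such an ensemble, the symmetry-reduced SDP certification is the cleanest available route.
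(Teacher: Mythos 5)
Your construction of $\cT_k$ is correct and genuinely different from (indeed cleaner than) the paper's: the identity $\tr_k[S_k]=S_{k-1}$ holds exactly as you compute, partial trace commutes with the adjoint so $\tr_k[H_k]=H_{k-1}$, and appending the state $I/d$ is a preparation channel, so $\cT_k(X)=\tr_k[X]\ox I/d$ does the job analytically for every $k$ with no numerics. The paper instead realizes $\cT_k$ as a measure-and-prepare map on the eigenspaces of $S_k$ — more complicated, but it is the template that also yields the sign-flipped map.

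The gap is $\widetilde{\cT}_k$, which you correctly flag as the hard part but never actually construct. Casting existence as Choi-matrix feasibility is fine in principle, but your proposed symmetry reduction (twirling over the cyclic group generated by $S_{k-1}$) only quotients by a group of order $k-1$ while the Choi matrix lives in dimension $d^{2k}$; for $k$ up to $100$ and $d=2^n$ this is nowhere near tractable, and ``run an SDP for each $k$'' is in any case not a proof without an explicit feasibility certificate for each $(k,d)$. The missing idea — which you gesture at in your last paragraph but do not carry out — is to work entirely in the $k$-dimensional eigenvalue space of $S_k$: writing $S_k=\sum_{m}\omega_k^{m}\,\Pi_k^{(m)}$ with $\omega_k=e^{2\pi i/k}$, one seeks a measure-and-prepare map $\Pi_k^{(m)}\mapsto \sum_{l}\widetilde{Q}_{lm}\,\Pi_{k-1}^{(l)}\ox I/d$ with $\widetilde{Q}$ entrywise non-negative (which guarantees complete positivity) and $\sum_{m}\widetilde{Q}_{lm}\,\omega_k^{m}=-\omega_{k-1}^{l}$. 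Such a $\widetilde{Q}$ exists for every $k>2$ because the $k$-th roots of unity positively span the complex plane once $k\ge 3$ (consecutive roots subtend an angle $2\pi/k<\pi$), so each $-\omega_{k-1}^{l}$ is a non-negative combination of two adjacent $k$-th roots; this collapses the whole problem to a $(k-1)\times k$ non-negative matrix independent of $d$, for which the paper gives closed-form entries. Without this (or an equivalent) reduction, your proposal does not establish the existence of $\widetilde{\cT}_k$.
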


{To see how this proposition works, for any $k > l \geq 2$, one can construct a CP map
\begin{equation}\label{eq:recovery}
    \cR_l = \left(\cT_{l + 1} \ox {\rm id}_{k - l - 1}\right) \circ \ldots \circ \left(\cT_{k - 2} \ox {\rm id}_2\right) \circ  \left(\cT_{k - 1} \ox {\rm id}\right) \circ \widetilde{\cT}_k
.\end{equation}
By Proposition~\ref{prop:cptn permutation transfer},
\begin{align}
    \cR_l(H_k) &= \left(\cT_{l + 1} \ox {\rm id}_{k - l - 1}\right) \circ \ldots \circ \left(\cT_{k - 2} \ox {\rm id}_2\right) \circ  \left(\cT_{k - 1} \ox {\rm id}\right) \circ \widetilde{\cT}_k(H_k) \\
    &= - \left(\cT_{l + 1} \ox {\rm id}_{k - l - 1}\right) \circ \ldots \circ \left(\cT_{k - 2} \ox {\rm id}_2\right) \left(\cT_{k - 1} \left(H_{k - 1}\right) \ox I / d \right) \\
    &= - \cT_{l + 1}\left(H_{k - l + 1}\right) \ox I_{k - l - 1} / d^{k - l - 1} 
    = - H_l \ox I_{k-l} / d^{k - l}
,\end{align}
where the negative sign here is to preserve completely-positiveness of $\cC_k$ in the later proposition. Then for any state $\sigma$, the computation of $\trace{\sigma^l}$ is understood as an expectation value of $H_k$:
\begin{equation}
    \trace{\sigma^l} = \trace{H_l \sigma^{\ox l}} 
    \propto -\trace{\cR_l(H_k) \sigma^{\ox k}} = -\trace{H_k \cdot \cR_l^\dag \left( \sigma^{\ox k} \right)}
.\end{equation}
We are ready to present the main conjecture in this section.
}

\vspace{0.5em}
\noindent \textbf{Proof of Proposition~\ref{prop:kth moment n qudit}}\,\,
    {The statement is proved by induction.}

{\noindent (\textbf{Base case}: $k = 2$) The base case follows by Proposition~\ref{prop:2nd moment n qubit}.}

{\noindent (\textbf{Inductive hypothesis}) For all $k > l \geq 2$, there exists a CP map $C_l$, constants $f_l$ and $t_l$ such that
\begin{equation}
    f_l \trace{H_l \cdot \cC_l \circ D_\eps^{\ox l} \left(\rho^{\ox l}\right)} - t_l =  \trace{\rho^l}
.\end{equation}}

{\noindent (\textbf{Inductive step})    
Proposition~\ref{lemma:exitance_of_H} and Equation~\eqref{eq:depo expansion} implies
\begin{equation}
    \trace{H_k D_\eps (\rho)^{\ox k} } = \trace{D_\eps (\rho)^{k}}
    = (1 - \eps)^k \trace{\rho^k} + \frac{\eps^k}{d^k} + \frac{k (1 - \eps) \eps^{k - 1}}{d^{k - 1}} + \sum_{l = 2}^{k  - 1} \binom{k}{l} \frac{ (1 - \eps)^{l}\eps^{k - l} }{d^{k - l}} \trace{\rho^{l}}
.\end{equation}
From the induction hypothesis, we have
\begin{equation}
    \trace{H_k \cdot {\rm id}_k \left(D_\eps (\rho)^{\ox k}\right) }
    = (1 - \eps)^k \left(\trace{\rho^k} + t_k \right) +  \sum_{l = 2}^{k  - 1} \binom{k}{l} (1 - \eps)^l\eps^{k - l} f_l \trace{ H_{l} / d^{k - l} \cdot \cC_{l} \left( D_\eps (\rho)^{\ox l} \right)}
,\end{equation}
    where the constant $t_k$ is constructed as
\begin{equation}
    t_k = \frac{1}{(1 - \eps)^k} \left[ \frac{\eps^k}{d^k} + \frac{k (1 - \eps) \eps^{k - 1}}{d^{k - 1}} 
    - \sum_{l = 2}^{k  - 1} \binom{k}{l} \frac{ (1 - \eps)^{l}\eps^{k - l} }{d^{k - l}} t_l \right]
.\end{equation}
    Further, from the construction of $\cR_l$ in Equation~\eqref{eq:recovery}, the trace quantity on the last equality can be expanded as
\begin{align}
     \trace{ H_{l} / d^{k - l} \cdot \cC_{l} \left( D_\eps (\rho)^{\ox l} \right)}
    &= \trace{\left(H_{l} \ox I_{k - l} / d^{k - l} \right) \cdot \left(\cC_{l} \ox {\rm id}_{k - l} \right) \left(D_\eps (\rho)^{\ox k}\right)} \\
    &= -\trace{ \cR_{l}(H_k) \cdot \left(\cC_{l} \ox {\rm id}_{k - l} \right) \left(D_\eps (\rho)^{\ox k}\right)} \\
    &= -\trace{ H_k \cdot \cR_{l}^\dag \circ \left(\cC_{l} \ox {\rm id}_{k - l} \right) \left(D_\eps (\rho)^{\ox k}\right)}
.\end{align}
    By moving the term around, we eventually get
\begin{align}
    \trace{\rho^k} &= \frac{1}{(1 - \eps)^k} \set{\trace{H_k \cdot {\rm id}_k \left(D_\eps (\rho)^{\ox k}\right) } + \sum_{l = 2}^{k  - 1} \binom{k}{l} (1 - \eps)^{l}\eps^{k - l} f_l \trace{ H_k \cdot \cR_{l}^\dag \circ \left(\cC_{l} \ox {\rm id}_{k - l} \right) \left(D_\eps (\rho)^{\ox k}\right)} } - t_k \\
    &= \frac{1}{(1 - \eps)^k} \trace{ H_k \cdot \left({\rm id}_k + \sum_{l = 2}^{k  - 1} \binom{k}{l} (1 - \eps)^{l}\eps^{k - l} f_l \cR_{l}^\dag \circ \left(\cC_{l} \ox {\rm id}_{k - l} \right)\right) \left(D_\eps (\rho)^{\ox k} \right)}  - t_k \\
    &= f_k \trace{ H_k \cdot \cC_k \left(D_\eps (\rho)^{\ox k} \right)}  - t_k
,\end{align}
    where $f_k = 1 / (1 - \eps)^k$
    and
\begin{equation}
    \cC_k = {\rm id}_k + \sum_{l = 2}^{k  - 1} \binom{k}{l} (1 - \eps)^{l}\eps^{k - l} f_l \cR_{l}^\dag \circ \left(\cC_{l} \ox {\rm id}_{k - l} \right)
.\end{equation}
The CP property of $\cC_k$ follows by the positiveness of $f_l$ and the CP properties of $\cC_l, \cR_l$.
\hfill $\blacksquare$}


\subsection{Proof for Proposition~\ref{prop:cptn permutation transfer}}~\label{appendix:proof perm cp transfer}

{This section presents the necessary lemma for proving the existence of physical protocol to retrieve generalized depolarizing noises. Particularly, we show that there exists CP maps that convert the permutation unitary $S_k$ to $\pm S_{k - 1}$ for $k > 2$, as stated in Proposition~\ref{prop:cptn permutation transfer}.
Denote $\o_k = \exp\left(2 \pi i / k \right)$ as the $k$-th unity root. Let $\combwr{k}{d}$ be the minimal set of length-$k$ combinations that generates $\set{0, \ldots, d - 1}^{\times k}$ under cyclic permutations, i.e.,
\begin{equation}
    \set{0, \ldots, d - 1}^{\times k} = \setcond{ x_{\pi^l(1)} \cdots x_{\pi^l(k)} }{ x_{1} \cdots x_{k} \in \combwr{k}{d}, l \in \NN }
.\end{equation}
For example, $\combwr{3}{2}$ can be non-uniquely constructed as $\set{000, 001, 011, 111}$. Note that $\abs{\combwr{k}{d}} = \frac{d^k - d}{k} + d$. One can then have the following result:}

\begin{lemma}[Spectral decomposition of $S_k$]
    {The cyclic permutation matrix can be decomposed as
\begin{equation}
    S_k = \sum_{m = 0}^{k - 1} \o_k^{-m} \sum_{\bm x \in \combwr{k}{d}} \ketbra{ \psi_{m, \bm x} }{ \psi_{m, \bm x} }
,\end{equation}
    for $\bm x = x_1 \cdots x_k$ and eigenstates defined as
\begin{equation}
    \ket{\psi_{m, \bm x}} = \frac{1}{\sqrt{k}} \sum_{l = 0}^{k - 1} \o_k^{ml} \bigotimes_{j = 1}^k \ket{x_{\pi^l(j)}}
.\end{equation}}
\end{lemma}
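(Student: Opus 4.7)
The plan is to read off the spectral decomposition by (i) identifying the eigenvalues of $S_k$ from the fact that it has finite order, (ii) explicitly constructing the eigenbasis on each orbit of the cyclic shift action via a discrete Fourier transform, and (iii) verifying orthogonality and completeness. The starting observation is that $\pi = (1,\ldots,k)$ has order $k$ in the symmetric group, and from the decomposition in Eq.~\eqref{eq:s_k decomp} one sees $S_k^k = I$. Since $S_k$ is unitary, its spectrum is contained in $\{\omega_k^{-m} : 0 \le m \le k-1\}$, matching the phases that appear in the claimed decomposition.

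Next, I would exploit the natural orbit structure. The cyclic group $\langle S_k\rangle$ acts on computational basis states $\bigotimes_j \ket{y_j}$ by cyclic shift of the tensor factors, and $\combwr{k}{d}$ picks one representative $\bm x$ per orbit. For each representative I form the Fourier modes
\begin{equation}
\ket{\psi_{m,\bm x}} \;=\; \frac{1}{\sqrt{k}} \sum_{l=0}^{k-1} \omega_k^{ml} \bigotimes_{j=1}^{k} \ket{x_{\pi^l(j)}},
\end{equation}
and then verify the eigenrelation by direct computation: applying $S_k$ shifts the inner index $l \mapsto l+1$, and re-indexing the sum together with $\omega_k^k = 1$ and $\pi^k = \mathrm{id}$ pulls out exactly a factor $\omega_k^{-m}$, so that $S_k \ket{\psi_{m,\bm x}} = \omega_k^{-m}\ket{\psi_{m,\bm x}}$. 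This step is routine but is where the choice of phase $\omega_k^{ml}$ (and the negative sign on the eigenvalue) has to be matched carefully.

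With the eigenrelation in hand, the remaining task is orthonormality and completeness. Orthogonality across different representatives $\bm x \neq \bm x'$ in $\combwr{k}{d}$ is immediate, since the underlying computational basis strings used on the two orbits are disjoint. Orthogonality across different $m$'s within a single orbit follows from the standard DFT identity $\sum_{l=0}^{k-1}\omega_k^{(m-m')l} = k\,\delta_{m,m'}$. Counting dimensions, the total number of nonzero $\ket{\psi_{m,\bm x}}$ matches $d^k$ using the cardinality $|\combwr{k}{d}| = (d^k - d)/k + d$, so the family is an orthonormal eigenbasis, and assembling the spectral projections with eigenvalues $\omega_k^{-m}$ yields the claimed formula for $S_k$.

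The main subtlety I anticipate is the treatment of orbits of size strictly less than $k$, the simplest being constant strings $\bm x = xx\cdots x$ (which have orbit size $1$) and, more generally, strings whose cyclic period divides $k$ properly when $k$ is composite. For these, many of the vectors $\ket{\psi_{m,\bm x}}$ collapse to $0$ while the surviving one is not unit-normalized, so one must either restrict the sum over $m$ to those indices compatible with the stabilizer or absorb the short-orbit normalizations into the representatives in $\combwr{k}{d}$. I would handle this by interpreting $\ketbra{\psi_{m,\bm x}}{\psi_{m,\bm x}}$ as the (possibly zero) rank-one operator built from the given vector and checking that the zero contributions simply drop out of the sum, while the surviving non-zero projectors still deliver the correct identity on the short-orbit subspace, so that the decomposition holds as stated.
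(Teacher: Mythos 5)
Your overall route is the same as the paper's: build the Fourier modes $\ket{\psi_{m,\bm x}}$ on each cyclic orbit, verify $S_k\ket{\psi_{m,\bm x}}=\o_k^{-m}\ket{\psi_{m,\bm x}}$ by re-indexing the sum, obtain orthogonality from disjointness of distinct orbits together with the DFT identity, and close with a dimension count using $\abs{\combwr{k}{d}}$. All of this matches the paper's proof, and your identification of short orbits as the delicate point is in fact sharper than the paper's treatment, which only mentions constant strings.

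The place where your argument would fail is the resolution you propose for those short orbits. For an orbit of cyclic period $p<k$ (period $p=1$ for a constant string, or e.g.\ $p=2$ for $\bm x=0101$ when $k=4$), the surviving vectors are those with $m$ a multiple of $k/p$, and each has squared norm $\braket{\psi_{m,\bm x}}{\psi_{m,\bm x}}=k/p$, not $1$: the double sum $\frac1k\sum_{l,l'}$ does not collapse to a single sum because $\prod_j\delta_{x_{\pi^l(j)}\,x_{\pi^{l'}(j)}}=1$ whenever $l\equiv l'\pmod p$. Hence the unnormalized outer product $\ketbra{\psi_{m,\bm x}}{\psi_{m,\bm x}}$ equals $\frac{k}{p}$ times the spectral projector onto that eigendirection, and the asserted sum over-counts the short-orbit subspace by exactly that factor. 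Concretely, $\ket{\psi_{0,x\cdots x}}=\sqrt{k}\,\ket{x}^{\ox k}$, so the $m=0$ term contributes $k\proj{x}^{\ox k}$ where $S_k$ contributes only $\proj{x}^{\ox k}$. So your final check that ``the surviving non-zero projectors still deliver the correct identity on the short-orbit subspace, so that the decomposition holds as stated'' is precisely the step that does not go through; you must renormalize (divide each surviving term by $k/p$, equivalently restrict $m$ to multiples of $k/p$ and normalize by $\sqrt{p}$ instead of $\sqrt{k}$). For what it is worth, the paper's own proof shares this blind spot: its orthonormality computation and its eigenstate count $d+k(\abs{\combwr{k}{d}}-d)=d^k$ are valid only when every non-constant orbit has length exactly $k$ (i.e.\ prime $k$), and even then the constant-string vectors are not unit vectors.
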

\begin{proof}
    {We first prove that $\ket{\psi_{m, \bm x}}$ is an eigenstate of $S_k$ with eigenvalue $\o_k^{m}$, then show that the state set forms a basis of $\cH_{d^k}$.
    Applying $S_k$ on $\ket{\psi_{m, \bm x}}$ gives
\begin{align}
    S_k \ket{\psi_{m, \bm x}} &= \left[ \sum_{\bm y} \bigotimes_{j} \ketbra{y_{\pi(j)}}{y_j} \right] \cdot \left[ \sum_l \o_k^{ml} \bigotimes_{j} \ket{x_{\pi^l(j)}} \right] \\
    &= \sum_{\bm y} \sum_l \o_k^{ml} \bigotimes_{j} \delta_{y_j x_{\pi^l(j)}} \ket{y_{\pi(j)}} \\
    &= \sum_{l = 0}^{k - 1} \o_k^{ml} \bigotimes_{j} \ket{x_{\pi^{l + 1}(j)}}
    = \sum_{l = 1}^k \o_k^{m(l - 1)} \bigotimes_{j} \ket{x_{\pi^l(j)}} \\
    &= \o_k^{-m} \ket{\psi_{m, \bm x}}
.\end{align}
Note that for fixed point $\bm x$ such that $x_j = x_{\pi(j)}$ for all $j$, $\ket{\psi_{m, \bm x}} \neq 0$ if and only if $m = 0$. Then the total number of such eigenstates are
\begin{equation}
    d + k \cdot \left( \abs{\combwr{k}{d}} - d \right) = d^k
.\end{equation}}
    {As for the orthogonality, for $0 \leq m, m' < k$, $\bm x, \bm x' \in \combwr{k}{d}$,
\begin{align}
    \braket{ \psi_{m, \bm x} }{ \psi_{m', \bm x'} }
    &= \frac{1}{k} \sum_{l, l'} \o_k^{-ml} \o_k^{m'l'} \prod_{j = 1}^k \braket{ x_{\pi^l(j)} }{ x'_{\pi^{l'}(j)} } \\
    &= \frac{1}{k} \sum_{l, l'} \o_k^{-ml} \o_k^{m'l'} \left( \prod_{j = 1}^k \delta_{x_j x'_{\pi^{(l' - l)}(j)}} \right) \\
    &= \left(\frac{1}{k} \sum_l \o_k^{l(m' - m)} \right) \delta_{{\bm x} {\bm x'}}
    = \delta_{mm'} \delta_{{\bm x} {\bm x'}}
,\end{align}
    where $\bm x \neq \bm x'$ if and only if there does not exists $l$ such that $x_j \neq x'_{\pi^l(j)}$ for all $j$.}
\end{proof}

\vspace{0.5em}
\noindent \textbf{Proof of Proposition~\ref{prop:cptn permutation transfer}}\,\,
    One can observe that for $k > 2$, there exists non-negative matrices $Q, \widetilde{Q} \in \RR_+^{(k-1) \times k}$ such that for all $0 \leq l < k - 2$,
\begin{equation}~\label{eqn:q condition}
    \sum_{m = 0}^{k - 1}  Q_{lm}\o_k^{m}  = \o_{k-1}^l, \quad 
    \sum_{m = 0}^{k - 1} \widetilde{Q}_{lm} \o_k^{m} = -\o_{k-1}^l
.\end{equation}
    Such $Q, \widetilde{Q}$ can construct the desired transformations in Proposition~\ref{prop:cptn permutation transfer}. Using the package CVX~\cite{cvx, gb08}, Equation~\eqref{eqn:q condition} can be numerically verified for $100 \geq k$~\footnote{For the data and verification files, see the GitHub repository.}. For a general $k > 2$, $Q$ and $\widetilde{Q}$ can be explicitly constructed as
\begin{equation}
    {Q}_{\textrm{odd } k} = \begin{pmatrix}
        1 & {} & {} & {} & {} & {} & {} & {} \\
        {} & {x}_1 & {y}_1 & {} & {} & {} & {} & {} \\
        {} & {} & {\ddots} & {\ddots} & {} & {} & {} & {} \\
        {} & {} & {} & {x}_m &{y}_m & {} & {} & {}  \\
        {} & {} & {} & {} & {y}_m & {x}_m & {} & {} \\
        {} & {} & {} & {} & {} & {\ddots} & {\ddots} & {} \\
        {} & {} & {} & {} & {} & {} & {y}_1 & {x}_1
    \end{pmatrix}, \quad
    {Q}_{\textrm{even } k} = \begin{pmatrix}
        1 & {} & {} & {} & {} & {} & {} & {} & {} \\
        {} & {x}_1 & {y}_1 & {} & {} & {} & {} & {} & {} \\
        {} & {} & {\ddots} & {\ddots} & {} & {} & {} & {} & {} \\
        {} & {} & {} &{x}_{m - 1} & {y}_{m - 1} & {} & {} & {} & {}  \\
        {} & {} & {} & {} & {x}_{m} & {x}_{m} & {} & {} & {} \\
        {} & {} & {} & {} & {} & {y}_{m - 1} & {x}_{m - 1} & {} & {} \\
        {} & {} & {} & {} & {} & {} & {\ddots} & {\ddots} & {} \\
        {} & {} & {} & {} & {} & {} & {} & {y}_1 & {x}_1
    ,\end{pmatrix}
\end{equation}
\begin{equation}
    \widetilde{Q}_{\textrm{odd } k} =(X\ox I_{(k-1)/2})\cdot{Q}_{\textrm{odd } k}, \quad
    \widetilde{Q}_{\textrm{even } k} ={Q}_{\textrm{even } k}\cdot(X\ox I_{k/2})
\end{equation}
    dependent on the parity of $k$, where $X = \left(\begin{smallmatrix}
        0 & 1 \\ 1 & 0
    \end{smallmatrix} \right)$ is the Pauli-$X$ gate, $m = \lfloor k / 2 \rfloor$ and constants $x_l, y_l$ are given as
\begin{align}
    x_l &= \csc \frac{2 \pi}{k} \,\sin \frac{2 (k - 1 - l) \pi}{k(k-1)}, \quad
    y_l = \csc \frac{2 \pi}{k} \,\sin \frac{2 l \pi}{k(k-1)}
.\end{align}
    The fact that $k > 2$ ensures the angles exist in $\csc$ are no greater than $\frac{2 \pi}{3}$, and those exist in $\sin$ are no greater than $\frac{\pi}{2}$, and hence these constants are always non-negative.
    
    Now we use $Q, \widetilde{Q}$ to construct the maps.
    {To be specific, denote the eigenspace with respect to $\o_k^{m}$ as $\Pi^{(m)}_k$, i.e., $\Pi^{(m)}_k = \sum_{\bm x \in \combwr{k}{d}} \ketbra{ \psi_{-m, \bm x} }{ \psi_{-m, \bm x} }_k$.
    Consider the map $\cT_k, \widetilde{\cT}_k$ satisfying 
\begin{equation}
    \cT_k \left( \Pi^{(m)}_k \right) = \sum_l Q_{lm} \Pi^{(l)}_{k - 1} \ox I / d, \quad 
    \widetilde{\cT}_k \left( \Pi^{(m)}_k \right) = \sum_l \widetilde{Q}_{lm} \Pi^{(l)}_{k - 1} \ox I / d
\end{equation}
    Then it follows that
\begin{align}
    \cT_k\left( S_k \right) 
    &= \sum_{m = 0}^{k - 1} \o_k^m \cT_k(\Pi^{(m)}_k) = \sum_{m = 0}^{k - 1} \o_k^m \sum_{l = 0}^{k - 2} Q_{lm} \Pi^{(l)}_{k - 1} \ox I / d \\
    &= \sum_l \left(\sum_m  Q_{lm}\o_k^m  \right) \Pi^{(l)}_{k - 1} \ox I / d = \sum_l \o_{k-1}^l \Pi^{(l)}_{k - 1} = S_{k - 1} \ox I / d
,\end{align}
and analogously 
\begin{align}
    \widetilde{\cT}_k \left( S_k \right) 
    & = - S_{k - 1} \ox I / d
.\end{align}
    Since $\set{\Pi^{(m)}_k}_{m = 1}^k$ is an orthogonal set and $Q$ is non-negative, $\cT_k$ can be completely positive and hence $\cT_k\left( S_k^\dag \right) = \cT_k\left( S_k \right)^\dag$. We conclude that
\begin{equation}
    \cT_k\left( H_k \right) = \frac{1}{2} \left[ \cT_k\left( S_k \right) + \cT_k\left( S_k^\dag \right)\right] = \frac{1}{2} \left[ S_{k - 1} + S_{k - 1}^\dag \right] \ox I / d = H_{k - 1} \ox I / d
.\end{equation}
    Similar statement holds for $\widetilde{\cT}_k$.
\hfill$\blacksquare$}

\end{document}